\newif\ifdraft\draftfalse
\newcommand{\Cc}{\mathcal{C}}
\newcommand{\Ee}{\mathcal{E}}
\newcommand{\Ff}{\mathcal{F}}
\newcommand{\Jj}{\mathcal{J}}
\newcommand{\Ll}{\mathcal{L}}
\newcommand{\Oo}{\mathcal{O}}
\newcommand{\Pp}{\mathcal{P}}
\newcommand{\Nat}{\mathbb{N}}
\newcommand{\set}[1]{\{#1\}}
\newcommand{\class}[1]{[#1]}
\newcommand{\quotient}[1]{[#1]}
\newcommand{\eats}[1]{\mathord{|#1\rangle}}
\newcommand{\back}{\mathit{back}}
\newcommand{\Act}{\mathbb{A}}
\renewcommand{\exp}{\mathsf{exp}}
\newcommand{\RegStr}{\textsc{RegStr}}
\newcommand{\dist}{\textsc{Dist}}
\newcommand{\sgrow}{\textsc{Stack-Growth}}
\newcommand{\col}{\textsc{Color}}
\newcommand{\dplus}{d^{+\!}}
\newcommand{\Nabla}[1]{\nabla_{\!#1}}
\renewcommand{\epsilon}{\varepsilon}
\renewcommand{\rho}{\varrho}
\newcommand{\pp}[1]{\todo[inline,size=\scriptsize,backgroundcolor=ACMLightBlue]{#1 - \textbf{Paweł}}}
\newcommand\sg[1]{\todo[inline,size=\scriptsize,backgroundcolor=yellow]{#1 - \textbf{Stefan}}}
\newcommand{\ppchanged}[1]{{\color{cyan}{#1}}}
\newcommand\pp[1]{}
\newcommand\sg[1]{}
\newcommand\ppchanged[1]{#1}
\newcommand{\noQED}{\renewcommand\qedsymbol{}}
\newcommand{\restoreQED}{\renewcommand\qedsymbol{$\square$}}
\newcommand{\restore}[1]{\noQED#1*\restoreQED}
\begin{document}

\theoremstyle{acmplain}
\newtheorem{observation}[theorem]{Observation}
\theoremstyle{acmdefinition}
\newtheorem*{remark*}{Remark}

\title{Bisimulation Finiteness of Pushdown Systems Is Elementary}

\author{Stefan G\"oller}
\affiliation{%
  \institution{University of Kassel}
  \department{School of Electrical Engineering and Computer Science}
  \city{Kassel}
  \country{Germany}}
\email{stefan.goeller@uni-kassel.de}

\author{Paweł Parys}
\orcid{0000-0001-7247-1408}
\affiliation{%
  \institution{University of Warsaw}
  \department{Institute of Informatics}
  \city{Warsaw}
  \country{Poland}}
\email{parys@mimuw.edu.pl}

\begin{abstract}
	We show that in case
	a pushdown system is bisimulation equivalent to a finite system,
	there is already a bisimulation equivalent finite system whose size is elementarily bounded in the description size of the pushdown system.
	As a consequence we obtain that it is elementarily decidable if a given pushdown system is bisimulation equivalent to some finite system.
	This improves a previously best-known \textsf{ACKERMANN} upper bound for this problem.
\end{abstract}

\keywords{Bisimulation equivalence, pushdown automata, bisimulation finiteness, elementary}


\begin{CCSXML}
<ccs2012>
<concept>
<concept_id>10003752.10003790.10002990</concept_id>
<concept_desc>Theory of computation~Logic and verification</concept_desc>
<concept_significance>500</concept_significance>
</concept>
<concept>
<concept_id>10003752.10003766.10003771</concept_id>
<concept_desc>Theory of computation~Grammars and context-free languages</concept_desc>
<concept_significance>500</concept_significance>
</concept>
</ccs2012>
\end{CCSXML}

\ccsdesc[500]{Theory of computation~Logic and verification}
\ccsdesc[500]{Theory of computation~Grammars and context-free languages}

\maketitle

\section{Introduction}

\paragraph{General background.}

The {\em class membership problem} for a language class $\mathcal{C}$ and a subclass $\mathcal{C}'\subseteq\mathcal{C}$
asks, given some device (like an automaton or a grammar) accepting a language in $\mathcal{C}$
to decide if the language that is described by the device is even a member
of $\mathcal{C}'$.
A prominent and particular such class membership problem is the {\em regularity problem},
i.e. the class
$\mathcal{C}'$ is the class of regular languages.
It is well-known that the regularity problem for (the class $\mathcal{C}$ of)
context-free languages is undecidable, see Hunt~\cite{Hunt82} for a general approach.
For deterministic context-free languages Stearns~\cite{Stearns67} showed a triply exponential upper bound
that was shortly thereafter improved to a doubly exponential upper bound by Valiant~\cite{Valiant75}.
It is fair to mention that the computational complexity of the regularity problem for deterministic context-free languages
is not yet well understood to date: to the best of the authors' knowledge the problem lies is $2$-$\mathsf{EXPTIME}$ (deterministic double exponential time) and is hard for $\mathsf{P}$,
a complexity gap that has been prevailing for around 45 years!
Similar large complexity gaps exist for various class membership problems for visibly
pushdown languages~\cite{BLS06}.

In the context of formal verification and concurrency theory the central notion of
equivalence is bisimulation equivalence
\cite{vanGlabbeek90},
which refines trace/language equivalence.
Indeed, the bisimulation-invariant fragment of monadic second-or\-der logic and various sublogics thereof have elegant
characterizations in
terms of prominent temporal logics \cite{JW96,ESV15,JvB1976,MR03,Carreiro15}.
In the context of verification and concurrency the regularity problem thus becomes the
{\em bisimulation finiteness problem},
which asks whether
the (finitely presented) input infinite transition system is
bisimulation equivalent
({\em bisimilar} for short) to some
finite-state system.
Decidability and complexity results for the bisimulation finiteness problem are known for only a few classes of infinite-state systems, see Srba~\cite{Srba04} for an overview.
Moreover, in case when decidability is known oftentimes large complexity gaps remain, the class of one-counter
systems being an exception, for which the problem is
$\mathsf{P}$-complete~\cite{BGJ14}.

\paragraph{Bisimulation finiteness of pushdown systems.}

Only recently Jan\v{c}ar~\cite{Jancar-arxiv,Jancar16,JANCAR2019} proved decidability of the bisimulation finiteness problem for pushdown systems,
a central class of infinite-state systems with a decidable monadic second-order theory~\cite{MS85}
and for which formal verification tools have been
developed~\cite{Moped,JMoped,TMP09,RLK07,RSJM05}.
It is worth mentioning that only a slight modification of the bisimulation finiteness problem, namely the question if there exists a
reachable configuration that is bisimulation-finite, becomes undecidable for order-two pushdown systems~\cite{BG12}.
Jan\v{c}ar's decidability result even holds for pushdown systems with deterministic $\varepsilon$-popping
rules~\cite{Jancar16,JANCAR2019}, a class
of graphs/systems also known as equational graphs \cite{DBLP:books/el/leeuwen90/Courcelle90} of finite out-degree.
Semi-decidability of the problem has long been known by simply enumerating all finite systems and checking whether
some of these finite systems is bisimilar to the input pushdown system.
As a side remark, it is worth mentioning that the bisimulation finiteness problem should not be confused with the question if a
given pushdown is bisimilar to a given finite system, the latter problem being $\mathsf{PSPACE}$-complete as shown
by Ku\v{c}era and Mayr~\cite{KM10}.
The difficult component of Jan\v{c}ar's decidability proof is to establish semi-decidability of the complement,
that is, to provide a semi-decision procedure that
halts in case a given pushdown system
is not bisimilar to a finite system.
A central ingredient to this semi-decision procedure
 is an oracle call to test the equivalence
of pushdown systems, the latter itself being an intricate problem whose decidability has been proven by S{\'e}nizergues
\cite{Senizergues05}. Only recently an Ackermannian upper bound for bisimilarity of
pushdown system has been proven by Jan\v{c}ar and Schmitz~\cite{JS19};
\textsf{ACKERMANN}-hardness is only known to hold in the presence of deterministic $\varepsilon$-popping rules \cite{Jancar13},
 whereas without $\varepsilon$-rules the problem is nonelementary~\cite{BGKM13}.
Coming back to bisimulation finiteness of pushdown systems, the oracle calls to a bisimulation equivalence check for pushdown systems is the inherent bottleneck of Jan\v{c}ar's approach.
The approach therefore contains a nonelementary complexity bottleneck and only guarantees
an \textsf{ACKERMANN} upper bound.
This stands in stark contrast to the best-known lower bound for bisimulation finiteness
of pushdown systems, namely $\mathsf{EXPTIME}$-hardness~\cite{KM02,Srba02}.

\paragraph{Our contribution.}

In this paper we prove that in case a pushdown system is bisimulation-finite
there is already a finite-state system whose size only elementarily depends
on the description size of the pushdown system, or equivalently,
a bisimulation-finite pushdown system only contains an elementary number
of bisimulation classes.
Using the above-mentioned polynomial space procedure for checking if a pushdown
system is bisimilar to a given finite-state system this implies an elementary decision
procedure for the bisimulation finiteness problem for pushdown systems.
Our approach avoids oracle calls for the equivalence problem for pushdown systems.
We follow a general proof strategy by Jan\v{c}ar in the aspect that we compare
configurations of the form $q\alpha\beta^e\gamma$ for large powers $e$ against
their infinite
approximants $q\alpha\beta^\omega$.
The core of our
proof (Section \ref{sec:core}) is to establish
the impossibility of a situation
that certain configurations
of the form $q\alpha\beta^e\gamma$ for a sufficiently large (elementary) $e$ are bisimilar
to configurations
of the form $q\alpha\beta^\omega$ unless the system is
bisimulation-finite.

\paragraph{\ppchanged{Future research questions.}}

Our contribution leads to further research questions.
It seems worth investigating if our approach can be
applied to the bisimulation finiteness problem for
further classes of
finitely-branching infinite-state systems that enjoy a pumping lemma property,
and for which neither the bisimulation
finiteness problem nor the bisimulation equivalence problem are known to be decidable
(or the complexity gap is extremely large),
for instance for ground tree rewrite systems~\cite{Loeding2003},
PAD and PA processes~\cite{Mayr98},
higher-order pushdown systems,
and of course for the aforementioned extension of pushdown systems with
deterministic $\varepsilon$-popping rules.
For BPA processes
(which are nothing but pushdown systems over a singleton set of control
states)
it seems interesting to find out if our
approach can be used to close an exponential complexity gap for bisimulation finiteness,
see Srba~\cite{Srba04} for an overview.
Finally, it seems worth investigating if
our technique
may lead to a potential future line of attack for
the equivalence problem of (deterministic)
pushdown systems.

\section{Preliminaries}\label{sec:preliminaries}

For any alphabet $A$ we denote by $A^*$ (resp. $A^+$) the set
of finite words (resp. the set of non-empty finite words) over $A$.
By $\Nat=\set{0,1,2,\ldots}$ we denote the set of the non-negative integers.
For any set $U\subseteq\Nat$ we denote by
$\min U$ the minimal element of the set $U$ with the convention that
$\min \emptyset=\omega$.
For $i,j\in\Nat$ we define $[i,j]=\set{k\in\Nat\mid i\leq k\leq j}$.

Let $f\colon\Nat^k\rightarrow\Nat$ be a function.
We say that $f$ is {\em elementary}
if $f$ is obtained by a composition of the following functions:
constant functions, projection, addition, multiplication and exponentiation.
For instance $(m,n)\mapsto2^{m^{n^2}}$ is an elementary function.
We write $f(\vec{n})\leq\exp(\vec{n})$ if there exists a
polynomial $p\colon\Nat^k\rightarrow\Nat$ such that $f(\vec{n})\leq 2^{p(\vec{n})}$
for all $\vec{n}\in\Nat^k$.
All functions $f\colon\Nat^k\rightarrow\Nat$ in this paper are elementarily upper bounded,
that is, there exists an elementary function $g\colon\Nat^k\rightarrow\Nat$ such
that $f(\vec{n})\leq g(\vec{n})$ for all $\vec{n}\in\Nat^k$.

A \emph{labeled transition system (LTS)} is a tuple
$\Ll=(S,\Act,\allowbreak(\rightarrow_a\nobreak)_{a\in\Act})$,
where $S$ is a (possibly infinite) set of {\em states},
$\Act$ is a finite set of {\em action symbols},
$(\rightarrow_a)\subseteq S\times S$ is a binary relation for
all $a\in\Act$.
We say $\Ll$ is {\em finite} if $S$ is finite.
We define its {\em size} as $|\Ll|=|S|$, thus $|\Ll|\in\Nat$ if $\Ll$ is finite
and $|\Ll|=\omega$ if not.
We define the binary relation $(\rightarrow)=\bigcup_{a\in\Act}(\rightarrow_a$).
For all $s,t\in S$ we define
$\dist(s,t)=\min\set{m\in\Nat\mid s\rightarrow^m t}\in\Nat\cup\set{\omega}$,
the length of the shortest path from $s$ to $t$ in $\Ll$.

For such an LTS $\Ll$ we say a binary relation $R\subseteq S\times S$ is a {\em bisimulation}
if the following back-and-forth property holds
for all $a\in\Act$ and all $(s,t)\in R$:
for all $s\rightarrow_a s'$ there exists some $t\rightarrow_a t'$ such that
$(s',t')\in R$ and, conversely, for all
$t\rightarrow_a t'$ there exists some $s\rightarrow_a s'$ such that
$(s',t')\in R$.
Observe that the union of two bisimulations is again a bisimulation.
We write $s\sim t$ if $(s,t)\in R$ for some bisimulation relation $R$;
note that $(\sim)\subseteq S\times S$ is the largest bisimulation on $S$.
If $s\sim t$, we say that $s$ and $t$ are \emph{bisimilar}.
For every state $s\in S$ we denote by $\class{s}=\set{t\in S\mid s\sim t}$
the {\em bisimulation class of $s$}.
The {\em bisimulation quotient} $\quotient{\Ll}$ is the LTS
$\quotient{\Ll}=(\set{\class{s}\mid s\in S},\Act,(\rightarrow_a')_{a\in\Act})$,
where $c\rightarrow_a' d$ if $s\rightarrow_a t$ for some $s\in c$, $t\in d$.

A {\em pushdown system (PDS)} is a tuple $\Pp=(Q,\Gamma,\Act,\Delta)$,
where $Q$ is a finite set of {\em control states},
$\Gamma$ is a finite set of {\em stack symbols},
$\Act$ is a finite set of {\em action symbols},
and $\Delta\subseteq Q\times\Gamma\times\Act\times Q\times\Gamma^*$ is a
finite transition relation.
In this paper, elements of $\Gamma$ are typically denoted in capital letters,
such as $X,Y,Z\ldots$, whereas elements of $\Gamma^*$ (i.e.,
finite sequences over $\Gamma$) are typically denoted by small Greek
letters such as $\alpha,\beta,\gamma,\ldots$.
If $|\alpha|\leq 2$ for all transitions $(p,X,a,q,\alpha)\in\Delta$,
then we say that $\Pp$ is in a
{\em push-pop normal form}.
The {\em size} $|\Pp|$ of $\Pp$ is defined as $|\Pp|=|Q|+|\Gamma|+|\Act|+|\Delta|$.

Elements of $\Gamma^*$ are {\em stack contents} and elements of $Q\times\Gamma^*$ are {\em configurations} of $\Pp$.
For every $\delta=(p,X,a,q,\beta)\in\Delta$ we define the binary relation
$(\xrightarrow{\delta})=\set{(p X\alpha,q\beta\alpha)\mid \alpha\in\Gamma^*}$.
The relation $\xrightarrow{\rho}$ is naturally extended to
finite sequences $\rho\in\Delta^*$.
A {\em run} from a source configuration $p\alpha$ to a target
configuration $q\beta$
is a sequence $\rho\in\Delta^*$ such that
$p\alpha\xrightarrow{\rho}q\beta$.
Note that a run can be a run for numerous pairs of source and target
configurations.
When $\rho=\delta_1\cdots\delta_n\in\Delta^n$ and $0\leq i\leq j\leq n$, we write $\rho[i,j]$ for the \emph{subrun} $\delta_{i+1}\dots\delta_j$.
Slightly abusing notation, when the starting configuration
$p\alpha$ of such a run is fixed
from the context
we sometimes prefer to write $\rho(i)$ to denote the unique configuration
$q_i\beta_i$ that satisfies $p\alpha\xrightarrow{\rho[0,i]}q_i\beta_i$; in particular $\rho(0)=p\alpha$.

A PDS $\Pp=(Q,\Gamma,\Act,\Delta)$ together with an \emph{initial configuration} $s_0\in Q\times\Gamma^*$ induces an infinite
LTS $\Ll(\Pp,s_0)=(S,\Act,\allowbreak(\rightarrow_a\nobreak)_{a\in\Act})$,
where $S=\{q\alpha\in Q\times\Gamma\mid s_0\rightarrow^* q\alpha\}$ is the set of
configurations reachable from $s_0$,
and $(\rightarrow_a)=\set{(p X\alpha,q\beta\alpha)\in S\times S\mid
(p,X,a,q,\beta)\in\nobreak\Delta}$ for all $a\in\Act$.

The set $\RegStr(\Gamma)$ contains all elements of $\Gamma^*$, and all infinite strings of the form $\alpha\beta\beta\beta\dots$,
denoted $\alpha\beta^\omega$ (and called \emph{$\omega$-approximants}), where $\alpha\in\Gamma^*$ and $\beta\in\Gamma^+$.
As in Jančar~\cite{Jancar-arxiv,JANCAR2019}, we sometimes prefer to extend the
stack content of configurations of pushdown systems from $\Gamma^*$
to $\RegStr(\Gamma)$.
All of the above notions are analogously defined for
pushdown systems whose configurations are in $\RegStr(\Gamma)$.

We are now ready to state the main result of this paper.
\begin{restatable}{theorem}{ThmMain}\label{thm:main}
	There is an elementary function $\varphi\colon\Nat^2\to\Nat$ such that if $\quotient{\Ll(\Pp,p_0\alpha_0)}$ is finite for some initial configuration $p_0\alpha_0$,
	then $|\quotient{\Ll(\Pp,p_0\alpha_0)}|\leq\varphi(|\Pp|,|\alpha_0|)$.\qed
\end{restatable}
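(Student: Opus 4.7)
First I would put $\Pp$ into push-pop normal form at the cost of only a polynomial blow-up in size, so that every transition alters the stack height by at most one and runs carry well-defined height profiles. In parallel I would extend the picture to $\omega$-approximants $q\alpha\beta^\omega\in Q\times\RegStr(\Gamma)$ and prove the auxiliary fact that, for $|\alpha|$ and $|\beta|$ bounded by an elementary function of $|\Pp|$, the number of bisimilarity classes among such approximants is itself elementary in $|\Pp|$; this should follow because the behaviour of an $\omega$-approximant is determined by a finite summary on the periodic part $\beta$ together with the bounded prefix $\alpha$ and the control state $q$.

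Next I would argue by contradiction on the quantitative form of the statement. Assume $\quotient{\Ll(\Pp,p_0\alpha_0)}$ is finite but of size strictly greater than some elementary threshold $\varphi(|\Pp|,|\alpha_0|)$. A pigeonhole argument on the stack profiles of reachable runs, combined with the assumption that many bisimulation classes exist, would produce a reachable configuration $s=q\alpha\beta^e\gamma$ of the special shape alluded to in the introduction, with $e$ a large but elementary function of $|\Pp|$ and with short factors $\alpha,\beta,\gamma$. Since the quotient is assumed finite, the family $(q\alpha\beta^i\gamma)_{i\in\Nat}$ must exhibit a repetition of bisimulation classes, and a standard chain argument in the spirit of Jančar would lift such a repetition to the infinitary identity $q\alpha\beta^e\gamma\sim q\alpha\beta^\omega$.

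At this point I would invoke the core lemma promised in the introduction, which asserts precisely that the identity $q\alpha\beta^e\gamma\sim q\alpha\beta^\omega$ for such ``certain'' configurations at elementary $e$ cannot occur unless the system is bisimulation-finite with an elementarily-bounded quotient. Combining the core lemma with the pigeonhole reduction above would then yield the desired elementary function $\varphi$ and conclude the proof.

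The main obstacle will be this core lemma itself. Starting from a single bisimilarity $q\alpha\beta^e\gamma\sim q\alpha\beta^\omega$ for elementary $e$, one must construct, through a careful bisimulation-game analysis, an explicit collapse showing that all sufficiently tall reachable configurations are bisimilar to short representatives drawn from an elementarily-sized set. The delicate part will be tracking stack growth along bisimulating runs, classifying intermediate configurations by their finitely-many bisimulation-relevant roles, and keeping every intermediate quantitative bound elementary in $|\Pp|$.
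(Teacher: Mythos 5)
There is a genuine gap, and it sits exactly where you park the difficulty. Your route to the identity $q\alpha\beta^e\gamma\sim q\alpha\beta^\omega$ is a pigeonhole on the classes of the pumped family $(q\alpha\beta^i\gamma)_{i\in\Nat}$: a repetition of (tuples of) classes does lift to the $\omega$-approximant (this is the paper's Corollary~\ref{cor:repetition-is-omega}), but the index at which the repetition occurs is only bounded by the number of bisimulation classes of the system --- which is precisely the quantity Theorem~\ref{thm:main} asks you to bound. So this step is circular: it gives \emph{some} $e$, not an elementary one. The entire content of the paper's Theorem~\ref{thm:technical} (Sections~\ref{sec:overview}--\ref{sec:core}) is the non-circular replacement: assuming only that the pumped family has finitely many classes, the \emph{minimal} $e$ with $q\alpha\beta^e\gamma\sim q\alpha\beta^\omega$ is at most $h(|\Pp|,|\alpha|,|\beta|)$, proved via $\beta^\omega$-avoiding digging sequences, the ``runs from $q\alpha\beta^\omega$ must push'' analysis, and Lemma~\ref{lem:equiv-because-loop}. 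Your ``core lemma'' is moreover stated in the wrong direction: you ask that the identity at elementary $e$ force an elementarily bounded quotient, but a single bisimilarity between one configuration and one $\omega$-approximant cannot control the global quotient size, and no such statement appears in (or follows easily from) the paper. A smaller but related slip: pigeonhole on a tall reachable stack does not produce a literal power $\beta^e$ with short $\gamma$; Lemma~\ref{lem:decompose} yields $q\alpha\beta\gamma$ with $|\alpha|,|\beta|\leq\Ff$ but $\gamma$ arbitrary, and the large powers enter only through the pumped family, all of whose members are reachable.

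Even granting the correct technical theorem, your plan is missing the counting machinery that converts the elementary threshold $e$ into an elementary bound on the number of classes. In the paper this is Lemma~\ref{lem:4}: from $q\alpha\beta^e\gamma\sim q\alpha\beta^\omega$ with $e$ elementary, every tail $r\gamma$ with $r\in\eats{\alpha\beta}(q)$ is bisimilar to a configuration reachable from the \emph{fixed} state $q\alpha\beta^\omega$ within an elementary number of steps (Lemma~\ref{lem:PDS-paths}); bounded out-degree then caps the number of possible tuples $(\class{r\gamma})_{r\in\eats{\alpha\beta}(q)}$, and such a tuple determines $\class{q\alpha\beta\gamma}$ by Lemma~\ref{lem:equiv-because-equiv-below}. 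Note the representatives here are configurations with stack contents in $\RegStr(\Gamma)$ reachable from $\omega$-approximants, not short finite-stack configurations as in your hoped-for ``collapse''; also, your auxiliary fact about classes of $\omega$-approximants is trivially true (there are only elementarily many such approximants with bounded $\alpha,\beta$) and is not the load-bearing ingredient. Finally, the top-level argument is not by contradiction on $|\quotient{\Ll(\Pp,p_0\alpha_0)}|$: one simply partitions the reachable configurations into the small ones and the sets $S_{q,\alpha,\beta}$ given by Lemma~\ref{lem:decompose} and adds up the elementary bounds.
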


The following corollary is an immediate consequence of Theorem~\ref{thm:main}
and a result by Ku\v{c}era and Mayr \cite{KM10},
namely that checking whether
a given pushdown system is
bisimilar to a given finite system is $\mathsf{PSPACE}$-complete.

\begin{corollary}
	Given a PDS $\Pp$ and a configuration $p_0\alpha_0$ of $\Pp$, the question whether
	$\quotient{\Ll(\Pp,p_0\alpha_0)}$ is finite is elementarily decidable.
\qed\end{corollary}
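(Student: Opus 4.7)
The plan is to combine Theorem~\ref{thm:main} with the Ku\v{c}era--Mayr $\mathsf{PSPACE}$ algorithm in a guess-and-check fashion. On input $(\Pp, p_0\alpha_0)$ I would first compute the elementary bound $N = \varphi(|\Pp|, |\alpha_0|)$ supplied by Theorem~\ref{thm:main}. Then I would enumerate every pointed finite LTS $(\Ff, f_0)$ over the action alphabet $\Act$ of $\Pp$ whose state set has size at most $N$ (say, with state set $[1,n]$ for some $n\leq N$). For each candidate I would invoke the Ku\v{c}era--Mayr procedure to decide whether $\Ll(\Pp, p_0\alpha_0)$ is bisimilar to $(\Ff, f_0)$, and accept iff at least one candidate matches.

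For correctness, suppose $\quotient{\Ll(\Pp, p_0\alpha_0)}$ is finite. By Theorem~\ref{thm:main} it has size at most $N$, and by the very definition of the bisimulation quotient the pointed LTS $(\quotient{\Ll(\Pp, p_0\alpha_0)}, [p_0\alpha_0])$ is bisimilar to $(\Ll(\Pp, p_0\alpha_0), p_0\alpha_0)$; an isomorphic copy of it therefore occurs in the enumeration, so the procedure accepts. Conversely, if any candidate $(\Ff, f_0)$ of size at most $N$ is bisimilar to $\Ll(\Pp, p_0\alpha_0)$, then the reachable parts of the two systems induce isomorphic bisimulation quotients, hence $\quotient{\Ll(\Pp, p_0\alpha_0)}$ is finite of size at most $|\Ff|\leq N$. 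So spurious acceptance is impossible.

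For the complexity analysis, the number of pointed LTSs to enumerate is bounded by $N\cdot 2^{|\Act|\cdot N^2}$, which is elementary in $|\Pp|+|\alpha_0|$ since $N$ is. Each call to the Ku\v{c}era--Mayr algorithm runs in space polynomial in $|\Pp|+|\alpha_0|+|\Ff|$, hence in elementarily bounded space and therefore in elementary time. The overall procedure is then elementary. The substance of the work has already been discharged by Theorem~\ref{thm:main}; the only mildly delicate points are checking that the enumeration is effective (routine, since finite LTSs over $\Act$ with a fixed state set are finite combinatorial objects) and observing the equivalence ``$\quotient{\Ll(\Pp, p_0\alpha_0)}$ finite'' $\Longleftrightarrow$ ``$\Ll(\Pp, p_0\alpha_0)$ bisimilar to some pointed finite LTS of size at most $N$'', which is exactly what makes the reduction to the Ku\v{c}era--Mayr problem work.
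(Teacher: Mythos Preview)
Your proposal is correct and follows precisely the approach the paper indicates: combine the elementary bound of Theorem~\ref{thm:main} with the Ku\v{c}era--Mayr $\mathsf{PSPACE}$ procedure by enumerating all candidate finite systems up to that bound. The paper itself gives no further detail beyond the sentence preceding the corollary, so your write-up is simply a fleshed-out version of the intended argument (one cosmetic remark: you reuse the symbol $\Ff$ for the candidate finite LTS, which in the paper already denotes the constant from Lemma~\ref{lem:decompose}).
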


\paragraph{Convention.}

Since we can replace every transition pushing
a sequence onto the stack by a transition that pushes a suitable fresh symbol that represents such a
sequence,
it is clear that for every PDS $\Pp$
one can compute
in polynomial time a bisimilar PDS $\Pp'$ that
is in a push-pop normal form
(we refer to Lemma~\ref{lem:arbitrary2PushPop}
in the Appendix for more details).
Thus, towards proving Theorem~\ref{thm:main} we fix, for the rest of this paper, a PDS
$\Pp=(Q,\Gamma,\Act,\Delta)$ that is in a push-pop normal form.

\section{Some basics on pushdown systems}\label{sec:basics}

In this section we present some definitions and known facts about pushdown systems, useful in our proofs.
Proofs of these facts can be found in the Appendix.

For a stack content $\alpha\in\Gamma^*$, let $\eats{\alpha}$ be a function that maps every set of control states $T$ to the set $\set{r\in\nobreak Q\mid\allowbreak \exists q\in T.\,q\alpha\to^*r}$.
Observe that $\eats{\beta}(\eats{\alpha}(T))=\eats{\alpha\beta}(T)$
for all $\alpha,\beta\in\Gamma^*$ and all $T\subseteq Q$.
The following lemma is a direct consequence of the definition.

\begin{restatable}{lemma}{LemExtendEats}\label{lem:extend-eats}
	Let $p,q\in Q$ and $\alpha,\beta\in\Gamma^*$.
	If $p\alpha\to^*q\beta$, then $\eats{\alpha}(p)\supseteq\eats{\beta}(q)$.
\qed\end{restatable}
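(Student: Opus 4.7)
The plan is to observe that the lemma follows immediately from composition of runs, once we unfold the definitions. The notation $\eats{\alpha}(p)$ abbreviates $\eats{\alpha}(\{p\})$, i.e. the set of control states $r \in Q$ such that $p\alpha \to^* r$ (where the symbol $r$ on the right is interpreted as the configuration with control state $r$ and empty stack).

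First, I would take an arbitrary element $r \in \eats{\beta}(q)$ and spell out what this means: by the definition of $\eats{\cdot}$, there is a run witnessing $q\beta \to^* r$. The hypothesis of the lemma supplies a run witnessing $p\alpha \to^* q\beta$. Concatenating these two runs (which is legitimate since PDS transitions act only on the top of the stack, and the source of the second run coincides with the target of the first) yields a single run witnessing $p\alpha \to^* r$. By the definition of $\eats{\alpha}$, this gives $r \in \eats{\alpha}(p)$. Since $r$ was arbitrary in $\eats{\beta}(q)$, we conclude $\eats{\alpha}(p) \supseteq \eats{\beta}(q)$, as required.

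There is essentially no obstacle: the only thing one might worry about is that the stack content along the run $p\alpha \to^*q\beta$ could be larger than $\beta$, but that is irrelevant, because at the endpoint the stack is exactly $\beta$, so the subsequent run $q\beta \to^* r$ proceeds with no leftover symbols below $\beta$. In other words, the transitivity of $\to^*$ suffices, and the lemma is a one-line consequence of the definitions.
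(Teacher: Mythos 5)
Your proof is correct and is essentially identical to the paper's: both unfold the definition of $\eats{\cdot}$ and use transitivity of $\to^*$ (concatenation of the run $p\alpha\to^*q\beta$ with a run $q\beta\to^*r$) to conclude $r\in\eats{\alpha}(p)$ for every $r\in\eats{\beta}(q)$.
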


For a transition $\delta=(p,X,a,q,\alpha)$ let $\sgrow(\delta)=|\alpha|-1$, and for a run $\rho=\delta_1\dots\delta_n\in\Delta^n$ let 
\begin{align*}
	\sgrow(\rho)=\sum_{i=1}^n\sgrow(\delta_i)\,.
\end{align*}
In other words, if $\rho$ is a run from $p\alpha$ to $q\beta$ (with $\alpha,\beta\in\Gamma^*$), then $\sgrow(\rho)=|\beta|-|\alpha|$.

A run $\rho$ is called {\em augmenting} if $\sgrow(\rho[0,i])\geq 0$ for all $i\in[0,n]$.
Thus, an augmenting run from $pX\!\alpha$ does not ``dig into'' the stack content $\alpha$,
but rather all its configurations are of the form $p_i\kappa_iY\!\alpha$ for some $\kappa_i\in\Gamma^*$.

The following simple lemma states that there are
at most exponentially (in $z$) many configurations reachable by
augmenting runs of stack growth at most $z$.

\begin{restatable}{lemma}{LemRadius}\label{lem:radius}
	If $p\alpha$ is a configuration of $\Pp$, and $z\in\Nat$,
	then there are at most $|\Pp|^{z+2}$ configurations $q\beta$
	such that $p\alpha\xrightarrow{\rho}q\beta$ for some augmenting run $\rho$ satisfying $\sgrow(\rho)\leq z$.
\qed\end{restatable}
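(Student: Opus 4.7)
The plan is a direct counting argument exploiting the augmenting property together with the push-pop normal form of $\Pp$. If $\alpha=\epsilon$ no transition applies and only $p$ itself is reachable, so I would dispose of this trivial case first. Otherwise I would write $\alpha=X\alpha'$ with $X\in\Gamma$.

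The central step is to show that every configuration visited by an augmenting run $\rho$ starting in $pX\alpha'$ has the form $q_i\kappa_iY\alpha'$ for some $q_i\in Q$, $Y\in\Gamma$, $\kappa_i\in\Gamma^*$, with the part above $\alpha'$ satisfying $|\kappa_iY|=\sgrow(\rho[0,i])+1$. Indeed, in push-pop normal form every transition removes only the current top-most stack symbol, whereas the augmenting condition $\sgrow(\rho[0,i])\geq 0$ forces the stack height to stay at least $|\alpha'|+1$; consequently the bottom $|\alpha'|$ symbols of the initial stack are never exposed at the top and hence never popped.

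Applied to the target configuration $q\beta$ this gives $\beta=\kappa Y\alpha'$ with $|\kappa Y|\leq z+1$, so the number of possible $q\beta$ is at most
\begin{align*}
|Q|\cdot\sum_{k=1}^{z+1}|\Gamma|^k\,.
\end{align*}
Finally I would bound this quantity by $|\Pp|^{z+2}$ using $|Q|+|\Gamma|\leq|\Pp|$ and observing that already the single binomial term $\binom{z+2}{1}|Q|\cdot|\Gamma|^{z+1}=(z+2)|Q|\cdot|\Gamma|^{z+1}$ appearing in the expansion of $(|Q|+|\Gamma|)^{z+2}$ dominates the geometric sum, which is itself at most $(z+1)|Q|\cdot|\Gamma|^{z+1}$. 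The argument is essentially pure counting so I do not expect any real obstacle; the only point of care is that the push-pop normal form is used crucially to turn the augmenting condition into preservation of the bottom of the stack, without which the counting would lose an additional factor depending on the maximal right-hand-side length of transitions.
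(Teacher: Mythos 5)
Your proposal is correct and follows essentially the same route as the paper: dispose of $\alpha=\epsilon$, write $\alpha=X\alpha'$, use the push-pop normal form plus the augmenting condition to see that the bottom part $\alpha'$ is never exposed so every reachable $q\beta$ has the shape $q\kappa Y\alpha'$ with top part of length at most $z+1$, and then count. The only (immaterial) difference is the final arithmetic: the paper bounds the count by $|Q|\cdot(|\Gamma|+1)^{z+1}\leq|\Pp|^{z+2}$, whereas you bound the geometric sum by $(z+1)|Q|\cdot|\Gamma|^{z+1}$ and compare it with a binomial term of $(|Q|+|\Gamma|)^{z+2}$; both work.
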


Our next lemma says that if there is a run between two similar configurations, then there is a short run between them.
Essentially, this boils down to the standard pumping lemma for pushdown automata.

\begin{restatable}{lemma}{LemPDSPaths}\label{lem:PDS-paths}
	There exists a constant $\Ee\leq\exp(|\Pp|)$ such
	that whenever $p\alpha\rightarrow^*q\beta$ for two configurations $p\alpha, q\beta$ of $\Pp$, then $\dist(p\alpha,q\beta)\leq(|\alpha|+|\beta|)\cdot\Ee$.
\qed\end{restatable}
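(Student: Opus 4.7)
The plan is to analyze a shortest run from $p\alpha$ to $q\beta$ via two auxiliary summaries---shortest pop-runs and shortest ``balanced'' augmenting runs that preserve the stack height---and then to decompose the shortest run into essential stack operations interleaved with such balanced excursions.

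For every triple $(p,X,q)\in Q\times\Gamma\times Q$ let $f(p,X,q)$ be the minimum length of a run from $pX$ to $q\epsilon$ (or $\infty$ if none exists). Inspecting the first transition of a pop-run yields the tropical recurrences $f(p,X,q)\le 1$ for a pop $(p,X,a,q,\epsilon)\in\Delta$, $f(p,X,q)\le 1+f(p',X',q)$ for a swap $(p,X,a,p',X')\in\Delta$, and $f(p,X,q)\le 1+f(p_1,Z,q'')+f(q'',Z',q)$ for a push $(p,X,a,p_1,ZZ')\in\Delta$. This system has only $|Q|^2|\Gamma|$ unknowns, and each Bellman--Ford round at most doubles the largest finite value, so after $|Q|^2|\Gamma|$ rounds every finite $f$-value is at most $2^{|Q|^2|\Gamma|}\le\exp(|\Pp|)$. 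Analogously let $g(p,X,q,Y)$ be the minimum length of an augmenting run from $pX$ to $qY$ that stays at stack height $\ge 1$; a shortest such run decomposes into atomic moves (each a swap, or a push followed by a shortest pop-run, of total length at most $1+\exp(|\Pp|)$), and no two atomic moves in a shortest sequence can start from the same (control state, top symbol) pair. Hence at most $|Q||\Gamma|$ atomic moves occur, giving $g(p,X,q,Y)\le\exp(|\Pp|)$ whenever finite.

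Now let $\rho$ be a shortest run from $p\alpha$ to $q\beta$, write $h_i$ for the stack height at $\rho(i)$, and set $m=\min_i h_i$. Because $\Pp$ is in push-pop normal form, $h$ changes by at most $1$ per step, so for each $k\in[m,|\alpha|-1]$ there is a first time $\tau_k$ with $h_{\tau_k}=k$ (reached via a pop), and for each $k\in[m+1,|\beta|]$ a last time $\sigma_k$ with $h_{\sigma_k-1}=k-1$, $h_{\sigma_k}=k$ (a push). These $(|\alpha|-m)+(|\beta|-m)$ times occur in the order $\tau_{|\alpha|-1}<\dots<\tau_m<\sigma_{m+1}<\dots<\sigma_{|\beta|}$ and are the \emph{essential pops and pushes} of $\rho$. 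The at most $|\alpha|+|\beta|-2m+1$ intervening segments are balanced excursions that stay at or above some level $h$, touch $h$ at both endpoints, and leave positions below $h-1$ intact; in the reduced view above these intact symbols each excursion is an augmenting balanced run, and by minimality of $\rho$ its length is bounded by a corresponding $g$-value, hence by $\exp(|\Pp|)$. Summing the at most $|\alpha|+|\beta|$ essential operations and the at most $|\alpha|+|\beta|+1$ excursion lengths yields $|\rho|\le(|\alpha|+|\beta|)\cdot\Ee$ for a suitable $\Ee\le\exp(|\Pp|)$.

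The main technical point is that in push-pop normal form a push $(p,X,a,p_1,ZZ')$ simultaneously installs a new top $Z$ \emph{and} overwrites the old top $X$ by $Z'$. Hence the balanced-excursion summary must be indexed by a pair of top symbols (forcing $g$ to have four parameters rather than just two control states), and the top symbol at any given height can be swapped repeatedly over the course of $\rho$; the decomposition must therefore be driven by the stack-height profile, not by tracking the fate of individual initial symbols of $\alpha$.
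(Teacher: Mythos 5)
Your argument is correct. Its outer decomposition --- cutting a shortest run from $p\alpha$ to $q\beta$ along the stack-height profile into at most $|\alpha|+|\beta|$ essential pops and pushes separated by balanced augmenting excursions --- is essentially the paper's own (your $\tau_k,\sigma_k$ play the role of the paper's $\ell_j,r_j$). The genuine difference is in how a single excursion, i.e.\ an augmenting run from $pX$ to $qY$, is bounded. The paper (Lemma~\ref{lem:PDS-paths-aux}) argues by surgery on a shortest such run: if its stack growth stays below $|Q|^2\cdot|\Gamma|^2$, then Lemma~\ref{lem:radius} bounds the number of configurations it can visit and minimality forbids repetitions; otherwise a pigeonhole over matched entry/exit levels yields two levels carrying the same quadruple of control states and top symbols, so the inner piece can replace the outer one, contradicting minimality. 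You instead precompute pop summaries $f(p,X,q)$, bounded via the shortest-derivation argument for context-free recurrences, and then bound a balanced run by splitting it at its height-one configurations into at most $|Q|\cdot|\Gamma|$ atomic moves (a swap, or a push followed by a pop excursion), excising the loop between two occurrences of the same (control state, top symbol) pair. Both routes give $\Ee\leq\exp(|\Pp|)$; yours is somewhat more constructive, the paper's shorter because it reuses Lemma~\ref{lem:radius}. One step you should make explicit: the claim that ``$|Q|^2|\Gamma|$ Bellman--Ford rounds suffice'' tacitly uses that a minimal pop-run has no repeated triple $(p,X,q)$ along a branch of its derivation tree (so minimal witnesses have height at most $|Q|^2|\Gamma|$); this is the same excision idea you use elsewhere and is easy to add. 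Your closing observation that a push overwrites the old top symbol, forcing the four-parameter summary $g(p,X,q,Y)$, is precisely the point the paper accommodates by stating its auxiliary lemma for arbitrary top symbols $X,Y$.
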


In our proof we depend on the notion of linked pairs.
A similar notion appeared in Jančar~\cite{Jancar-arxiv,JANCAR2019}.

\begin{definition}
	A pair $(\alpha,\beta)\in\Gamma^*\times\Gamma^+$ is a \emph{linked pair}
	if $\eats{\alpha}=\eats{\alpha\beta}$ and $\eats{\beta}=\eats{\beta\beta}$.
\end{definition}

It is often easier to apply any kind of a pumping argument to a linked pair than to an arbitrary stack content.
Simultaneously, a linked pair can be found on top of every sufficiently large
stack content, as described by the next lemma.

\begin{restatable}{lemma}{LemDecompose}\label{lem:decompose}
	There is a constant $\Ff\leq\exp(\exp(|\Pp|))$ such that
	every configuration $q\delta$ of $\Pp$ reachable from an initial configuration $q_0\alpha_0$ with $|\delta|\geq \Ff+|\alpha_0|$
	can be written as $q\delta=q\alpha\beta\gamma$,
	where $(\alpha,\beta)$ is a linked pair,
	$|\alpha|,|\beta|\leq\Ff$, and all configurations of the form $q\alpha\beta^i\gamma$ (where $i\in\Nat$) are reachable from $q_0\alpha_0$.
\qed\end{restatable}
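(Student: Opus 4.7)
The plan is to combine a standard stack-profile analysis of a fixed reaching run with a pigeonhole-plus-Ramsey argument applied to the top portion of $\delta$. Fix any run $\rho\colon q_0\alpha_0\xrightarrow{\rho}q\delta$ and write $\delta=X_1X_2\cdots X_n$ with $X_1$ on top; in push-pop normal form the stack height changes by at most one per step. Consequently, for every $k$ between the minimum stack height attained by $\rho$ and $n$, there is a \emph{last} time $\tau_k$ at which the height equals $k$; after $\tau_k$ the height must stay strictly above $k$ (otherwise it would have to cross $k$ once more on its way up to the final height $n$), so the stack at $\tau_k$ is precisely $X_{n-k+1}\cdots X_n$. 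Let $p_k\in Q$ denote the control state at $\tau_k$. Because the minimum stack height is at most $|\alpha_0|$, the milestones $\tau_k$ exist for all $k\in[|\alpha_0|,n]$, and under $n\geq\Ff+|\alpha_0|$ in particular for all $k\in[n-\Ff,n]$.

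The second ingredient is that $\eats{\alpha}$ commutes with unions, hence is determined by its action on singletons and can be encoded as a binary relation on $Q$; therefore the monoid $M:=\eats{\Gamma^*}$ has size at most $2^{|Q|^2}\leq\exp(|\Pp|)$. I then color each position $i\in\{0,1,\ldots,\Ff\}$ by the pair $(\eats{X_1\cdots X_i},\,p_{n-i})\in M\times Q$, and each pair $i<j$ by $\eats{X_{i+1}\cdots X_j}\in M$. Choosing $\Ff$ to be a suitable doubly-exponential threshold, pigeonhole extracts a set $V\subseteq\{0,\ldots,\Ff\}$ of positions sharing a common vertex color and of size at least $R_{|M|}(3)\leq 3\cdot|M|!\leq\exp(\exp(|\Pp|))$. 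Ramsey's theorem for monochromatic triangles under an $|M|$-coloring of pairs then yields $i<j<k$ in $V$ with $e:=\eats{X_{i+1}\cdots X_j}=\eats{X_{j+1}\cdots X_k}=\eats{X_{i+1}\cdots X_k}$; since $\eats{X_{i+1}\cdots X_k}=\eats{X_{j+1}\cdots X_k}\circ\eats{X_{i+1}\cdots X_j}=e\circ e$, this forces $e^2=e$.

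Set $\alpha:=X_1\cdots X_i$, $\beta:=X_{i+1}\cdots X_j$, $\gamma:=X_{j+1}\cdots X_n$. The common vertex color at $i,j$ yields $\eats{\alpha}=\eats{\alpha\beta}$ and $p_{n-i}=p_{n-j}$, while $e^2=e$ yields $\eats{\beta\beta}=\eats{\beta}$, so $(\alpha,\beta)$ is a linked pair of the required size. Reachability of $q\alpha\beta^m\gamma$ from $q_0\alpha_0$ for every $m\in\Nat$ follows by splicing three pieces of $\rho$: (i)~the prefix of $\rho$ up to $\tau_{n-j}$, which reaches $p_{n-j}\gamma$; (ii)~the sub-run of $\rho$ from $\tau_{n-j}$ to $\tau_{n-i}$, during which the height stays at least $n-j$ so that $\gamma$ lies untouched at the bottom and the sub-run lifts to a ``build-$\beta$'' pattern $p_{n-j}\kappa\to^*p_{n-i}\beta\kappa$ applicable to any $\kappa$; because $p_{n-j}=p_{n-i}$ this pattern can be iterated $m$ times on top of $\gamma$, reaching $p_{n-j}\beta^m\gamma$; and (iii)~the suffix of $\rho$ after $\tau_{n-i}$, similarly lifted to $p_{n-i}\kappa\to^*q\alpha\kappa$ and applied with $\kappa=\beta^m\gamma$.

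The main technical obstacle is controlling the constants. Treating $\eats{\cdot}$ as an arbitrary function on $2^Q$ would give the bound $|M|\leq 2^{|Q|\cdot 2^{|Q|}}$ and push the Ramsey threshold to triple-exponential; the union-preservation observation, yielding $|M|\leq 2^{|Q|^2}$, is precisely what keeps $\Ff$ within $\exp(\exp(|\Pp|))$.
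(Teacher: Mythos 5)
Your combinatorial core is sound and genuinely different from the paper's: where the paper invokes Simon's factorization-forest theorem for the semigroup of binary relations on $Q$, you use a pigeonhole on vertex colors followed by multicolor Ramsey for monochromatic triangles, and since a triangle suffices, $R_{|M|}(3)\leq 3\cdot|M|!$ with $|M|\leq 2^{|Q|^2}$ (the union-preservation observation) indeed keeps $\Ff$ doubly exponential. The derivation of the linked pair itself is correct: the common vertex color gives $\eats{X_1\cdots X_i}=\eats{X_1\cdots X_j}$, and the triangle forces $\eats{X_{i+1}\cdots X_j}$ to be idempotent.

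The gap is in the reachability (splicing) part. Your claim that the stack at $\tau_k$ is precisely $X_{n-k+1}\cdots X_n$ is false: at time $\tau_k$ the height is $k$, and the very next transition pops the then-topmost symbol and, in push-pop normal form, pushes two symbols whose lower one is what ends up as $X_{n-k+1}$; the popped symbol need not equal $X_{n-k+1}$. For instance, with the single rule $(p,A,a,p,AB)$, the run $pA\to_a pAB\to_a pABB$ has its last visit to height $2$ at $pAB$, whereas the bottom two symbols of the final stack are $BB$. Consequently the sub-run of $\rho$ from $\tau_{n-j}$ to $\tau_{n-i}$ does not lift to a pattern $p_{n-j}\kappa\to^*p_{n-i}\beta\kappa$ applicable to every $\kappa$: its first transition reads the concrete top symbol $W_j$ present at time $\tau_{n-j}$, and what it builds above the untouched part (only the bottom $n-j-1$ symbols are untouched) is $W_iX_{i+2}\cdots X_{j+1}$, where $W_i$ is the top at time $\tau_{n-i}$. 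To iterate this pattern you need $W_i=W_j$ (and the same symbol again to attach the suffix), but your vertex color $(\eats{X_1\cdots X_i},p_{n-i})$ records neither $W_i$ nor $X_{i+1}$, so pigeonhole plus Ramsey give you no control over these symbols; the spliced transition sequence is then in general not a run, and even when it is, it ends in $qX_1\cdots X_{i+1}(X_{i+2}\cdots X_{j+1})^mX_{j+2}\cdots X_n$ rather than in $q\alpha\beta^m\gamma$ for your $\alpha,\beta,\gamma$. The repair is cheap: include in the color of position $i$ the topmost stack symbol of $\rho$ at time $\tau_{n-i}$, and shift the factorization by one symbol, taking $\alpha=X_1\cdots X_{i+1}$, $\beta=X_{i+2}\cdots X_{j+1}$, $\gamma=X_{j+2}\cdots X_n$ (coloring prefixes and pairs with these shifted blocks); alternatively record $X_{i+1}$ as well and use the identity $(uv)^mu=u(vu)^m$. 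This multiplies the number of vertex colors by at most $|\Gamma|^2$ and leaves $\Ff$ doubly exponential; note that the paper's own splicing likewise pigeonholes a stack symbol together with the control state at each split point, which is exactly the datum your coloring omits. As written, though, the reachability of $q\alpha\beta^m\gamma$ is not established.
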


\begin{remark*}
	The authors are convinced that the doubly-ex\-po\-nen\-tial upper bound in Lemma~\ref{lem:decompose} is optimal.
	On the other hand, it seems quite possible that one can replace the notion of 
	linked pairs by some weaker notion,
	so that an analogue of Lemma~\ref{lem:decompose} would give an exponential upper bound on the length of the fragments $\alpha$ and $\beta$,
	and in effect the complexity of the whole algorithm would be decreased exponentially.
	This possibility was not investigated by the authors.
\end{remark*}

Below we also state an easy lemma, which follows directly from definitions.

\begin{restatable}{lemma}{LemEatsBeta}\label{lem:eats-beta}
	Let $r\in Q$, and let $(\alpha,\beta)$ be a linked pair.
	Then for every $r'\in\eats{\alpha}(r)$ we have
	$\eats{\beta}(r')\subseteq\eats{\alpha}(r)$.
\qed\end{restatable}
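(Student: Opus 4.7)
The plan is to chain together the two defining properties of a linked pair with the semigroup identity $\eats{\beta}\circ\eats{\alpha}=\eats{\alpha\beta}$ stated right after the definition of $\eats{\cdot}$. First I would note that for any $\alpha\in\Gamma^*$ the function $\eats{\alpha}\colon 2^Q\to 2^Q$ is monotone with respect to inclusion: this is immediate from its definition as $\eats{\alpha}(T)=\{s\in Q\mid \exists q\in T.\ q\alpha\to^* s\}$, since enlarging $T$ only adds more starting configurations.

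Next, assume $r'\in\eats{\alpha}(r)$, i.e.\ $\{r'\}\subseteq\eats{\alpha}(\{r\})$. Applying $\eats{\beta}$ to both sides and using monotonicity yields
\begin{align*}
\eats{\beta}(r')\;=\;\eats{\beta}(\{r'\})\;\subseteq\;\eats{\beta}\bigl(\eats{\alpha}(\{r\})\bigr)\;=\;\eats{\alpha\beta}(\{r\}),
\end{align*}
where the last equality is the composition identity mentioned just before Lemma~\ref{lem:extend-eats}. Finally, the linked-pair assumption $\eats{\alpha}=\eats{\alpha\beta}$ gives $\eats{\alpha\beta}(\{r\})=\eats{\alpha}(\{r\})=\eats{\alpha}(r)$, so the chain of inclusions produces exactly $\eats{\beta}(r')\subseteq\eats{\alpha}(r)$, which is the desired statement.

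There is no real obstacle here; the lemma is essentially a one-line algebraic manipulation, and the second linked-pair condition $\eats{\beta}=\eats{\beta\beta}$ is not even needed. The only thing to be careful about is the notational convention that $\eats{\alpha}(r)$ abbreviates $\eats{\alpha}(\{r\})$, so that applying the set-level identities to a single element is legitimate.
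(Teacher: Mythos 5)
Your proof is correct and matches the paper's own one-line argument: the paper's proof is exactly the chain $\eats{\beta}(r')\subseteq\eats{\beta}(\eats{\alpha}(r))=\eats{\alpha\beta}(r)=\eats{\alpha}(r)$, with the monotonicity step you spell out left implicit. Your observation that the second linked-pair condition $\eats{\beta}=\eats{\beta\beta}$ is not needed is also accurate.
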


Observe that if $(\alpha,\beta)$ is a linked pair, then $(\beta,\beta)$ is a 
linked pair as well.
In consequence, the above lemma (as well as Corollary~\ref{cor:repetition-is-omega} stated below) can be used when replacing $\alpha$ by $\beta$.

For the rest of this section we state two
straightforward lemmata, which are useful while proving that two configurations are
bisimilar. Both
lemmata
have appeared in a related form already in Jančar~\cite{Jancar-arxiv,JANCAR2019}.
The first of them talks about a situation when we add something on top of
bisimilar configurations.

\begin{restatable}{lemma}{LemEquivBecauseEquivBelow}\label{lem:equiv-because-equiv-below}
	Let $q\in Q$, and $\alpha\in\Gamma^*$, and $\gamma,\gamma'\in\RegStr(\Gamma)$.
	If $r\gamma\sim r\gamma'$ for every $r\in\eats{\alpha}(q)$, then $q\alpha\gamma\sim q\alpha\gamma'$.
\qed\end{restatable}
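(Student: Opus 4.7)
The plan is to exhibit an explicit bisimulation containing the pair $(q\alpha\gamma,q\alpha\gamma')$. I would take
\[
  R \;=\; \bigl\{(p\tau\gamma,\,p\tau\gamma')\;\big|\;\tau\in\Gamma^*,\ q\alpha\to^*p\tau\bigr\}\;\cup\;(\sim),
\]
which contains $(q\alpha\gamma,q\alpha\gamma')$ by taking $p=q$ and $\tau=\alpha$, and then verify that $R$ is a bisimulation.

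The pairs already in $(\sim)$ are handled for free, so the substance lies in checking the back-and-forth condition for pairs of the form $(p\tau\gamma,p\tau\gamma')$ from the first component. The key observation is that since $\Pp$ is in push-pop normal form, every transition consumes exactly the topmost stack symbol. First I would treat the case $|\tau|\geq 1$: the topmost symbol then lies entirely in $\tau$, so exactly the same transitions are enabled at $p\tau\gamma$ and at $p\tau\gamma'$, producing configurations of the shape $p'\tau'\gamma$ and $p'\tau'\gamma'$, respectively, with $q\alpha\to^* p\tau\to p'\tau'$. If $\tau'\neq\epsilon$ the new pair is again in the first component of $R$; if $\tau'=\epsilon$ then $p'\in\eats{\alpha}(q)$, and the hypothesis of the lemma places $(p'\gamma,p'\gamma')$ in $(\sim)\subseteq R$.

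The remaining case $|\tau|=0$ is immediate: the pair is $(p\gamma,p\gamma')$ with $p\in\eats{\alpha}(q)$, so the hypothesis yields $p\gamma\sim p\gamma'$ directly and transitions are matched inside $(\sim)$. I do not foresee any genuine obstacle here: the whole argument reduces to the slogan that under push-pop transitions the tail $\gamma$ (resp.\ $\gamma'$) stays inert as long as a nonempty prefix $\tau$ sits above it, and becomes visible only once that prefix has been emptied, at which moment the hypothesis of the lemma takes over.
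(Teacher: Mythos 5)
Your proof is correct and follows essentially the same route as the paper: the same relation $R=(\sim)\cup\{(p\tau\gamma,p\tau\gamma')\mid q\alpha\to^*p\tau\}$, the same case split on whether the prefix above $\gamma$ (resp.\ $\gamma'$) is empty, and the same matching of identical transitions while the prefix is nonempty. The only cosmetic difference is that the inertness of the tail needs no appeal to push-pop normal form (every transition in this formalism consumes exactly the topmost symbol anyway), but this does not affect correctness.
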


The second lemma states that if we detect a loop while popping a stack $\beta$, then this loop can be repeated forever,
and thus the stack content is equivalent to $\beta^\omega$.

\begin{restatable}{lemma}{LemRepetitionIsOmega}\label{lem:repetition-is-omega}
	Let $U\subseteq Q$, $\beta\in\Gamma^+$, $\gamma\in\RegStr(\Gamma)$, and $i,j\in\Nat$.
	If $\eats{\beta}(U)\subseteq U$, and $r\beta^i\gamma\sim r\beta^j\gamma$ for all $r\in U$, and $i\neq j$,
	then actually $r\beta^i\gamma\sim r\beta^j\gamma\sim r\beta^\omega$ for all $r\in U$.
\qed\end{restatable}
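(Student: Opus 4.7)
The plan is to construct an explicit bisimulation witnessing $r\beta^\omega\sim r\beta^i\gamma$ for every $r\in U$; combined with the hypothesis $r\beta^i\gamma\sim r\beta^j\gamma$, this yields the full chain. Without loss of generality assume $i<j$, and set $d=j-i\geq 1$.

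First I would strengthen the hypothesis into a pumping claim: for every $p\in Q$ and $\kappa\in\Gamma^*$ with $\eats{\kappa}(p)\subseteq U$, every $m\geq i$, and every $k\geq 0$, one has $p\kappa\beta^m\gamma\sim p\kappa\beta^{m+kd}\gamma$. For $k=1$ this is a direct application of Lemma~\ref{lem:equiv-because-equiv-below} with $\alpha=\kappa\beta^{m-i}$: the composition identity for $\eats{\cdot}$ together with iterated use of $\eats{\beta}(U)\subseteq U$ gives $\eats{\kappa\beta^{m-i}}(p)=\eats{\beta^{m-i}}(\eats{\kappa}(p))\subseteq U$, so the hypothesis $r\beta^i\gamma\sim r\beta^j\gamma$ for $r\in U$ supplies Lemma~\ref{lem:equiv-because-equiv-below} with what it needs. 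Transitivity of $\sim$ handles general $k$.

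Next, consider the relation
\[
R=\{(p\eta\beta^\omega,\,c)\colon p\in Q,\;\eta\in\Gamma^*,\;\eats{\eta}(p)\subseteq U,\;\exists\,m\geq i.\,c\sim p\eta\beta^m\gamma\}\,.
\]
The pair $(r\beta^\omega,r\beta^i\gamma)$ lies in $R$ via the witness $(r,\epsilon,i)$, so it suffices to prove $R$ is a bisimulation. To verify back-and-forth for $(c_1,c_2)=(p\eta\beta^\omega,c)\in R$, I would first use the pumping claim to replace the witness $m$ by $m+d$, so we may assume $m\geq i+1$. Any transition $c_1\to_a c_1'$ uses a rule $\delta$ whose top symbol matches that of $p\eta\beta^m\gamma$, so $\delta$ is applicable there too, producing some $\tilde c$; from $c_2\sim p\eta\beta^m\gamma$ and the bisimulation property of $\sim$, we obtain $c_2\to_a c_2'$ with $c_2'\sim\tilde c$. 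A short case split determines the new witness: either $(p',\kappa\eta',m)$, when $\eta=Y\eta'$ and $\delta=(p,Y,a,p',\kappa)$, or $(p',\kappa\beta',m-1)$, when $\eta=\epsilon$ and $\beta=Y\beta'$. Lemma~\ref{lem:extend-eats} propagates the condition $\eats{\cdot}(p')\subseteq U$ along $\delta$ (using $p\in U$ in the second case), and $m-1\geq i$ thanks to the boost. Since $c_2'\sim\tilde c$, the successor pair lies in $R$. The back direction is dual: a transition $c_2\to_a c_2'$ transports via $c_2\sim p\eta\beta^m\gamma$ to some $p\eta\beta^m\gamma\to_a\tilde c$ with $\tilde c\sim c_2'$, and the same rule replays on $c_1$.

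The main obstacle is the case $\eta=\epsilon$, which pops one symbol out of the $\beta^m$-block and drives $m$ down to $m-1$; without the ability to pre-emptively boost $m$ upward by $d$, the relation would collapse at $m=i$ and the whole construction would fail. This boost depends crucially on $\eats{\beta}(U)\subseteq U$ propagating uniformly through arbitrarily many $\beta$-poppings, which is precisely what makes the strengthened pumping claim available at every configuration visited by the simulation; this interplay between pumping and stability of $U$ under $\eats{\beta}$ is the heart of the argument.
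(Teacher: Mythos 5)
Your proof is correct and follows essentially the same route as the paper: an explicit bisimulation relation pairing configurations with a $\beta^\omega$ tail against configurations bisimilar to the same prefix over a $\beta^m\gamma$ tail, with the $U$-invariant maintained via Lemma~\ref{lem:extend-eats} and $\eats{\beta}(U)\subseteq U$, and the hypothesis $r\beta^i\gamma\sim r\beta^j\gamma$ used to recharge when the finite tail is about to be consumed. The only cosmetic difference is that you pre-boost the exponent $m$ by $d$ through an auxiliary pumping claim (via Lemma~\ref{lem:equiv-because-equiv-below}), whereas the paper keeps the exponent fixed at $i$ and instead regrows the prefix to $\beta^{j-i}$ in the $\delta=\epsilon$ case directly from the hypothesis.
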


We now combine the above two lemmata into a single corollary.

\begin{restatable}{corollary}{CorRepetitionIsOmega}\label{cor:repetition-is-omega}
	Let $q\in Q$, let $(\alpha,\beta)$ be a linked pair, let $\gamma,\gamma'\in\RegStr(\Gamma)$, and let $i,j\in\Nat$.
	If $r\beta^i\gamma\sim r\beta^j\gamma$ for all $r\in\eats{\alpha}(q)$, and $i\neq j$,
	then $q\alpha\beta^i\gamma\sim q\alpha\beta^j\gamma\sim q\alpha\beta^\omega$.
\qed\end{restatable}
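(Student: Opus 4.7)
The plan is to combine the two lemmata stated immediately before the corollary, with Lemma~\ref{lem:eats-beta} providing the necessary closure property. First I would set $U := \eats{\alpha}(q)$, and verify that this $U$ satisfies the hypothesis $\eats{\beta}(U)\subseteq U$ required by Lemma~\ref{lem:repetition-is-omega}. This closure is essentially the content of Lemma~\ref{lem:eats-beta}: for every $r'\in\eats{\alpha}(q)$ that lemma yields $\eats{\beta}(r')\subseteq\eats{\alpha}(q)$, and taking the union over $r'\in U$ gives $\eats{\beta}(U)\subseteq U$.

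Next, I would invoke Lemma~\ref{lem:repetition-is-omega} on this $U$, with the given $\beta,\gamma,i,j$. The hypothesis $r\beta^i\gamma\sim r\beta^j\gamma$ for all $r\in U$ and the assumption $i\neq j$ are already granted by the corollary, so the lemma delivers $r\beta^i\gamma\sim r\beta^j\gamma\sim r\beta^\omega$ for every $r\in\eats{\alpha}(q)$.

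Finally, I would apply Lemma~\ref{lem:equiv-because-equiv-below} twice: once with the pair $\beta^i\gamma,\beta^\omega$ to obtain $q\alpha\beta^i\gamma\sim q\alpha\beta^\omega$, and once with $\beta^j\gamma,\beta^\omega$ to obtain $q\alpha\beta^j\gamma\sim q\alpha\beta^\omega$. The conjunction of these two equivalences is precisely the conclusion of the corollary (transitivity of $\sim$ taking care of the middle equivalence). The proof is essentially bookkeeping — all the real work has been done in the preceding lemmata — and the only small observation is the identification of $U=\eats{\alpha}(q)$ as the correct set on which to apply Lemma~\ref{lem:repetition-is-omega}; there is no serious obstacle.
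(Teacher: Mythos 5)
Your proposal is correct and follows essentially the same route as the paper: take $U=\eats{\alpha}(q)$, use the linked-pair property (via Lemma~\ref{lem:eats-beta}, which is just that property restated) to get $\eats{\beta}(U)\subseteq U$, apply Lemma~\ref{lem:repetition-is-omega}, and lift the result through Lemma~\ref{lem:equiv-because-equiv-below}. The only cosmetic difference is that the paper invokes the linked-pair equality $\eats{\beta}(\eats{\alpha}(q))=\eats{\alpha\beta}(q)=\eats{\alpha}(q)$ directly rather than citing Lemma~\ref{lem:eats-beta}.
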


\section{The heart of the proof}

In addition to all the lemmata of Section~\ref{sec:basics}, 
we give here a simple but decisive lemma (Lemma~\ref{lem:equiv-because-loop}), 
which is central for our proof.
It assumes a situation in which several bisimulation classes are determined by some
(small) stack contents and certain bisimulation classes one reaches when popping these
stack contents; moreover all of these latter bisimulation classes in turn are themselves determined by a
(small) stack content below which we see bisimulation classes for which we have already made this characterization!
More precisely, while
popping from a configuration bisimilar to $p\mu_1$ we reach a configuration $r\nu_1$,
and while popping from a configuration bisimilar to $r\nu_1$ we reach again $p\mu_1$.
In such a circular situation we can replace $\mu_1$ and $\nu_1$ by other stack contents having the same property.

\begin{restatable}{lemma}{LemEquivBecauseLoop}\label{lem:equiv-because-loop}
	Let $U,V\subseteq Q$ and assume that for all $p\in U$ there
	are $r_p\in Q$ and $\chi_p\in\Gamma^+$ such that
	$\eats{\chi_p}(r_p)\subseteq V$ and conversely
	for all $r\in V$ there are $p_r\in Q$ and $\xi_r\in\Gamma^+$
	such that $\eats{\xi_r}(p_r)\subseteq U$.
	If for some
	$\mu_1,\mu_2,\nu_1,\nu_2\in\RegStr(\Gamma)$
	we have for all $i\in\set{1,2}$,
	\begin{align*}
		p\mu_i&\sim r_p\chi_p\nu_i&&\text{for all }p\in U
		&&\text{and}\\
		r\nu_i&\sim p_r\xi_r\mu_i&&\text{for all }r\in V,
	\end{align*}
	then $p\mu_1\sim p\mu_2$ for all $p\in U$ and
	$r\nu_1\sim r\nu_2$ for all $r\in V$.
\end{restatable}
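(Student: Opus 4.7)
The plan is to exhibit a single bisimulation $R$ containing every pair $(p\mu_1,p\mu_2)$ for $p\in U$ and every $(r\nu_1,r\nu_2)$ for $r\in V$; the lemma then follows from $R\subseteq(\sim)$. A naive $R=(\sim)\cup\{\text{those pairs}\}$ fails because one transition from $p\mu_1$ typically leaves a nontrivial stack fragment above $\mu_1$, producing pairs not listed. I would remedy this by closing up under all such intermediate fragments subject to a ``return address'' constraint. Concretely, let
\begin{align*}
R_0=\,&\{(p\omega\mu_i,p\omega\mu_j):p\in Q,\,\omega\in\Gamma^*,\,\eats{\omega}(p)\subseteq U,\,i,j\in\{1,2\}\}\\
\cup\,&\{(r\omega\nu_i,r\omega\nu_j):r\in Q,\,\omega\in\Gamma^*,\,\eats{\omega}(r)\subseteq V,\,i,j\in\{1,2\}\},
\end{align*}
and $R=(\sim)\cup\{(s,t):\exists(u,v)\in R_0.\ s\sim u\wedge v\sim t\}$. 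The side condition $\eats{\omega}(p)\subseteq U$ (dually $\eats{\omega}(r)\subseteq V$) records that popping the extra stack $\omega$ lands back in $U$ (resp.\ $V$), and is precisely what I expect to be preserved under a single transition.

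Verifying $R$ is a bisimulation reduces, by $\sim$-chasing on both sides, to the following claim about basic pairs: for every $(u,v)\in R_0$ and every $u\to_a u'$, there exists $v\to_a v'$ with $(u',v')\in R$ (and symmetrically). I plan to verify this in two cases. If $\omega=X\omega'\neq\epsilon$, then push-pop normal form forces the transition at $u=pX\omega'\mu_i$ to come from some rule $(p,X,a,p',\kappa)\in\Delta$ with $\kappa\in\Gamma^{\leq 2}$, which applies verbatim at $v$, yielding $u'=p'\kappa\omega'\mu_i$ and $v'=p'\kappa\omega'\mu_j$; Lemma~\ref{lem:extend-eats} gives $\eats{\kappa}(p')\subseteq\eats{X}(p)$, so $\eats{\kappa\omega'}(p')\subseteq\eats{\omega}(p)\subseteq U$, placing $(u',v')$ in $R_0$.

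The interesting case is $\omega=\epsilon$ (so $u=p\mu_i$ with $p\in U$), where I would route the move through the hypothesis: $p\mu_i\sim r_p\chi_p\nu_i$ matches $u\to_a u'$ by a step $r_p\chi_p\nu_i\to_a w$ with $u'\sim w$, fired by some rule $(r_p,X,a,r',\kappa)$ on the head $X$ of $\chi_p=X\chi'$. Firing the same rule at $r_p\chi_p\nu_j$ yields $w'=r'\kappa\chi'\nu_j$, and the analogous eats-computation (now using $\eats{\chi_p}(r_p)\subseteq V$) places $(w,w')$ in the $\nu$-part of $R_0$. The hypothesis $p\mu_j\sim r_p\chi_p\nu_j$ then supplies $v=p\mu_j\to_a v^*$ with $v^*\sim w'$, and $(u',v^*)\in R$ via the $R_0$-witness $(w,w')$. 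The $\nu$-side case is symmetric, using $r\nu_i\sim p_r\xi_r\mu_i$ to bounce back to the $\mu$-side.

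The main obstacle I foresee is exactly the circular reduction the hypotheses impose: via Lemma~\ref{lem:equiv-because-equiv-below} one would otherwise reduce $p\mu_1\sim p\mu_2$ on $U$ to $r\nu_1\sim r\nu_2$ on $V$ and back. Folding both families into a single $R_0$ is what breaks this circularity, letting the bisimulation game bounce between $\mu$- and $\nu$-sides each time the auxiliary stack $\omega$ is exhausted, so that both families end up proved bisimilar simultaneously.
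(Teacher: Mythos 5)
Your proposal is correct and follows essentially the same route as the paper: the paper's bisimulation is exactly the $\sim$-closure of pairs $q\delta\mu_1$ versus $q\delta\mu_2$ (resp.\ $q\delta\nu_1$ versus $q\delta\nu_2$) constrained by $\eats{\delta}(q)\subseteq U$ (resp.\ $\subseteq V$), with the same case split on $\delta=\epsilon$ bouncing through the hypotheses via the nonempty $\chi_p$, $\xi_r$. Your extra inclusions ($(\sim)$ itself and the $i=j$ pairs) and the factoring through a basic-pair claim are harmless cosmetic differences.
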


\begin{proof}
	We define a relation $R$ between configurations (we are going to prove that this relation is a bisimulation):
	\begin{align*}
		R={}&\set{(s_1,s_2)\mid\exists q\in Q.\,\exists \delta\in\Gamma^*.\,\\
			&\hspace{5em}s_1\sim q\delta\mu_1\land
			q\delta\mu_2\sim s_2\land
			\eats{\delta}(q)\subseteq U}\ \cup\\
		&\set{(s_1,s_2)\mid\exists q\in Q.\,\exists \delta\in\Gamma^*.\,\\
			&\hspace{5em}s_1\sim q\delta\nu_1\land
			q\delta\nu_2\sim s_2\land
			\eats{\delta}(q)\subseteq V}\,.
	\end{align*}
	Observe that in particular $(p\mu_1,p\mu_2)\in R$ for all $p\in U$ and $(r\nu_1,r\nu_2)\in R$ for all $r\in V$,
	which implies the thesis if $R$ is a bisimulation.

	In order to prove that $R$ is a bisimulation, consider a pair $(s_1,s_2)\in R$.
	We have four possible reasons for $(s_1,s_2)\in R$.
	\begin{enumerate}
	\item
		One possibility is that $s_1\sim qX\eta\mu_1$ and $qX\eta\mu_2\sim s_2$
		for some control state $q\in Q$, some stack symbol $X\in\Gamma$, and some stack content $\eta\in\Gamma^*$, where $\eats{X\eta}(q)\subseteq U$
		(we have replaced $\delta$ from the definition of $R$ by $X\eta$, assuming that $|\delta|>0$).

		Suppose that $s_1\to_a t_1$ for some configuration $t_1$; we should prove
		the existence of a configuration $t_2$ such that $s_2\to_a t_2$ and $(t_1,t_2)\in R$.
		Because $s_1\sim qX\eta\mu_1$, there is $t_1'$ such that $qX\eta\mu_1\to_a t_1'$ and $t_1\sim t_1'$.
		Necessarily $t_1'=q'\alpha\eta\mu_1$ for some transition $(q,X,a,q',\alpha)\in\Delta$.
		Due to the same transition we have that $qX\eta\mu_2\to_a q'\alpha\eta\mu_2$,
		and because $qX\eta\mu_2\sim s_2$, there is a configuration $t_2$ such that $s_2\to_a t_2$ and $q'\alpha\eta\mu_2\sim t_2$.
		Moreover, because $qX\eta\to^*q'\alpha\eta$, by Lemma~\ref{lem:extend-eats} we have that $\eats{\alpha\eta}(q')\subseteq\eats{X\eta}(q)\subseteq U$, so $(t_1,t_2)\in R$.

		Likewise, whenever $s_2\to_a t_2$ for a configuration $t_2$, we should
		prove the existence of a configuration $t_1$
		such that $s_1\to_a t_1$ and $(t_1,t_2)\in R$.
		This is completely symmetric to what we have done above.

	\item	Another reason for $(s_1,s_2)\in R$ is that $s_1\sim qX\eta\nu_1$ and $qX\eta\nu_2\sim s_2$
		for some control state $q\in Q$, some stack symbol $X\in\Gamma$, and some stack content $\eta\in\Gamma^*$, where $\eats{X\eta}(q)\subseteq V$.
		This case is completely symmetric to the previous one, thus we do not repeat here the proof.

	\item	It is also possible that $s_1\sim q\mu_1$ and $q\mu_2\sim s_2$ for some control state $q\in Q$ such that $\eats{\epsilon}(q)\subseteq U$
		(i.e., when $\delta$ from the definition of $R$ is empty).
		The condition $\eats{\epsilon}(q)\subseteq U$ says simply that $q\in U$.
		Thus, by assumptions of the lemma we have that
		$s_1\sim q\mu_1\sim r_q\chi_q\nu_1$, and $r_q\chi_q\nu_2\sim q\mu_2\sim s_2$, and $\eats{\chi_q}(r_q)\subseteq V$.
			Thus, we have reduced this case to Case 2 (taking $r_q$ as $q$ and 
			$\chi_q\in\Gamma^+$ as $\delta$).

	\item	The remaining case is that $s_1\sim q\nu_1$ and $q\nu_2\sim s_2$ for some control state $q\in Q$ such that $\eats{\epsilon}(q)\subseteq V$.
		Proceeding symmetrically to Case 3, this case can again be reduced to Case 1.
	\qedhere
	\end{enumerate}
\end{proof}

\section{Main technical theorem}

In this section we state our main technical theorem (Theorem~\ref{thm:technical}),
and we show how our main result (Theorem~\ref{thm:main}) follows from this theorem.

\begin{theorem}\label{thm:technical}
	There exists an elementary function $h\colon\Nat^3\rightarrow\Nat$
	such that for all $q\in Q$, all linked pairs $(\alpha,\beta)$,
	and all $\gamma\in\Gamma^*$, if
	there are only finitely many pairwise non-bisimilar
	configurations in $\set{q\alpha\beta^i\gamma\mid i\in\Nat}$, then
	$q\alpha\beta^e\gamma\sim q\ppchanged{\alpha}\beta^\omega$ for some $e\leq h(|\Pp|,|\alpha|,|\beta|)$.
\end{theorem}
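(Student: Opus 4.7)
The plan is to combine Corollary~\ref{cor:repetition-is-omega} with the circular-bisimulation Lemma~\ref{lem:equiv-because-loop} to exhibit an elementarily-bounded witness $e$. The first move is a reduction: by Corollary~\ref{cor:repetition-is-omega}, it suffices to find elementarily-bounded indices $i<j$ such that $r\beta^i\gamma \sim r\beta^j\gamma$ holds \emph{simultaneously} for every $r\in\eats{\alpha}(q)$, since this forces $q\alpha\beta^i\gamma \sim q\alpha\beta^\omega$ and we may take $e=i$. The finite-class hypothesis gives, by pigeonhole, a single-$q$ recurrence $q\alpha\beta^i\gamma \sim q\alpha\beta^j\gamma$ in an interval whose length is bounded only by the (a priori unbounded) number of classes, but not the uniform-over-$r$ version we require; the bulk of the work is upgrading the single-$q$ recurrence to this uniform one within an elementary range.

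To do this upgrade, I would invoke Lemma~\ref{lem:equiv-because-loop} with the choice $\mu_1=\beta^e\gamma$, $\mu_2=\beta^\omega$, and $U=V=\eats{\alpha}(q)$, with short top-stack witnesses $\chi_p,\xi_r$ drawn from pop-runs inside $\beta$. By Lemma~\ref{lem:eats-beta} and the linked-pair property, $\eats{\beta}(p)\subseteq\eats{\alpha}(q)$ for every $p\in\eats{\alpha}(q)$, which meets the set-inclusion requirement of Lemma~\ref{lem:equiv-because-loop}. The crux is then to establish the circular hypotheses $p\mu_i \sim r_p\chi_p\nu_i$ and $r\nu_i \sim p_r\xi_r\mu_i$: for $e$ large enough but still elementarily bounded, these equivalences are forced by the finite-class hypothesis via a bisimulation-game analysis along short matched runs whose lengths are controlled by $\Ee\leq\exp(|\Pp|)$ from Lemma~\ref{lem:PDS-paths}; the extraction of $(r_p,\chi_p)$ and $(p_r,\xi_r)$ comes from tracking how pop-moves from a recurring $q$-level pair get matched in the bisimulation game.

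Quantitatively, the bound $h(|\Pp|,|\alpha|,|\beta|)$ will combine the pumping bound $\Ee$, the exponential configuration count $|\Pp|^{z+2}$ from Lemma~\ref{lem:radius}, and a pigeonhole over tuples of bisimulation classes indexed by $r\in\eats{\alpha}(q)$ (at most $\exp(|Q|)$ many relevant tuples in the window), all of which compose to an elementary function. The main obstacle is precisely the step of closing the circular dependency in Lemma~\ref{lem:equiv-because-loop}: one must construct witnesses $(r_p,\chi_p)$ and $(p_r,\xi_r)$ from matched runs in the bisimulation game between recurring $q$-level configurations, and verify that the resulting equivalences hold for every $r\in\eats{\alpha}(q)$ at \emph{once}. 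I expect this to be the most delicate part, requiring iterated extraction (Ramsey-style stabilisation of the matched witnesses) in order to synchronise the behaviour along all pop-branches of $\alpha$ while keeping the resulting $e$ within an elementary bound.
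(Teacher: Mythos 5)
Your opening reduction is sound and matches the paper's strategy in spirit: by Corollary~\ref{cor:repetition-is-omega} it indeed suffices to find two elementarily bounded indices $i<j$ with $r\beta^i\gamma\sim r\beta^j\gamma$ simultaneously for all $r\in\eats{\alpha}(q)$, and you correctly identify Lemma~\ref{lem:equiv-because-loop} as the tool that eventually produces such a uniform recurrence. But the central step — actually establishing the circular hypotheses of Lemma~\ref{lem:equiv-because-loop} within an elementary bound — is exactly where your proposal stops being a proof. Your instantiation $\mu_1=\beta^e\gamma$, $\mu_2=\beta^\omega$, $U=V=\eats{\alpha}(q)$ leaves $\nu_1,\nu_2$ unspecified, and the required hypotheses $p\mu_i\sim r_p\chi_p\nu_i$ and $r\nu_i\sim p_r\xi_r\mu_i$ (with the \emph{same} witnesses $(r_p,\chi_p),(p_r,\xi_r)$ for both $i=1,2$) are equivalences of the same nature as, and no easier than, the statement you are trying to prove; asserting that they are ``forced by the finite-class hypothesis via a bisimulation-game analysis'' is precisely the missing argument, since the finite-class hypothesis gives no elementary bound on anything — that bound is the content of the theorem. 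The ``Ramsey-style stabilisation'' you defer to is not a routine extraction: nothing in your setup guarantees that short witnesses $\chi_p,\xi_r$ exist at all.

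The paper escapes this circularity by a different global architecture that your sketch does not contain. One takes the \emph{minimal} $e$ with $q\alpha\beta^e\gamma\sim q\alpha\beta^\omega$ and argues by contradiction: minimality yields (Lemma~\ref{lem:go-down}) a $\beta^\omega$-avoiding digging sequence and a run $\rho$ popping $\alpha\beta^e$ that visits many pairwise non-bisimilar configurations in \emph{every} window of elementary length; the bisimilarity $q\alpha\beta^e\gamma\sim q\alpha\beta^\omega$ then provides a mimicking run $\eta$ from the $\omega$-approximant, which, because it starts from $\alpha\beta^\omega$, can only realise that local diversity by pushing (Lemma~\ref{lem:go-up}). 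The circular hypotheses are then obtained not by assumption but geometrically: at positions $j$ where $\eta$ has climbed, one takes $\mu_i=\mu_{j_i}$ (the untouched suffix of $\eta(j_i)$) and $\nu_i=\beta^{e-(\dplus(j_{i\bullet})+1)}\gamma$ (two \emph{distinct} tails of the finite tower), and the height/distance bookkeeping (Observation~\ref{O nabla path}, Lemma~\ref{lemma:j}, Lemma~\ref{lem:PDS-paths}) forces the matching moves in the bisimulation game to stay within short stack windows, producing the bounded witnesses $\chi_p,\xi_r$; a pigeonhole over the resulting ``colors'' supplies two positions with identical witnesses, Lemma~\ref{lem:equiv-because-loop} gives $r\nu_1\sim r\nu_2$ for all $r\in V$, and Corollary~\ref{cor:repetition-is-omega} contradicts the $\beta^\omega$-avoiding property. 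Your proposal contains neither the minimal-$e$ contradiction scheme, nor the digging-sequence construction that guarantees local class density, nor the go-up analysis of the mimicking run — and without these there is no mechanism that produces the equivalences you feed into Lemma~\ref{lem:equiv-because-loop}, so the proof as proposed does not go through.
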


\sg{$h$ is five-fold exponentially bounded.}

Heading towards proving Theorem~\ref{thm:main}, as a first step we observe
that Theorem~\ref{thm:technical} implies
that in case $\quotient{\Ll(\Pp,s_0)}$ is finite and $(\alpha,\beta)$ is a linked pair,
then there can only be an elementary number of bisimulation
classes among $\{q\alpha\beta\gamma\mid \gamma\in\Gamma^*\land\forall
i\in\Nat.\,s_0\to^*q\alpha\beta^i\gamma\}$.
This is formalized by the following lemma.

\begin{restatable}{lemma}{LemFour}\label{lem:4}
	There exists an elementary function $\lambda\colon\Nat^3\to\Nat$ such that for all $q\in Q$ and all 
	linked pairs $(\alpha,\beta)$,
	if $\quotient{\Ll(\Pp,s_0)}$ is finite for some initial configuration $s_0$,
	then there are at most $\lambda(|\Pp|,|\alpha|,|\beta|)$ pairwise non-bisimilar configurations in $\set{q\alpha\beta\gamma\mid\gamma\in\Gamma^*\land\allowbreak\forall i\in\Nat.\, s_0\to^*q\alpha\beta^i\gamma}$.
\end{restatable}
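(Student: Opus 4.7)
The plan is to derive the lemma from Theorem~\ref{thm:technical} applied with the ``inner'' linked pair $(\beta,\beta)$, and then to package the resulting information into a signature of elementary size that determines the bisimulation class of $q\alpha\beta\gamma$ up to $\sim$.

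First I would set up the relevant linked-pair bookkeeping. Since $(\alpha,\beta)$ is a linked pair, the set $U:=\eats{\alpha}(q)\subseteq Q$ is closed under $\eats{\beta}$: indeed $\eats{\beta}(U)=\eats{\alpha\beta}(q)=\eats{\alpha}(q)=U$. Moreover $(\beta,\beta)$ is itself a linked pair, since by assumption $\eats{\beta}=\eats{\beta\beta}$. Writing $D$ for the set $\{\gamma\in\Gamma^*\mid\forall i\in\Nat.\,s_0\to^*q\alpha\beta^i\gamma\}$ appearing in the lemma, each configuration of the form $r\beta^i\gamma$ with $r\in U$ and $\gamma\in D$ is reachable from $s_0$, so $\{r\beta^i\gamma\mid i\in\Nat\}$ has only finitely many bisimulation classes. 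Hence Theorem~\ref{thm:technical}, applied with $q$ replaced by $r$, linked pair $(\beta,\beta)$, and suffix $\gamma$, yields an index $e_{r,\gamma}\leq h':=h(|\Pp|,|\beta|,|\beta|)$ with $r\beta^{e_{r,\gamma}+1}\gamma\sim r\beta^\omega$. Reapplying the same theorem to the suffix $\beta^k\gamma$ for arbitrary $k$ shows furthermore that the set $\{i\mid r\beta^i\gamma\sim r\beta^\omega\}$ is syndetic with gap at most $h'+1$.

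Next I would associate to each $\gamma\in D$ a signature $\sigma_\gamma$ lying in an elementarily bounded space. The natural candidate records, for every $r\in U$ and every $k\in[0,h'+1]$, the equivalence relation $(r_1,k_1)\equiv_\gamma(r_2,k_2)$ iff $r_1\beta^{k_1}\gamma\sim r_2\beta^{k_2}\gamma$, together with a marker indicating which $\equiv_\gamma$-classes coincide with the $\gamma$-independent classes $[r\beta^\omega]$. Since $|U|\leq|Q|$ and $h'$ is elementary, the number of such signatures is elementary in $|\Pp|,|\alpha|,|\beta|$, giving the bound $\lambda(|\Pp|,|\alpha|,|\beta|)$.

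The main obstacle is then to verify that $\sigma_\gamma=\sigma_{\gamma'}$ forces $q\alpha\beta\gamma\sim q\alpha\beta\gamma'$. By Lemma~\ref{lem:equiv-because-equiv-below} applied to the stack prefix $\alpha\beta$, it suffices to show $r\gamma\sim r\gamma'$ for every $r\in U$. I would exhibit an explicit candidate bisimulation pairing $r\beta^k\gamma$ with $r'\beta^{k'}\gamma'$ whenever $(r,k)\equiv_\gamma(r',k')$ according to the common signature, with the fixed class $[r\beta^\omega]$ serving as an ``anchor'' at the $\omega$-marked positions. The back-and-forth check uses closure of $U$ under $\eats{\beta}$, the equality of the $\equiv_\gamma$- and $\equiv_{\gamma'}$-structures, and the $\omega$-markers to handle the escape to $[r\beta^\omega]$; the delicate step will be ``closing'' the bisimulation when the game pops all the way through $\beta^k$ into $\gamma$ or $\gamma'$, which I expect to handle by a circular bisimulation argument along the lines of Lemma~\ref{lem:equiv-because-loop}, taken with $U=V=\eats{\alpha}(q)$ and $\chi_p=\xi_p=\beta$, so that Theorem~\ref{thm:technical} supplies the $\mu_i$, $\nu_i$ needed to feed that lemma.
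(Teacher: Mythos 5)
Your first steps are fine (closure of $U=\eats{\alpha}(q)$ under $\eats{\beta}$, the fact that $(\beta,\beta)$ is a linked pair, and the application of Theorem~\ref{thm:technical} to each $r\beta^i\gamma$), but the heart of your argument has a genuine gap: you never prove, and cannot prove along the proposed route, that equal signatures force $q\alpha\beta\gamma\sim q\alpha\beta\gamma'$. Your signature $\sigma_\gamma$ is purely \emph{relational}: it records which configurations $r\beta^k\gamma$ (for bounded $k$) are bisimilar to each other and which coincide with the $\omega$-classes $\class{r'\beta^\omega}$, but it does not record the actual bisimulation classes sitting at the bottom. In particular your intermediate claim ``$\sigma_\gamma=\sigma_{\gamma'}$ implies $r\gamma\sim r\gamma'$ for all $r\in U$'' is false. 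Sketch of a counterexample: take $\beta=Z$, two bottom symbols $A,B$ with $rA$ looping on an action $c_A$ and $rB$ on a different action $c_B$, rules $rZ\to r$ (pop), $rZ\to rZZ$, and $rZ\to t_AZ$, $rZ\to t_BZ$, all on the same action $a$, where $t_A$ (resp.\ $t_B$) loops on $c_A$ (resp.\ $c_B$) regardless of the stack, plus a generator making all $rZ^iA$ and $rZ^iB$ reachable from one $s_0$. The reachable system has five bisimulation classes, $rZ^kA\sim rZ^\omega\sim rZ^kB$ for all $k\geq1$, so the signatures of $\gamma=A$ and $\gamma'=B$ coincide (a single unmarked class at level $0$, everything else $\omega$-marked), yet $rA\not\sim rB$. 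So the back-and-forth construction you propose must break exactly at the ``delicate step'' you flag (popping through into $\gamma$ or $\gamma'$), and the appeal to Lemma~\ref{lem:equiv-because-loop} with $\chi_p=\xi_p=\beta$ cannot rescue it: that lemma needs \emph{cross} hypotheses of the form $p\mu_i\sim r_p\chi_p\nu_i$ and $r\nu_i\sim p_r\xi_r\mu_i$ simultaneously for both $i\in\set{1,2}$, i.e.\ equivalences tying the $\gamma$-family and the $\gamma'$-family together through the \emph{same} states and stack contents, and neither Theorem~\ref{thm:technical} nor your internal signature supplies these.

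What is missing is the anchoring idea that the paper's proof is built on. There, one assigns to $\gamma$ the tuple $t_\gamma=(\class{r\gamma})_{r\in\eats{\alpha\beta}(q)}$ of \emph{actual} classes (so that $t_\gamma=t_{\gamma'}$ gives $q\alpha\beta\gamma\sim q\alpha\beta\gamma'$ immediately by Lemma~\ref{lem:equiv-because-equiv-below}), and then bounds the number of possible tuples: by Theorem~\ref{thm:technical}, $q\alpha\beta^e\gamma\sim q\alpha\beta^\omega$ for some elementary $e$, and by Lemma~\ref{lem:PDS-paths} each $r\gamma$ with $r\in\eats{\alpha\beta}(q)=\eats{\alpha\beta^e}(q)$ is reached from $q\alpha\beta^e\gamma$ within an elementary number of steps; hence every class $\class{r\gamma}$ is realized within that distance of the single configuration $q\alpha\beta^\omega$, and the bounded out-degree of $\Pp$ gives an elementary bound on the number of candidate classes, hence on the number of tuples. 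Your proposal replaces this external anchor by internal relational data precisely because there is no obvious small name space for the classes $\class{r\gamma}$ — but the anchor is exactly what creates that name space, and without it the counting argument does not determine the class of $q\alpha\beta\gamma$.
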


\sg{$\lambda$ is six-fold exponentially bounded.}

\begin{proof}
	Let us first sketch the proof.
	Clearly, for every configuration of the form $q\alpha\beta\gamma$
	that is in the set specified in Lemma~\ref{lem:4} we have
	$q\alpha\beta^e\gamma\sim q\alpha\beta^\omega$,
	where $e$ is an elementarily bounded number (cf.~Theorem~\ref{thm:technical}).
	Thus, every configuration $r\gamma$ with
	$r\in\eats{\alpha\beta^e}(q)=\eats{\alpha\beta}(q)$
	is reachable from $q\alpha\beta^e\gamma$ in at most $e'$ steps and thus bisimilar to a
	configuration reachable from $q\alpha\beta^\omega$ in at most $e'$ steps, for some
	elementary constant $e'$.
	Since moreover $\Ll(\Pp)$ has out-degree at most $|\Pp|$ there
	is only an elementary number of configurations in distance at most $e'$
	from $q\alpha\beta^\omega$.
	Hence, there are at most elementarily many tuples of bisimulation classes $[r\gamma]_{r\in\eats{\alpha\beta}(q)}$, where $\gamma$ ranges over all permissible stack contents.
	But since the bisimulation class
	of each such permissible $\gamma$ is determined by $q,\alpha,\beta$ and
	$[r\gamma]_{r\in\eats{\alpha\beta}(q)}$ the lemma follows.

	Coming to the details, we take
	\begin{align*}
		\lambda(|\Pp|,|\alpha|,|\beta|)&=(|\Pp|+1)^{(\lambda_1+1)\cdot|\Pp|}\,,&&\mbox{where}\\
		\lambda_1&=(|\alpha|+|\beta|\cdot h(|\Pp|,|\alpha|,|\beta|))\cdot\Ee\,,
	\end{align*}
	and where the constant $\Ee$ is taken from Lemma~\ref{lem:PDS-paths},
	and the function $h$ is taken from Theorem~\ref{thm:technical}.

	\sg{$\lambda$ is six-fold exponentially bounded.}

	Fix an initial configuration $s_0$, a control state $q\in Q$ and a linked pair $(\alpha, \beta)$.
	Let $\Omega$ be the set of those stack contents $\gamma\in\Gamma^*$ for which all configurations of the form $q\alpha\beta^i\gamma$ (where $i\in\Nat$) are reachable from $s_0$.
	To every $\gamma\in\Omega$ we assign a tuple of bisimulation classes $t_\gamma=(\class{r\gamma})_{r\in\eats{\alpha\beta}(q)}$.
	Observe that if $t_\gamma=t_{\gamma'}$ for some $\gamma,\gamma'\in\Omega$, then $q\alpha\beta\gamma\sim q\alpha\beta\gamma'$ by Lemma~\ref{lem:equiv-because-equiv-below}.
	Thus the maximal number of pairwise non-bisimilar configurations in $\set{q\alpha\beta\gamma\mid\gamma\in\Omega}$
	is bounded by the maximal number of different tuples $t_\gamma$.
	It remains to bound the latter.

	First, we claim that, for every $r\in\eats{\alpha\beta}(q)$ and for every $\gamma\in\Omega$, some configuration bisimilar to $r\gamma$ is reachable from $q\alpha\beta^\omega$ in at most $\lambda_1$ steps.
	Indeed, notice that there are only finitely many pairwise non-bisimilar configurations in $\set{q\alpha\beta^i\gamma\mid i\in\Nat}$
	(since $\gamma\in\Omega$, they all belong to $\Ll(\Pp,s_0)$, which is bisimulation-finite).
	Thus, by Theorem~\ref{thm:technical}, $q\alpha\beta^e\gamma\sim q\alpha\beta^\omega$ for some $e\leq h(|\Pp|,|\alpha|,|\beta|)$.
	Moreover, $q\alpha\beta^e\to^*r$ because $r\in\eats{\alpha\beta}(q)=\eats{\alpha\beta^e}(q)$ (the equality holds because $(\alpha,\beta)$ is a linked pair).
	By Lemma~\ref{lem:PDS-paths}, this implies that $\dist(q\alpha\beta^e,r)\leq(|\alpha|+|\beta|\cdot e)\cdot\Ee\leq\lambda_1$,
	hence also $\dist(q\alpha\beta^e\gamma,r\gamma)\leq\lambda_1$.
	Because $q\alpha\beta^e\gamma\sim\alpha\beta^\omega$, from $q\alpha\beta^\omega$ we can reach a configuration bisimilar to $r\gamma$ in the same number of steps,
	namely at most $\lambda_1$.

	Next, notice that every configuration has at most $|\Delta|\leq|\Pp|$ successors (by applying a particular transition to a particular configuration we obtain a particular successor).
	In effect, there are at most $(|\Pp|+1)^{\lambda_1+1}$ configurations reachable from $q\alpha\beta^\omega$ in at most $\lambda_1$ steps.
	Recalling that for all $r\in\eats{\alpha\beta}(q)$ and $\gamma\in\Omega$ a configuration bisimilar to $r\gamma$ is reachable from $q\alpha\beta^\omega$ in at most $\lambda_1$ steps,
	this means that there are at most $(|\Pp|+1)^{(\lambda_1+1)\cdot|Q|}\leq\lambda(|\Pp|,|\alpha|,|\beta)$ different tuples $t_\gamma=(\class{r\gamma})_{r\in\eats{\alpha\beta}(q)}$.
	This finishes the proof.
\end{proof}

Theorem~\ref{thm:main} is now easily shown.
Indeed, Lemma~\ref{lem:decompose} implies that
all but elementarily many small configurations
are in the set specified in Lemma~\ref{lem:4},
and thus they result in only elementarily many different bisimulation classes by Lemma~\ref{lem:4}.
A formal proof of this implication is given below.
\begin{figure*} 
	\begin{center}
		\import{pics/}{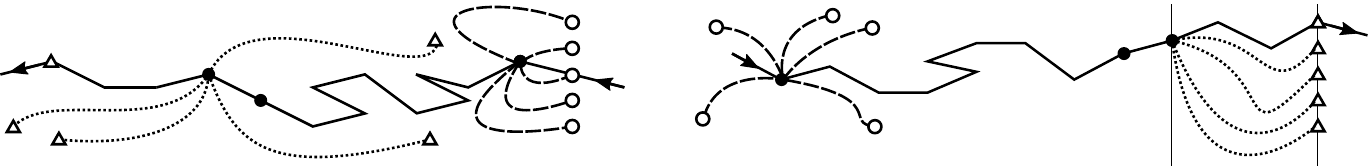_tex_ok}
	\end{center}
	\caption{A schematic illustration of runs $\eta$ and $\rho$ (stack grows to the left)}
	\label{fig:loop}
\end{figure*}

\begin{proof}[Proof of Theorem~\ref{thm:main}]
	Recall the constant $\Ff$ from Lemma~\ref{lem:decompose}.
	Let
	\begin{align*}
		S_{\mathit{small}}=\set{q\delta\in Q\times\Gamma^*\mid |\delta|<\Ff+|\alpha_0|}\,,
	\end{align*}
	and, for every control state $q\in Q$ and every linked pair $(\alpha,\beta)$ with $|\alpha|,|\beta|\leq\Ff$, let
	\begin{align*}
		S_{q,\alpha,\beta}=\set{q\alpha\beta\gamma\mid\gamma\in\Gamma^*\land\forall i\in\Nat.\, p_0\alpha_0\to^*q\alpha\beta^i\gamma}\,.
	\end{align*}

	Observe first that every configuration $q\delta$ of $\Pp$ reachable from $p_0\alpha_0$ belongs to one of the sets defined above.
	Indeed, if $|\delta|\geq\Ff+|\alpha_0|$ (i.e., if $q\delta\not\in S_{\mathit{small}}$),
	then by Lemma~\ref{lem:decompose} the configuration can be written as $q\delta=q\alpha\beta\gamma$, where $(\alpha,\beta)$ is a linked pair, $|\alpha|,|\beta|\leq\Ff$,
        and all configurations of the form $q\alpha\beta^i\gamma$ (where $i\in\Nat$) are reachable from $q_0\alpha_0$,
        meaning that $q\delta\in S_{q,\alpha,\beta}$.
        Thus, it remains to bound the maximal number of pairwise non-bisimilar configurations in the union of all the above sets.

	\sg{$\Ff$ is doubly exponentially bounded.}

	Clearly,
	\begin{align*}
		|S_{\mathit{small}}|\leq |Q|\cdot(|\Gamma|+1)^{\Ff+|\alpha_0|-1}\leq|\Pp|^{\Ff+|\alpha_0|}\,.
	\end{align*}
	Moreover, there are at most
	\begin{align*}
		|Q|\cdot(|\Gamma|+1)^\Ff\cdot(|\Gamma|+1)^\Ff\leq|\Pp|^{2\cdot\Ff+1}
	\end{align*}
	triples $(q,\alpha,\beta)$ where $q\in Q$, and $(\alpha,\beta)$ is a linked pair, and $|\alpha|,|\beta|\leq\Ff$.
	For every such a triple, by Lemma~\ref{lem:4}, there are at most $\lambda(|\Pp|,|\alpha|,|\beta|)\leq\lambda(|\Pp|,\Ff,\Ff)$ pairwise non-bisimilar configurations in $S_{q,\alpha,\beta}$.
	Thus, altogether in all the sets, there are at most
	\begin{align}
		\varphi(|\Pp|,|\alpha_0|)=|\Pp|^{\Ff+|\alpha_0|}+|\Pp|^{2\cdot\Ff+1}\cdot\lambda(|\Pp|,\Ff,\Ff)\label{eq:def-phi}
	\end{align}
	pairwise non-bisimilar configurations.
\end{proof}

\sg{$\varphi$ is six-fold exponentially bounded.}

\section{Overview of the proof\ppchanged{ of Theorem~\ref{thm:technical}}}\label{sec:overview}

This section is devoted to sketching the proof of Theorem~\ref{thm:technical}; more details on this proof are provided in subsequent sections.

Fix $\alpha,\beta,\gamma,q$ as in the statement of the theorem: $(\alpha,\beta)$ is a linked pair, 
and there are only finitely many pairwise non-bisimilar configurations in $\set{q\alpha\beta^i\gamma\mid i\in\Nat}$.
Let $V=\eats{\alpha}(q)$;
this is the set of control states reachable after popping the stack content $\alpha$ from the fixed control state $q$.
Because $(\alpha,\beta)$ is a linked pair, after popping a stack content of the form $\alpha\beta^i$, we can again only reach control states from $V$;
more precisely, $\eats{\alpha\beta^i}=\eats{\beta}(V)=V$, for all $i\in\Nat$.

For simplicity of the description, assume additionally in this section that $\eats{\beta}(r)=V$ for all $r\in V$,
that is, that from every control state $r$ in $V$ we can reach every other control state of $V$ (notice that if we can reach it after popping multiple copies of $\beta$,
then we can reach it also after popping a single copy of $\beta$, because $\eats{\beta\beta}=\eats{\beta}$).
This is not true in general, and causes some technical difficulties in the actual proof, presented in next sections;
in particular, in order to avoid this assumption, we need to consider so-called monochromatic intervals.

We now present particular steps of the proof.
Runs and configurations described below are depicted on Figure~\ref{fig:loop}.
\begin{enumerate}
\item
	Take the smallest number $e\in\Nat$ such that $q\alpha\beta^e\gamma\sim q\alpha\beta^\omega$.
	Assuming that $e$ is large enough (namely, larger than $h(|\Pp|,|\alpha|,|\beta|)$ for a function $h$ that is defined in the actual proof),
	we are heading towards a contradiction.
\item\label{pkt:2}
	We first observe that while ``going down'' from $q\alpha\beta^e\gamma$ we can visit many bisimulation classes.
	More precisely, to every number $d\in[0,s]$
	we can assign a tuple of bisimulation classes $(\class{r\beta^{e-d}\gamma})_{r\in V}$, reachable after popping $\alpha$ and $d$ copies of $\beta$ from $q\alpha\beta^e\gamma$.
	If the same tuple was assigned to two distinct values of $d$, say to $d_1$ and $d_2$,
	then by Corollary~\ref{cor:repetition-is-omega} we would have $q\alpha\beta^{e-{d_1}}\gamma\sim q\alpha\beta^{e-{d_2}}\gamma\sim q\alpha\beta^\omega$, contrarily to the minimality of $e$.
	It follows that among $r\alpha\beta^{e-d}\gamma$, for different $r$
	and $d$, we have a lot of pairwise non-bisimilar configurations.
\item\label{pkt:3}
	Next, we construct a run $\rho$ from $q\alpha\beta^e\gamma$ that ``quickly'' (as specified in Lemma~\ref{lem:PDS-paths}) pops $\alpha\beta^e$.
	Let $\Nabla{d}=r_d\beta^{e-d}\gamma$ be the configuration of $\rho$ at ``depth $d$'', that is, after popping $\alpha\beta^d$.
	Due to Point 2, we can ensure that among $\Nabla{d}$ there are many pairwise non-bisimilar configurations.
	Moreover, we can ensure this also locally: for every interval $[a,b]$, among $\Nabla{d}$ with $d\in[a,b]$ there are relatively many pairwise non-bisimilar configurations
	(where their number depends in an appropriate way on $b-a$).
	Points 2-3 are shown in Section~\ref{sec:go-down}.
\item\label{pkt:4}
	Since $q\alpha\beta^e\gamma\sim q\alpha\beta^\omega$, there must exist a run $\eta$ from $q\alpha\beta^\omega$ that mimics $\rho$, that is, such that $\rho(i)\sim\eta(i)$ for all positions $i$.
	Notice though, that after popping a copy of $\beta$ from $\beta^\omega$ we still have the same stack content $\beta^\omega$.
	In effect, $\eta$ cannot visit many pairwise non-bisimilar configurations by only
	popping something from $q\alpha\beta^\omega$; it has to push a lot.
	We may say that $\eta$ goes up (while $\rho$ goes down).
	In particular, we may find a pair of positions $j,j_\bullet$ (with $j<j_\bullet$) such that the subrun $\eta[j,j_\bullet]$ is augmenting
	and climbs up sufficiently high but not too high, and
	simultaneously its length $j_\bullet-j$ is relatively small.
	This is shown in Section~\ref{sec:go-up}.
\item
	While looking at $\rho$, maybe $\rho(j_\bullet)$ is in the middle of
	popping some $\beta$.
	For this reason, we move a bit forward, and we consider the next depth $\dplus(j_\bullet)$ visited by $\rho$ after $\rho(j_\bullet)$
	(formally, $\dplus(j_\bullet)$ is the smallest number $d$ such that $\rho$ visits $\Nabla{d}$ after $\rho(j_\bullet)$).
\item\label{pkt:6}
	We now want to exhibit a situation as in Lemma~\ref{lem:equiv-because-loop}.
	To this end, let us write $\eta(j)$ as $p_jX\mu$ (where $X\in\Gamma$ is a single stack symbol), and let $U=\eats{X}(p_j)$.
	\begin{enumerate}
	\item
		On the one side, recall that from $\rho(j_\bullet)$ there is a (short) run to $\Nabla{\dplus(j_\bullet)}$ (namely, a subrun of $\rho$),
		and from $\Nabla{\dplus(j_\bullet)}=r_{\dplus(j_\bullet)}\beta^{e-\dplus(j_\bullet)}\gamma$ there is an (again, short) run to $r\beta^{e-(\dplus(j_\bullet)+1)}\gamma$ for every $r\in V$
		(because, by assumption, $\eats{\beta}(r_{\dplus(j_\bullet)})=V$).
		Because $\rho(j_\bullet)\sim\eta(j_\bullet)$, for every $r\in V$ there is a corresponding (short) run from $\eta(j_\bullet)$ to a configuration bisimilar to $r\beta^{e-(\dplus(j_\bullet)+1)}\gamma$.
		Recall that $\eta(j_\bullet)$ is assumed to be appropriately higher than $\eta(j)$, so because the run is short, it cannot go down below $\eta(j)$ (and simultaneously it cannot also go up too much).
		In effect, for every $r\in V$ we have that $r\beta^{e-(\dplus(j_\bullet)+1)}\gamma\sim p_r\xi_r\mu$ for some control state $p_r$ and some (relatively small) stack content $\xi_r$.
	\item
		On the other side, by Lemma~\ref{lem:PDS-paths} for every $p\in U$ there is a very short run from $\eta(j)=p_jX\mu$ to $p\mu$.
		Because $\rho(j)\sim\eta(j)$, there is a corresponding (hence likewise short) run from $\rho(j)$ to a configuration bisimilar to $p\mu$.
		Recall that the stack content of $\rho(j)$ is higher than $\beta^{e-(\dplus(j_\bullet)+1)}\gamma$.
		It can be shown that the extremely short run from $\rho(j)$ to the configuration bisimilar to $p\mu$ cannot nivellate this difference,
		and thus we have $p\mu\sim r_p\chi_p\beta^{e-(\dplus(j_\bullet)+1)}\gamma$ for some control state $r_p$ and some stack content $\chi_p$
		(where the height of $\chi_p$ can be bounded appropriately).
	\end{enumerate}
	Thus, the assumptions of Lemma~\ref{lem:equiv-because-loop} are satisfied (with the exception that we only have one $\mu$ and one $\nu=\beta^{e-(\dplus(j_\bullet)+1)}$,
	while the lemma talks about $\mu_1,\mu_2,\nu_1,\nu_2$.
\item
	Recall that in Point~\ref{pkt:4} we have selected a short subrun of $\eta$ (described by $j$ and $j_\bullet$) that climbs up.
	In fact, the run $\eta$ is very long, and we can find not only one such subrun, but a lot of disjoint subruns having the above properties.
	Each of those subruns is described by some position $j$ (and by a corresponding position $j_\bullet$ coming soon after $j$).
	To every $j$ let us assign the tuple $(U,((r_p,\chi_p))_{p\in U},((p_r,\xi_r))_{r\in V})$, as defined in Point~\ref{pkt:6}.
	As already mentioned, the height of all the stack contents $\chi_p$ and $\xi_r$ can be bounded, so the tuple comes from a finite domain.
	If the length of $\eta$, hence the number of positions $j$, is large enough,
	by the pigeonhole principle we have two distinct positions $j_1,j_2$ (together with the corresponding positions $j_{1\bullet},j_{2\bullet}$) to which the same tuple was assigned.
	It causes that Lemma~\ref{lem:equiv-because-loop} can be applied; it implies that $r\beta^{e-(\dplus(j_{1\bullet})+1)}\gamma\sim r\beta^{e-(\dplus(j_{2\bullet})+1)}\gamma\sim r\beta^\omega$ for all $r\in V$.
	This contradicts the minimality of $e$, and thus finishes the proof.
\end{enumerate}

\section{Runs from $q\alpha\beta^h$ visit many classes}\label{sec:go-down}

In this section (namely, in Lemma~\ref{lem:go-down}) we prove that while ``going down'' from $q\alpha\beta^e\gamma$ it is possible to visit many bisimulation classes.

For analyzing control states along runs that pop a stack content
of the form $\alpha\beta^h$ we introduce a notion of a digging sequence.

\begin{definition}\label{def:digging-sequence}
	A {\em digging sequence}
	for $(q,\alpha,\beta)\in Q\times\Gamma^*\times\Gamma^*$
	is a sequence $(r_0,r_1,\dots,r_h)\in Q^{h+1}$ (for some $h\in\Nat$)
	that satisfies $r_0\in\eats{\alpha}(q)$,
	and $r_d\in\eats{\beta}(r_{d-1})$
	for all $d\in[1,h]$.
\end{definition}

Note that any such a digging sequence witnesses that
there is a run $\rho$ from $q\alpha\beta^h$
that visits $r_0\beta^h$,
then $r_1\beta^{h-1}$,
in fact every configuration $r_d\beta^{h-d}$ for $d\in[0,h]$,
thus ending in $r_h$.
The following definition captures the situation when
the intermediate configurations of a digging sequence are all
non-bisimilar to certain $\omega$-approximants
with stack content $\beta^\omega$.

\begin{definition}
	Let $q\alpha\beta^e\gamma$ (with $e\geq 1$) be a configuration of $\Pp$.
	We say that a digging sequence $(r_0,r_1,\dots,r_h)$ for $(q,\alpha,\beta)$ is \emph{$\beta^\omega$-avoiding}, if
	\begin{enumerate}
	\item	there exists $r'\in\eats{\beta}(r_0)$ such that $r'\beta^{e-1}\gamma\not\sim r'\beta^\omega$, and
	\item	$r_d\beta^{e-d}\gamma\not\sim r_d\beta^\omega$
		for all $d\in[1,h]$.
	\end{enumerate}
\end{definition}

Notice that we do not require $r_d\beta^{e-d}\gamma\not\sim r_d\beta^\omega$ for $d=0$.
If $h\geq 1$, Point 1 follows from Point 2 while taking $r'=r_1$
(but we need Point 1, if we want Lemma~\ref{lem:beta-omega-avoiding} to work for $h=0$).

The following lemma states that if
$q\alpha\beta^{e-d}\gamma\not\sim q\alpha\beta^\omega$ for all $d\in[1,e]$, then
there always exists
a $\beta^\omega$-avoiding digging sequence from $q\alpha\beta^e\gamma$, and any such a sequence of length $h$ that
is not maximal (i.e., $h<e$) can be prolonged to an $\beta^\omega$-avoiding digging
sequence of length $h+1$.

\begin{restatable}{lemma}{LemBetaOmegaAvoiding}\label{lem:beta-omega-avoiding}
	Let $(\alpha,\beta)$ be a linked pair, and let $q\alpha\beta^e\gamma$ (with $e\geq 1$) be a configuration of $\Pp$.
	If $q\alpha\beta^{e-d}\gamma\not\sim q\alpha\beta^\omega$ for all $d\in[1,e]$, then
	\begin{enumerate}
	\item	there exists a $\beta^\omega$-avoiding digging sequence $(r_0)$
		for $(q,\alpha,\beta)$, and
	\item	every $\beta^\omega$-avoiding digging sequence $(r_0,r_1,\dots,r_h)$
		for $(q,\alpha,\beta)$ with $h<e$
		can be extended by a control state $r_{h+1}$ to yield a longer $\beta^\omega$-avoiding digging sequence for $(q,\alpha,\beta)$.
	\end{enumerate}
\end{restatable}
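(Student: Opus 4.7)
The plan is to leverage the contrapositive of Lemma~\ref{lem:equiv-because-equiv-below}, which says that whenever $q\alpha\gamma \not\sim q\alpha\gamma'$, some witness $r \in \eats{\alpha}(q)$ satisfies $r\gamma \not\sim r\gamma'$. This is the only non-trivial ingredient needed; everything else is bookkeeping using the linked-pair identity $\eats{\beta}(\eats{\alpha}(q)) = \eats{\alpha\beta}(q) = \eats{\alpha}(q)$.

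For Part~(1), I would specialize the hypothesis to $d = 1$, yielding $q\alpha\beta^{e-1}\gamma \not\sim q\alpha\beta^\omega$. The contrapositive just mentioned produces some $r' \in \eats{\alpha}(q)$ with $r'\beta^{e-1}\gamma \not\sim r'\beta^\omega$. By the linked-pair identity, $r'$ lies in $\eats{\beta}(\eats{\alpha}(q))$, so there is some $r_0 \in \eats{\alpha}(q)$ with $r' \in \eats{\beta}(r_0)$. Then $(r_0)$ is a digging sequence for $(q,\alpha,\beta)$, and it is $\beta^\omega$-avoiding, witnessed by $r'$.

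For Part~(2), given a $\beta^\omega$-avoiding digging sequence $(r_0, \dots, r_h)$ with $h < e$, I would split on whether $h = 0$. When $h = 0$, condition~(1) of the definition already supplies a state $r' \in \eats{\beta}(r_0)$ with $r'\beta^{e-1}\gamma \not\sim r'\beta^\omega$, and choosing $r_1 := r'$ immediately gives a longer $\beta^\omega$-avoiding sequence. When $h \geq 1$, the known non-equivalence $r_h\beta^{e-h}\gamma \not\sim r_h\beta^\omega$ can be rewritten as $r_h\beta \cdot \beta^{e-h-1}\gamma \not\sim r_h\beta \cdot \beta^\omega$ (using $\beta \cdot \beta^\omega = \beta^\omega$), and the contrapositive of Lemma~\ref{lem:equiv-because-equiv-below}, now applied with $\beta$ in place of $\alpha$, yields some $r_{h+1} \in \eats{\beta}(r_h)$ with $r_{h+1}\beta^{e-h-1}\gamma \not\sim r_{h+1}\beta^\omega$. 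All conditions already established for the original sequence transfer unchanged.

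The only subtlety is this case split in Part~(2), which exists precisely because condition~(1) of the $\beta^\omega$-avoiding definition is phrased via an auxiliary witness $r'$ in order that the notion make sense even when the digging sequence has length zero. That phrasing is exactly what lets the ``base'' of the extension work: the witness produced in Part~(1) is what one uses as $r_1$ in the first extension step. Beyond reconciling these two forms of the defining conditions, there is no real obstacle.
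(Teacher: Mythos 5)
Your proposal is correct and follows essentially the same route as the paper's proof: both parts hinge on (the contrapositive of) Lemma~\ref{lem:equiv-because-equiv-below} together with the linked-pair identity $\eats{\alpha\beta}(q)=\eats{\beta}(\eats{\alpha}(q))=\eats{\alpha}(q)$, with the same case split on $h=0$ versus $h\geq 1$ in Part~(2). The only difference is cosmetic: in Part~(1) the paper argues by contradiction over all candidate $r_0$, while you extract the witness $r'$ directly and pull it back to some $r_0$ with $r'\in\eats{\beta}(r_0)$.
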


\begin{proof}
	We start by proving Point 1.
	To this end, we should find a control state $r_0\in\eats{\alpha}(q)$ such that there exists $r'\in\eats{\beta}(r_0)$ satisfying $r'\beta^{e-1}\gamma\not\sim r'\beta^\omega$.
	Suppose that there is no such a control state $r_0$.
	This means that for all $r_0\in\eats{\alpha}(q)$ and for all $r'\in\eats{\beta}(r_0)$ we have $r'\beta^{e-1}\gamma\sim r'\beta^\omega$.
	However $\bigcup_{r_0\in\eats{\alpha}(q)}\eats{\beta}(r_0)=\eats{\alpha\beta}(q)=\eats{\alpha}(q)$,
	so $r'\beta^{e-1}\gamma\sim r'\beta^\omega$ for all $r'\in\eats{\alpha}(q)$.
	By Lemma~\ref{lem:equiv-because-equiv-below} this implies that $q\alpha\beta^{e-1}\gamma\sim q\alpha\beta^\omega$,
	contrarily to assumptions of the lemma.
	Thus, our supposition was false; there necessarily exists a control state $r_0$ as needed.

	We now come to Point 2.
	Take some $h<e$, and some $\beta^\omega$-avoiding digging sequence $(r_0,\dots,r_h)$.
	We should find a control state $r_{h+1}\in\eats{\beta}(r_h)$ such that $r_{h+1}\beta^{e-(h+1)}\gamma\not\sim r_{h+1}\beta^\omega$.
	If $h=0$, the existence of such a control state $r_1$ follows from Point 1 of Definition~\ref{def:digging-sequence}.
	In the case of $h\geq 1$, by Definition~\ref{def:digging-sequence} we have that $r_h\beta^{e-h}\gamma\not\sim r_h\beta^\omega$.
	But observe that if $r_{h+1}\beta^{e-(h+1)}\gamma\sim r_{h+1}\beta^\omega$ for all $r_{h+1}\in\eats{\beta}(r_h)$,
	then $r_h\beta^{e-h}\gamma\sim r_h\beta^\omega$ by Lemma~\ref{lem:equiv-because-equiv-below}, contradicting the above.
	This implies the existence of a control state $r_{h+1}$ as required.
\end{proof}

The following central lemma of this section states that if
$q\alpha\beta^{e-d}\gamma\not\sim q\alpha\beta^\omega$ for all $d\in[1,e]$, then there exists an $\omega$-avoiding digging
sequence from $q\alpha\beta^e\gamma$ in which every sufficiently long
subsequence contains
many different bisimulation classes.

\begin{lemma}\label{lem:go-down}
	There exists an elementary function $\iota\colon\Nat^2\rightarrow\Nat$
	such that for all $e\in\Nat$,
	all linked pairs $(\alpha,\beta)$, and all configurations $q\alpha\beta^e\gamma$ of $\Pp$ (with $e\geq 1$)
	such that
	$q\alpha\beta^{e-d}\gamma\not\sim q\alpha\beta^\omega$
	for all $d\in[1,e]$
	there is a $\beta^\omega$-avoiding digging sequence
	$(r_0,r_1,\dots, r_e)$ for $(q,\alpha,\beta)$
	such that the following holds:
	for every $k\in\Nat$ and every interval $[a,b]\subseteq[0,e]$ with $|[a,b]|>\iota(|\Pp|,k)$
	there are more than $k$ pairwise non-bisimilar configurations in
	$\set{r_d\beta^{e-d}\gamma\mid d\in\nobreak[a,b]}$.
\end{lemma}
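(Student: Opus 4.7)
The plan is to build the $\beta^\omega$-avoiding digging sequence $(r_0, r_1, \dots, r_e)$ inductively using Lemma~\ref{lem:beta-omega-avoiding}, and then verify the local-diversity property by a pigeonhole contradiction whose search space is elementarily bounded.

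Set $V := \eats{\alpha}(q)$; since $(\alpha,\beta)$ is a linked pair one has $\eats{\beta}(V) = V$, so every digging sequence for $(q,\alpha,\beta)$ stays in $V$. For $d \in [0,e]$ put $T_d := ([r\beta^{e-d}\gamma])_{r \in V}$. The key preliminary observation is that $T_0, \dots, T_e$ are pairwise distinct: if $T_{d_1} = T_{d_2}$ for some $d_1 < d_2 \in [0,e]$, then Lemma~\ref{lem:repetition-is-omega} applied to the closed set $U = V$ gives $r\beta^{e-d_2}\gamma \sim r\beta^\omega$ for every $r \in V$; Lemma~\ref{lem:equiv-because-equiv-below} then forces $q\alpha\beta^{e-d_2}\gamma \sim q\alpha\beta^\omega$, contradicting the hypothesis since $d_2 \in [1,e]$.

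I will construct the digging sequence step by step using Lemma~\ref{lem:beta-omega-avoiding}. At each depth $d$, among the admissible $r_d \in \eats{\beta}(r_{d-1})$ I will pick one greedily so as to witness as much of the current tuple $T_d$ as possible --- preferring $r_d$ whose class $[r_d\beta^{e-d}\gamma]$ differs from the classes recently recorded along the sequence. The local-diversity claim is then proved by contradiction: assume an interval $[a,b] \subseteq [0,e]$ of length $L > \iota(|\Pp|,k)$ contains only $k$ distinct classes among $\set{[r_d\beta^{e-d}\gamma] \mid d \in [a,b]}$. To each such $d$ I attach a bounded-size \emph{signature} consisting of $r_d$ together with the classes $[s\beta^{e-d}\gamma]$ for $s$ in a bounded $\eats{\beta}$-neighborhood of $r_d$. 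The signature space is elementary in $|\Pp|$ and $k$ (boundedly many neighbors, each carrying one of $k$ available classes, over a bounded recursion depth). A pigeonhole on $[a,b]$ then yields two depths $d_1 < d_2$ with identical signatures; iterating Lemma~\ref{lem:equiv-because-equiv-below} along $\eats{\beta}$-paths in $V$ extends this local match to full tuple equality $T_{d_1} = T_{d_2}$, contradicting the distinctness observation.

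The hard part will be the design of the signature: it must be both local --- encodable from a bounded window around $d$ --- and complete enough that matching signatures genuinely imply $T_{d_1} = T_{d_2}$ rather than mere agreement on the sampled entries. Ensuring compatibility between the greedy construction of the digging sequence (which controls what gets recorded in the signature) and the propagation argument (which turns local agreement into equality of the full tuples) is where the intricate bookkeeping of the proof will be concentrated.
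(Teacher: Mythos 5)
Your setup (inductive construction via Lemma~\ref{lem:beta-omega-avoiding}, a greedy choice of $r_d$, and the observation that $T_{d_1}=T_{d_2}$ over $\eats{\alpha}(q)$ would contradict the hypothesis) matches the paper's starting point, but the heart of the lemma is exactly the part you defer, and the route you sketch for it does not work. First, the counting of signatures is unjustified: your signature records the classes $\class{s\beta^{e-d}\gamma}$ for $s$ in an $\eats{\beta}$-neighborhood of $r_d$, but the contradiction hypothesis only bounds (by $k$) the classes of the \emph{recorded} configurations $r_d\beta^{e-d}\gamma$; nothing bounds the classes of the other states $s$ at depth $d$, so the signature space is not elementarily bounded and the pigeonhole does not go through. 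In the paper this bound is obtained only at specially chosen depths and only thanks to the precise greedy rule: choosing $r_d\in R_d$ to maximize $\back(r,d)$, and first pigeonholing a single ``popular'' class $x$ recurring at indices $d_0<\dots<d_t$, one shows that at each $d_i$ every $r\in V$ either has $r\beta^{e-d_i}\gamma\sim r\beta^\omega$ (a class among the at most $|Q|$ $\omega$-approximant classes $\Oo$, which your signature omits entirely) or has its class among the recorded classes in the window $[d_{i-1},d_i-1]$, hence in $\Cc$; only then do the tuples live in $(\Cc\cup\Oo)^{|V|}$, a set of elementarily bounded size. Second, the propagation step fails: Lemma~\ref{lem:equiv-because-equiv-below} transfers agreement from $\eats{\beta}(s)$ at depth $d$ to $s$ at depth $d-1$, i.e.\ downward in depth, not sideways across states at the same depth, so matching local signatures around $r_{d_1}=r_{d_2}$ cannot be ``extended to full tuple equality $T_{d_1}=T_{d_2}$'' over $\eats{\alpha}(q)$ --- states of $\eats{\alpha}(q)$ outside $\eats{\beta}(r_d)$ are simply unconstrained.

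The paper sidesteps the need for full-tuple equality: it restricts attention to a monochromatic subinterval where the (weakly decreasing) sets $V_d=\eats{\beta}(r_{d-1})$ are constant, pigeonholes whole tuples $(\class{r\beta^{e-d_i}\gamma})_{r\in V}$ over that common $V$, and derives the contradiction not from your preliminary distinctness observation but from $\beta^\omega$-avoidance: equality of two such tuples yields, via Corollary~\ref{cor:repetition-is-omega} applied to the linked pair $(\beta,\beta)$, that $r_{s_2-1}\beta^{e-(s_2-1)}\gamma\sim r_{s_2-1}\beta^\omega$, contradicting $r_{s_2-1}\in R_{s_2-1}$. (Incidentally, by Lemma~\ref{lem:eats-beta} the set $\eats{\beta}(r_d)$ is already closed under further $\eats{\beta}$-steps, so a purely local tuple over $\eats{\beta}(r_d)$ would indeed suffice for this avoidance-based contradiction; but the missing and essential ingredient remains the argument, driven by the maximize-$\back$ rule and the recurring popular class, that those tuple entries range over the elementarily bounded set $\Cc\cup\Oo$.) As it stands, your proposal has a genuine gap precisely where the work of the lemma lies.
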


\pp{The function $\iota$ is polynomial in $k$ and singly exponential in $|\Pp|$.}

\begin{proof}
	By Lemma~\ref{lem:beta-omega-avoiding} we know that
	there exists at least one
	$\beta^\omega$-avoiding digging sequence for $q\alpha\beta^e\gamma$.
	Among all such $\beta^\omega$-avoiding digging sequences
	we choose a particular one
	$(r_0,r_1,\dots,r_e)$ which we show to satisfy the statement of the lemma.
	We define the control states $r_0,\dots,r_e$ by induction.

	For the induction base, we choose an arbitrary $\beta^\omega$-avoiding digging sequence $(r_0)$;
	its existence is guaranteed by
	the first part of Lemma~\ref{lem:beta-omega-avoiding}.
	For the induction step, assume that we have already chosen
	$r_0,\ldots,r_{d-1}$ for some $d\in[1,e]$; we need
	to define $r_d$.
	Let $R_d=\set{r\in\eats{\beta}(r_{d-1})\mid r\beta^{e-d}\gamma\not\sim
	r\beta^\omega}$.
	We have $R_d\neq\emptyset$ by the second part of Lemma~\ref{lem:beta-omega-avoiding}.
	Among all control states in $R_d$ choose as $r_d$ some control state in
	$r\in R_d$ such that among the already determined sequence
	of configurations
	$r_0\beta^e\gamma,r_1\beta^{e-1}\gamma,\ldots r_{d-1}\beta^{e-(d-1)}\gamma$
	the class $\class{r\beta^{e-d}\gamma}$ does not occur, and if this is
	not possible, then this class appears last as early as possible;
	more formally, choose as $r_d$ any control state $r\in R_d$ that
	maximizes $\back(r,d)$, where
	\begin{align*}
		\back(r,d)=\min\set{j\in[1,d]\mid r_{d-j}\beta^{e-(d-j)}\gamma\sim r\beta^{e-d}}.
	\end{align*}
	Recall that $\min\emptyset=\omega$ by definition.
	This completes the construction.

	Let $V_0=\eats{\alpha}(q)$ and
	$V_d=\eats{\beta}(r_{d-1})$
	for all $d\in[1,e]$.
	We have $V_d\subseteq V_{d-1}$ for all $d\in[1,e]$
	by Lemma~\ref{lem:eats-beta}.
	Moreover $V_e\neq\emptyset$, because $r_e\in V_e$.

	We say that an interval $[a',b']\subseteq[0,e]$ is
	{\em monochromatic} if $V_d=V_{d'}$ for all $d,d'\in[a',b']$.
	By the above, $[0,e]$ (hence also every sub-interval of $[0,e]$) can be split into at most $|Q|$ monochromatic sub-intervals.

	To finish the proof, take some $k\in\Nat$ and some interval $[a,b]\subseteq[0,e]$ such that
	\begin{align}
		|[a,b]|>\iota(|\Pp|,k)=|\Pp|\cdot k\cdot((|\Pp|+k)^{|\Pp|}+1)\,.\label{eq:iota}
	\end{align}
	As said above, $[a,b]$ can be split into at most $|Q|$ monochromatic sub-intervals, so (recalling that $|Q|\leq|\Pp|$)
	there exists a monochromatic sub-interval $[a',b']\subseteq[a,b]$ of length
	$|[a',b']|>k\cdot(|Q|+k)^{|Q|}+1$.
	Below we prove that there are more than $k$ pairwise non-bisimilar configurations already in $\set{r_d\beta^{e-d}\gamma\mid d\in\nobreak[a',b']}$ .

	Let $V=V_{a'}=V_{b'}$ (i.e., $V=V_d$ for all $d\in[a',b']$).
	Let us first define the tuple of bisimulation classes
	$\theta_d=(\class{r\beta^{e-d}\gamma})_{r\in V}$
	for all $d\in[a',b']$.
	Let also $\Cc=\set{\class{r_d\beta^{e-d}\gamma}\mid d\in[a',b']}$
	be the set of bisimulation classes of
	configurations in $\set{r_d\beta^{e-d}\gamma\mid d\in[a',b']}$.
	Our goal is to prove that $|\Cc|>k$; for the sake of contradiction assume that $|\Cc|\leq k$.

	First, by the pigeonhole principle,
	since $|[a',b']|>k\cdot((|Q|+k)^{|Q|}+1)$
	and $|\Cc|\leq k$
	there exists one class
	$x\in\Cc$ for which there are
	$t+1= (|Q|+k)^{|Q|}+2$ indices $d_0,d_1,\dots,d_t$ such that
	$a'\leq d_0<d_1<d_2<\cdots <d_t\leq b'$ and
	$\class{r_{d_i}\beta^{e-d_i}\gamma}=x$
	for all $i\in[0,t]$.

	Secondly, let $\Oo=\set{\class{r\beta^\omega}\mid r\in V}$ be the set of
	bisimulation classes of the $\omega$-approximants.
	Observe that $|\Oo|\leq |Q|$.
	We claim that every component of every of the tuples
	\begin{align*}
		\theta_{d_1}=(\class{r\beta^{e-d_1}\gamma})_{r\in V},
		\dots,
		\theta_{d_t}=(\class{r\beta^{e-d_t}\gamma})_{r\in V}
	\end{align*}
	is a class inside $\Cc\cup\Oo$ (we do not claim this for
	$\theta_{d_0}$).
	Indeed, consider any index $d_i$ with $i\in[1,t]$
	and consider any $r\in V$.
	Recall that by the above construction, the control state
	$r_{d_i}$
	was defined to be one in
	\begin{align*}
		R_{d_i}=\set{r\in\eats{\beta}(r_{d_i-1})\mid r\beta^{e-d_i}\gamma\not\sim r\beta^\omega}
	\end{align*}
	such that $\back(r,d_i)$ is maximized.
	That is, for our control state
	$r\in V=\eats{\beta}(r_{d_i-1})$ we
	either have
	\begin{itemize}
	\item	$r\not\in R_{d_i}$, which implies that
		$r\beta^{e-d_i}\sim r\beta^\omega$ and thus
		$\class{r\beta^{e-d_i}\gamma}\in\Oo$, or
	\item $r\in R_{d_i}$ and
		$\back(r,d_i)\leq\back(r_{d_i},d_i)\leq d_i-d_{i-1}$, which
		implies that the class
		$\class{r\beta^{e-d_i}}$ can be found
		in
		$\set{\class{r_j\beta^{e-j}\gamma}\mid h\in[d_{i-1},d_i-1]}$ and thus in particular $\class{r\beta^{e-d_i}}\in\Cc$.
	\end{itemize}
	Thus, $\class{r\beta^{e-d_i}}\in\Cc\cup\Oo$ for all $i\in[1,t]$
	and all $r\in V$.

	Finally, because $t=(|Q|+k)^{|Q|}+1$ (and because there are at most $(|Q|+k)^{|Q|}$ tuples in $(\Cc\cup\Oo)^{|V|}$),
	by the pigeonhole principle there exist two
	distinct indices $s_1,s_2\in\set{d_1,\dots,d_t}$ such that
	$\theta_{s_1}=\theta_{s_2}$; say $s_1<s_2$.
	This means that $r\beta^{e-s_1}\gamma\sim r\beta^{e-s_2}\gamma$ for all $r\in V=V_{s_2}=\eats{\beta}(r_{s_2-1})$,
	which by Corollary~\ref{cor:repetition-is-omega} implies that $r_{s_2-1}\beta^{e-(s_2-1)}\sim r_{s_2-1}\beta^\omega$.
	Because $s_2>s_1\geq d_1>d_0\geq 0$, that is, $s_2-1\geq 1$,
	this implies that $r_{s_2-1}\not\in R_{s_2-1}$,
	contradicting our construction; thus $|\Cc|>k$.
\end{proof}

\section{Runs from $q\alpha\beta^\omega$ have to push}\label{sec:go-up}

In the previous section we were analysing runs starting from $q\alpha\beta^e\gamma$: 
we have shown that it is possible to pop the stack content and on the way visit many bisimulation classes.
In this section we prove that the $\omega$-approximant $q\alpha\beta^\omega$ is different: 
while popping the stack content from $q\alpha\beta^\omega$, one can visit only a small number of bisimulation classes.
Stating this conversely: every run starting from $q\alpha\beta^\omega$ that visit many bisimulation classes has to push a lot,
and actually it contains a subrun that is a concatenation of many nonempty augmenting runs, as stated by the following lemma.

\begin{restatable}{lemma}{LemGoUp}\label{lem:go-up}
	There exists an elementary function $g\colon\Nat^4\rightarrow\Nat$
	such that for every linked pair
	$(\alpha,\beta)$ and every
	run $\eta$ from $\alpha\beta^\omega$ visiting more than $g(|\Pp|,|\alpha|,|\beta|,k)$ distinct configurations,
	there exists a subrun
	$\eta_1\cdots\eta_k$ of $\eta$
	such that all
	$\eta_1,\ldots,\eta_k$ are augmenting and nonempty.
\end{restatable}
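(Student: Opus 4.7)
The plan is to prove the contrapositive: if no subrun of $\eta$ can be written as a concatenation of $k$ nonempty augmenting pieces, then $\eta$ visits at most elementarily many distinct configurations. The governing quantity will be the maximal ``height above the running low'' reached along $\eta$; the heart of the argument is that any single excursion of height at least $k$ already forces a $k$-piece augmenting decomposition.

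Assume $\eta$ is a run from some configuration of the form $q\alpha\beta^\omega$, with $q\in Q$. For every position $i$ along $\eta$ I set $\sigma(i)=\sgrow(\eta[0,i])$ and $L(i)=\min_{j\leq i}\sigma(j)$; since $L(0)=0$, one has $L(i)\leq 0$ throughout. A routine bookkeeping shows that the stack of $\eta(i)$ factors as $u_i\cdot\tau_i$, where $\tau_i$ is the suffix of $\alpha\beta^\omega$ starting at position $-L(i)$ (the part of the initial stack that has been permanently dug out so far) and $u_i\in\Gamma^*$ is what has been pushed above it, with $|u_i|=\sigma(i)-L(i)$. Because $\alpha\beta^\omega$ is ultimately periodic with period $|\beta|$ past its first $|\alpha|$ symbols, at most $|\alpha|+|\beta|$ distinct tails $\tau_i$ can ever occur along $\eta$. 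Setting $M=\max_i|u_i|$, this yields the uniform bound
\[
\#\{\eta(i) : i\text{ a position of }\eta\}\ \leq\ |Q|\cdot(|\alpha|+|\beta|)\cdot\sum_{\lambda=0}^{M}|\Gamma|^\lambda.
\]

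The main step is to show that $M\geq k$ forces the existence of the desired decomposition. Pick any position $i^*$ with $|u_{i^*}|\geq k$, and let $\ell$ be the largest position $\leq i^*$ with $\sigma(\ell)=L(i^*)$; then $\sigma>L(i^*)$ on $(\ell,i^*]$, while $\sigma(i^*)\geq L(i^*)+k$. For each $\lambda\in[1,k]$ I define $v_\lambda$ to be the largest position in $[\ell,i^*]$ with $\sigma(v_\lambda)<L(i^*)+\lambda$. Since $\sigma$ changes by at most one per transition, an easy induction gives $\ell=v_1<v_2<\cdots<v_k<i^*$, $\sigma(v_\lambda)=L(i^*)+\lambda-1$, and $\sigma(s)\geq L(i^*)+\lambda>\sigma(v_\lambda)$ for all $s\in(v_\lambda,v_{\lambda+1}]$ (respectively $s\in(v_k,i^*]$). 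Consequently each of the $k$ subruns $\eta[v_1,v_2],\eta[v_2,v_3],\ldots,\eta[v_{k-1},v_k],\eta[v_k,i^*]$ is nonempty and augmenting, and their concatenation is the subrun $\eta[v_1,i^*]$ of $\eta$ of the required shape.

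Putting the two parts together, choosing $g(|\Pp|,|\alpha|,|\beta|,k):=(|\Pp|+1)^{k+1}(|\alpha|+|\beta|+1)$ (plainly elementary) will suffice: if $\eta$ visits more than $g$ distinct configurations then $M\geq k$, and the $v_\lambda$-construction then produces the desired decomposition. The main obstacle I anticipate is the stack-factorization step together with the bound on the number of possible tails: one has to argue cleanly that every stack reachable from $q\alpha\beta^\omega$ decomposes uniquely as ``pushed prefix'' followed by ``suffix of $\alpha\beta^\omega$'', using the self-similarity of $\beta^\omega$ to keep the tail count bounded by $|\alpha|+|\beta|$. Everything else reduces to routine manipulation of $\sigma$ and $L$.
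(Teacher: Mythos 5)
Your proposal follows essentially the same route as the paper's proof: the paper likewise argues that a run which never climbs $k$ symbols above its running low visits only few configurations (it counts the at most $|Q|\cdot(|\alpha|+|\beta|)$ configurations at the first visit of each depth, whose stacks are genuine suffixes of $\alpha\beta^\omega$, and then applies Lemma~\ref{lem:radius}), and it splits a sufficiently high augmenting excursion into $k$ nonempty augmenting pieces using exactly your level-crossing positions. One detail in your write-up is off by one and, as literally stated, false: the stack of $\eta(i)$ need not factor as $u_i\tau_i$ with $\tau_i$ the suffix of $\alpha\beta^\omega$ starting at position $-L(i)$ and $|u_i|=\sigma(i)-L(i)$. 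The symbol sitting at depth $-L(i)$ can be rewritten in place by a transition fired at a moment when the growth equals $L(i)$; for instance, if the very first transition rewrites the topmost symbol $X$ of $\alpha\beta^\omega$ into a single different symbol $Z$, then $\sigma(1)=L(1)=0$ while the stack is no longer $\alpha\beta^\omega$. The safe invariant is that the suffix starting at position $-L(i)+1$ is untouched, so $|u_i|=\sigma(i)-L(i)+1$. This costs only a factor of at most $|\Gamma|+1$ in your configuration count (equivalently, define $M=\max_i\bigl(\sigma(i)-L(i)\bigr)$ directly and keep your excursion argument verbatim), so after enlarging $g$ to, say, $(|\Pp|+1)^{k+2}\cdot(|\alpha|+|\beta|+1)$ — which is still elementary, all the lemma requires — your proof goes through. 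The remaining part, namely the choice of $\ell$ and of $v_1<\dots<v_k$ and the verification that the $k$ pieces are nonempty and augmenting (using the push-pop normal form, so that $\sigma$ changes by at most one per step), is correct and coincides with the paper's splitting via the positions $\ell_e$.
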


\sg{The function $g$ is single-exponentially bounded in its arguments, we could write this.}

\begin{proof}
	Take
	\begin{align}
		g(|\Pp|,|\alpha|,|\beta|,k)=(|\alpha|+|\beta|)\cdot|\Pp|^{k+2}\label{eq:def-g}
	\end{align}
	and consider a run $\eta$ from $\alpha\beta^\omega$ visiting more that $g(|\Pp|,|\alpha|,\allowbreak|\beta|,\allowbreak k)$ distinct configurations.

	Denote $N=|\eta|$.
	Let $-n$ be the minimal stack growth obtained by a prefix of $\eta$; formally
	\begin{align*}
		n=-\min\set{\sgrow(\eta[0,i])\mid i\in[0,N]}\,.
	\end{align*}
	Moreover, for $d\in[0,n]$, let $j_d$ denote the earliest positions in $\eta$ when the stack growth becomes $-d$; formally
	\begin{align*}
		j_d=\min\set{i\in[0,N]\mid\sgrow(\eta[0,i])\leq-d}\,.
	\end{align*}
	In particular, $j_0=0$.
	Additionally, let $j_{n+1}=N+1$.

	Clearly the stack growth cannot go down below $-d$ before becoming precisely equal to $-d$ at some earlier moment;
	thus $\sgrow(\eta[0,j_d])=-d$ for all $d\in[0,n]$.
	In effect, all the subruns $\eta[j_d,j_{d+1}-1]$ are augmenting, for $d\in[0,n]$
	(additionally, for $d<n$ we have $\sgrow(\eta[j_d,j_{d+1}-1])=0$, and $\sgrow(\eta[j_{d+1}-1,j_{d+1}])=-1$).
	Observe additionally that the stack content of $\eta[j_d]$ (for $d\in[0,n]$) is obtained from $\alpha\beta^\omega$ by popping some number of symbols.
	In effect, it is of the form $\kappa\beta^\omega$, where $\kappa\in\Gamma^+$ is either a suffix of $\alpha$ or a suffix of $\beta$.
	This means that among $\eta[j_d]$ there are at most $|Q|\cdot(|\alpha|+|\beta|)$ distinct configurations (we have to choose a control state in $Q$, and a nonempty suffix of $\alpha$ or $\beta$).
	Denote the set of these configurations by $\Cc$.

	Suppose first that $\sgrow(\eta[j_d,i])\leq k-1$ for all $d\in[0,n]$ and all $i\in[j_d,j_{d+1}-1]$.
	Then every configuration $\eta[i]$ visited by $\eta$ can be reached from a configuration $\eta[j_d]$ in $\Cc$
	by such a run $\eta[j_d,i]$, which is augmenting.
	In effect, Lemma~\ref{lem:radius} gives implies that there are at most $|\Cc|\cdot|\Pp|^{k-1+2}$ distinct configurations visited by $\eta$.
	But $|\Cc|\cdot|\Pp|^{k-1+2}\leq g(|\Pp|,|\alpha|,|\beta|,k)$, contrarily to our assumption.

	Thus, there exist $d\in[0,n]$ and $\ell\in[j_d,j_{d+1}-1]$ such that $\sgrow(\eta[j_d,\ell])\geq k$.
	Recall that $\eta[j_d,\ell]$ is augmenting.

	It is easy to split an augmenting run with stack growth at least $k$ into $k$ nonempty and augmenting subruns.
	Namely, for $e\in[0,k]$ we define
	\begin{align*}
		\ell_e=\max\set{i\in\eta[j_d,\ell]\mid \sgrow(\eta[j_d,i])\leq e}\,,
	\end{align*}
	and for $e\in[1,k]$ we take $\eta_e=\eta[\ell_{e-1},\ell_e]$.
	By definition $\eta_1\dots\eta_k$ is a subrun of $\eta$,
	and all $\eta_e$ are augmenting.
	Moreover, because $\Pp$ is in a push-pop normal form, all $\eta_e$ are nonempty:
	the stack growth cannot go above $e$ before becoming precisely equal to $e$ at some earlier moment.
\end{proof}

\section{The core of the proof\ppchanged{ of Theorem~\ref{thm:technical}}}\label{sec:core}

The goal of this section is finish proving our main technical theorem, Theorem~\ref{thm:technical}.

Recall that Theorem~\ref{thm:technical} claims the existence of a function $h$.
For reasons of readability we postpone the definition of this function.

Towards a proof of Theorem~\ref{thm:technical} let us
fix a control state $q\in Q$, a linked pair
$(\alpha,\beta)$, and a stack content $\gamma\in\Gamma^*$.
In order to simplify some formulae, assume moreover that $|\beta|\geq\max\set{|\alpha|,\Ee}$,
where the constant $\Ee\leq\exp(|\Pp|)$ is taken from Lemma~\ref{lem:PDS-paths},
and is fixed for the rest of this section.
The case of $|\beta|<\max\set{|\alpha|,\Ee}$ can be easily reduced to the former one by considering $\beta'=\beta^{\max\set{|\alpha|,\Ee}}$, as shown below.

\begin{lemma}\label{lem:wlog}
	If Theorem~\ref{thm:technical} holds when $|\beta|\geq\max\set{|\alpha|,\Ee}$, then it holds in general.
\end{lemma}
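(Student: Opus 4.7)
The plan is a simple padding argument: given an arbitrary linked pair $(\alpha,\beta)$ with $|\beta|<\max\{|\alpha|,\Ee\}$, I replace $\beta$ by a large enough power $\beta'=\beta^m$ so that $|\beta'|\geq\max\{|\alpha|,\Ee\}$, apply the assumed restricted form of Theorem~\ref{thm:technical} to $(\alpha,\beta')$, and then translate the resulting witness back.

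Concretely, set $m=\max\{|\alpha|,\Ee\}$ and $\beta'=\beta^m$, so $|\beta'|=m|\beta|\geq m\geq\max\{|\alpha|,\Ee\}$. First I check that $(\alpha,\beta')$ is a linked pair. Since $\eats{\beta}=\eats{\beta\beta}$, an easy induction on $k\geq 1$ yields $\eats{\beta^k}=\eats{\beta}$; in particular $\eats{\beta'}=\eats{\beta}=\eats{\beta'\beta'}$. Similarly, using $\eats{\alpha}=\eats{\alpha\beta}$ iteratively, $\eats{\alpha\beta^k}=\eats{\alpha}$ for every $k\geq 0$, so $\eats{\alpha\beta'}=\eats{\alpha}$. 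Next, the set $\{q\alpha(\beta')^j\gamma\mid j\in\Nat\}=\{q\alpha\beta^{mj}\gamma\mid j\in\Nat\}$ is a subset of $\{q\alpha\beta^i\gamma\mid i\in\Nat\}$, hence still contains only finitely many pairwise non-bisimilar configurations.

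Now I apply the assumed restricted version of Theorem~\ref{thm:technical} to $q$, $(\alpha,\beta')$, $\gamma$, obtaining some $e'\leq h(|\Pp|,|\alpha|,|\beta'|)=h(|\Pp|,|\alpha|,m|\beta|)$ such that $q\alpha(\beta')^{e'}\gamma\sim q\alpha(\beta')^\omega$. Since $(\beta')^\omega=\beta^\omega$ and $(\beta')^{e'}=\beta^{me'}$, this is exactly $q\alpha\beta^{me'}\gamma\sim q\alpha\beta^\omega$, so $e=me'$ works. Taking the new witness function to be $h'(|\Pp|,|\alpha|,|\beta|)=\max\{|\alpha|,\Ee\}\cdot h(|\Pp|,|\alpha|,\max\{|\alpha|,\Ee\}\cdot|\beta|)$ in the excluded regime and $h$ elsewhere produces an elementary bound, because $\Ee\leq\exp(|\Pp|)$ is itself elementary in $|\Pp|$ and composition of elementary functions is elementary.

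There is no real obstacle here; the only thing to be careful about is that the linked-pair property transfers to $(\alpha,\beta^m)$ and that $\beta^\omega$ is unchanged by the grouping. Both are immediate from the definition of linked pairs and from the convention that $\omega$-approximants are purely as strings. The elementary blow-up introduced by the factor $m\leq\max\{|\alpha|,\Ee\}$ and by passing from the third argument $|\beta|$ to $m|\beta|$ is harmless because $h$ is elementary by assumption.
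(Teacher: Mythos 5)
Your proposal is correct and is essentially identical to the paper's own proof: the paper also sets $\beta_\diamond=\beta^{\max\{|\alpha|,\Ee\}}$, notes that $(\alpha,\beta_\diamond)$ is a linked pair and that $\{q\alpha\beta_\diamond^i\gamma\}\subseteq\{q\alpha\beta^i\gamma\}$, applies the restricted theorem, and defines the extended bound as $h(|\Pp|,|\alpha|,|\beta|\cdot\max\{|\alpha|,\Ee\})\cdot\max\{|\alpha|,\Ee\}$, exactly matching your $h'$. The only difference is that you spell out the verification that the linked-pair property transfers to $(\alpha,\beta^m)$, which the paper leaves as an observation.
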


\begin{proof}
	Suppose that we have already defined a function $h(|\Pp|,\allowbreak|\alpha|,\allowbreak|\beta|)$ for arguments such that $|\beta|\geq\max\set{|\alpha|,\Ee}$,
	so that Theorem~\ref{thm:technical} holds in this case.
	For the remaining arguments we take $h(|\Pp|,|\alpha|,|\beta|)=h(|\Pp|,|\alpha|,|\beta|\cdot\max\set{|\alpha|,\Ee})\cdot \max\set{|\alpha|,\Ee}$.

	Take now a control state $q\in Q$, a linked pair $(\alpha,\beta)$, and a stack content $\gamma\in\Gamma^*$ such that $|\beta|<\max\set{|\alpha|,\Ee}$.
	Let $\beta_\diamond=\beta^{\max\set{|\alpha|,\Ee}}$.
	Observe that $(\alpha,\beta_\diamond)$ is a linked pair as well.
	Clearly
	\begin{align*}
		\set{q\alpha\beta_\diamond^i\gamma\mid i\in\Nat}&=\set{q\alpha\beta^{i\cdot\max\set{|\alpha|,\Ee}}\gamma\mid i\in\Nat}\\
			&\subseteq\set{q\alpha\beta^i\gamma\mid i\in\Nat}\,,
	\end{align*}
	so if the letter set contains only finitely many pairwise non-bisimilar configurations, so does the former.
	Moreover, $|\beta_\diamond|\geq\max\set{|\alpha|,\Ee})$, so we can use Theorem~\ref{thm:technical} for $(\alpha,\beta_\diamond)$,
	obtaining that
	\begin{align*}
		q\alpha\beta^{e\cdot\max\set{|\alpha|,\Ee}}\gamma=q\alpha\beta_\diamond^e\gamma\sim q\beta_\diamond^\omega=q\beta^\omega
	\end{align*}
	for some $e\leq h(|\Pp|,|\alpha|,|\beta'|)$.
	This gives the thesis, since from our definition of $h$ it immediately follows that 
	\begin{align*}
		e\cdot\max\set{|\alpha|,\Ee}\leq h(|\Pp|,|\alpha|,|\beta|)\,.\tag*{\qedhere}
	\end{align*}
\end{proof}

Let $e$ be the smallest natural number satisfying
$q\alpha\beta^e\gamma\sim q\beta^\omega$.
The goal is to prove that $e\leq h(|\Pp|,|\alpha|,|\beta|)$.
For the sake of contradiction, assume $e$ is sufficiently large, namely
$e>h(|\Pp|,|\alpha|,|\beta|)$, where, as mentioned above, the function $h$ is
defined later.
By Lemma~\ref{lem:go-down} there exists a $\beta^\omega$-avoiding
digging sequence
$\sigma=(r_0,r_1,\dots,r_e)$ for $(q,\alpha,\beta)$,
visiting many pairwise non-bisimilar configurations (as specified in Lemma~\ref{lem:go-down}).
By Definition~\ref{def:digging-sequence}, $r_0\in\eats{\alpha}(q)$ and $r_d\in\eats{\beta}(r_{d-1})$
for all $d\in[1,e]$.

\paragraph{The run $\rho$.}

Let us fix a run $\rho$ from
$q\alpha\beta^e\gamma$
to $r_e\gamma$ that
realizes $\sigma$ as quickly as
possible in the following sense:
there exists a factorization
$\rho=\rho_0\rho_1\cdots\rho_e$
such that
\begin{itemize}
\item $q\alpha\xrightarrow{\rho_0}r_0$ is a shortest
	run from $q\alpha$ to
	$r_0$ and
\item $r_{d-1}\beta\xrightarrow{\rho_d}r_d$ is a shortest
	run from $r_{d-1}\beta$ to $r_d$
	for all $d\in[1,e]$.
\end{itemize}
Note that $|\rho_0|\leq|\alpha|\cdot\Ee$
and $1\leq|\beta|\leq|\rho_d|\leq|\beta|\cdot\Ee$ for all $d\in[1,e]$, by Lemma~\ref{lem:PDS-paths}.
Because by assumption $|\alpha|\leq|\beta|$, we actually have $|\rho_d|\leq|\beta|\cdot\Ee$ for all $d\in[0,e]$.

Let $N$ denote the length of the run $\rho$, $N=|\rho|$.
Since $e$ was chosen sufficiently large for the arguments
in this section to work, so is $N$.
Let us also fix the following intermediate configurations
\begin{align*}
	\Nabla{d}=r_d\beta^{e-d}\gamma\text{ for all $d\in[0,e]$}
\end{align*}
that the run $\rho$ visits.
We call these configurations {\em $\nabla$-con\-fig\-u\-ra\-tions}.
For every position $i\in[0,N-1]$ let
\begin{align*}
	\dplus(i)=\min\set{d\in [0,e]\mid
		\exists j\in[i+1,N].\,\rho(j)=\Nabla{d}}
\end{align*}
denote the ``depth index $d$'' of the next $\nabla$-configuration $\Nabla{d}$
the run $\rho$ sees strictly after position $i$.

Next we have two observations.

\begin{restatable}{observation}{ObsOdb}\label{O d b-a}
	For $0\leq i\leq i'\leq N-1$, we have that
	$\left\lfloor\frac{i'-i}{|\beta|\cdot\Ee}\right\rfloor\leq \dplus(i')-\dplus(i)\leq i'-i$.
\end{restatable}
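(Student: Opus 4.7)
The plan is to index the positions in $\rho$ at which $\nabla$-configurations occur. Concretely, for each $d\in[0,e]$ let $p_d\in[0,N]$ be such that $\rho(p_d)=\Nabla{d}$; from the factorization $\rho=\rho_0\rho_1\cdots\rho_e$ one reads off $p_d=\sum_{j=0}^d|\rho_j|$, and the subrun-length bounds noted right before the observation give $|\rho_0|\leq|\alpha|\cdot\Ee\leq|\beta|\cdot\Ee$ (using the standing assumption $|\beta|\geq|\alpha|$) and $1\leq|\rho_d|\leq|\beta|\cdot\Ee$ for every $d\in[1,e]$. Rewriting the definition, $\dplus(i)=\min\set{d\in[0,e]\mid p_d>i}$, so the function $\dplus$ is non-decreasing in $i$; set $d_1=\dplus(i)$ and $d_2=\dplus(i')$, so that $d_1\leq d_2$ and the whole observation reduces to two inequalities between $d_2-d_1$ and $i'-i$.

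The upper bound $d_2-d_1\leq i'-i$ will follow from a simple counting argument: by the minimality clauses in the definitions of $d_1$ and $d_2$, each of the $d_2-d_1$ positions $p_{d_1},p_{d_1+1},\dots,p_{d_2-1}$ lies in $(i,i']$, an interval containing exactly $i'-i$ integers. Since these $p_d$'s are pairwise distinct (as $p_d-p_{d-1}\geq 1$), their number cannot exceed $i'-i$.

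The lower bound will follow from a telescoping estimate. From $p_{d_1}\leq i+|\rho_{d_1}|$ (which is immediate from $p_{d_1-1}\leq i$ when $d_1\geq 1$, and in the edge case $d_1=0$ follows from $p_0=|\rho_0|\leq|\beta|\cdot\Ee$ together with $i\geq 0$) and $p_{d_2}>i'$ one gets
\begin{align*}
  i'-i\;<\;p_{d_2}-p_{d_1}+|\rho_{d_1}|\;=\;\sum_{j=d_1}^{d_2}|\rho_j|\;\leq\;(d_2-d_1+1)\cdot|\beta|\cdot\Ee.
\end{align*}
Dividing through by $|\beta|\cdot\Ee$ and using that $d_2-d_1$ is a non-negative integer yields $\lfloor(i'-i)/(|\beta|\cdot\Ee)\rfloor\leq d_2-d_1$.

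Essentially nothing here is hard; the only piece of bookkeeping worth flagging is the edge case $d_1=0$ in the lower bound, where there is no $\rho_{-1}$ to bound separately. The uniform bound $|\rho_0|\leq|\beta|\cdot\Ee$ handles it by letting $\rho_0$ be absorbed into the same $(d_2-d_1+1)$-term sum; apart from that, the observation is a purely arithmetic consequence of the two per-block bounds $1\leq|\rho_d|$ and $|\rho_d|\leq|\beta|\cdot\Ee$.
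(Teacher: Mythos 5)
Your proof is correct and is essentially the paper's own argument: the paper dismisses the observation as an immediate consequence of the definition of $\dplus(\cdot)$ together with the bounds $1\leq|\rho_d|$ and $|\rho_d|\leq|\beta|\cdot\Ee$, and your write-up just makes that bookkeeping explicit via the breakpoint positions $p_d$ and the telescoping sum. No discrepancy in approach or substance.
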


\begin{proof}
	Immediate consequence of the definition of $\dplus(\cdot)$ and the inequalities $1\leq|\rho_d|$ for $d\in[1,e]$ and $|\rho_d|\leq|\beta|\cdot\Ee$ for $d\in[0,e]$.
\end{proof}

\begin{restatable}{observation}{ObsOnablapath}\label{O nabla path}
	For all $i\in[0,N-1]$, we have
	\begin{enumerate}
		\item $\rho(i)=q_i\kappa_i\beta^{e-\dplus(i)}\gamma$ for
			some control state
			$q_i\in Q$ and some stack content $\kappa_i\in\Gamma^*$
			such that $1\leq|\kappa_i|\leq|\beta|\cdot\Ee$, and
		\item if $\dplus(i)+1\leq e$, then
			$\dist(\rho(i),r\beta^{e-(\dplus(i)+1)}\gamma)\leq
			2\cdot|\beta|\cdot\Ee$ for all $r\in V_{\dplus(i)+1}$.
	\end{enumerate}
\end{restatable}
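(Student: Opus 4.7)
The plan is to exploit the factorization $\rho=\rho_0\rho_1\cdots\rho_e$ together with the fact that each factor is a \emph{shortest} run taking the local top of the stack down to empty. Applied in context atop the untouched base $\beta^{e-d}\gamma$, the entire factor $\rho_d$ visits configurations of the form $q'\kappa'\beta^{e-d}\gamma$. The bookkeeping fact that pins everything down is: a position $i\in[0,N-1]$ lies inside a unique factor, and that factor has index exactly $\dplus(i)$, because the $\nabla$-configurations occur exactly at the boundaries between consecutive factors, and $\Nabla{\dplus(i)}$ is by definition the first such boundary strictly after $i$.

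For Part~1, the form $\rho(i)=q_i\kappa_i\beta^{e-\dplus(i)}\gamma$ is then immediate. For the upper bound $|\kappa_i|\leq|\beta|\cdot\Ee$ I would argue via the \emph{remaining suffix} of $\rho_{\dplus(i)}$: since that suffix has to reduce the local stack $\kappa_i$ down to empty and, in push-pop normal form, each transition pops at most one symbol, $|\kappa_i|$ is bounded by the number of remaining steps in $\rho_{\dplus(i)}$, which in turn is at most $|\rho_{\dplus(i)}|\leq|\beta|\cdot\Ee$ (this estimate was recorded in the paragraph introducing $\rho$). For the lower bound $|\kappa_i|\geq 1$, if $\kappa_i$ were empty then either $i$ would be a boundary point and $\rho(i)=\Nabla{\dplus(i)}$, contradicting that $\Nabla{\dplus(i)}$ is reached \emph{strictly after} position $i$; or $i$ would lie strictly inside $\rho_{\dplus(i)}$, in which case the isolated run $\rho_{\dplus(i)}$ would have hit the empty stack prematurely and been stuck, since no transition of a PDS in push-pop normal form fires on an empty stack.

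For Part~2, I would compose two short runs. First, prolong $\rho$ from $\rho(i)$ to the next $\nabla$-configuration $\Nabla{\dplus(i)}=r_{\dplus(i)}\beta\cdot\beta^{e-(\dplus(i)+1)}\gamma$; this takes at most the remaining length of $\rho_{\dplus(i)}$, hence at most $|\beta|\cdot\Ee$ steps. Second, for any $r\in V_{\dplus(i)+1}=\eats{\beta}(r_{\dplus(i)})$, there is by definition a run from $r_{\dplus(i)}\beta$ to $r$ (empty stack), and Lemma~\ref{lem:PDS-paths} bounds the shortest such run by $|\beta|\cdot\Ee$. Applied on top of the frozen base $\beta^{e-(\dplus(i)+1)}\gamma$, it carries $\Nabla{\dplus(i)}$ to $r\beta^{e-(\dplus(i)+1)}\gamma$. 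Summing the two lengths gives $2\cdot|\beta|\cdot\Ee$, as required.

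The only mildly delicate point is the upper bound on $|\kappa_i|$: a naive ``initial stack plus number of pushes'' estimate would give roughly $2|\beta|\cdot\Ee$, off by a factor of two; the ``remaining suffix must empty the local stack'' argument is what delivers exactly $|\beta|\cdot\Ee$. Everything else is routine bookkeeping of the factorization of $\rho$ together with a single invocation of Lemma~\ref{lem:PDS-paths}.
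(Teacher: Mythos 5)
Your proof is correct and follows essentially the same route as the paper: locate position $i$ inside the factor $\rho_{\dplus(i)}$, bound $|\kappa_i|$ by the length of the remaining suffix of that factor (at most one pop per step, and $|\rho_{\dplus(i)}|\leq|\beta|\cdot\Ee$), and obtain Point~2 by composing the rest of that factor with a shortest run witnessing $r\in\eats{\beta}(r_{\dplus(i)})$ via Lemma~\ref{lem:PDS-paths}. If anything, you spell out more than the paper does (the lower bound $|\kappa_i|\geq 1$ and the explicit $|\beta|\cdot\Ee+|\beta|\cdot\Ee$ split), while relying on the same bookkeeping fact the paper also takes for granted, namely that the $\nabla$-configurations occur in $\rho$ only at the factor boundaries.
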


\begin{proof}
	For Point 1 observe that, by construction, $\Nabla{\dplus(i)}=r_{\dplus(i)}\beta^{e-\dplus(i)}\gamma$ is the earliest configuration of
	the run $\rho$ whose stack content is of the form $\beta^{e-\dplus(i)}\gamma$;
	the upper bound on $|\kappa_i|$ follows from the inequality
	$|\rho_{\dplus(i)}|\leq|\beta|\cdot\Ee$, and from the fact that $\rho_{\dplus(i)}$ can pop at most one symbol in every step.
	Point 2 is a consequence of Point 1 and Lemma~\ref{lem:PDS-paths}, if we recall that $\Nabla{\dplus(i)}\to^*r\beta^{e-(\dplus(i)+1)}\gamma$ for all $r\in V_{\dplus(i)+1}$.
\end{proof}

\paragraph{The run $\eta$ and its analysis.}

Since $q\alpha\beta^e\gamma\sim
q\alpha\beta^\omega$ there must exist
a run $\eta$ from $q\alpha\beta^\omega$
that mimics the run $\rho$, that is,
such that $\rho(i)\sim\eta(i)$ for all
$i\in[0,N]$.

Similarly as in the proof of Lemma \ref{lem:go-down}, define
$V_0=\eats{\alpha}(q)$ and
$V_d=\eats{\beta}(r_{d-1})$ for all $d\in[1,e]$.
Recall that an interval $[a,b]\subseteq[0,e]$ is called monochromatic if $V_d=V_{d'}$ for all $d,d'\in[a,b]$.

Let us define the constant
\begin{align}
	\ell=\ell(|\Pp|,|\beta|,\Ee)=
	\Big(2+\iota\big(|\Pp|,|\Pp|^{2\cdot|\beta|\cdot\Ee+1}\big)\Big)\cdot|\beta|\cdot\Ee\,,\label{def ell}
\end{align}
where $\iota$ is the function from Lemma~\ref{lem:go-down}.
Anticipating that the difference $N-\ell$ is
sufficiently large,
let $\Jj$ be the set of positions $j\in[0,N-\ell-1]$ such that $\eta[j,j+\ell]$ is augmenting, and $[\dplus(j),\dplus(j+\ell)]$ is monochromatic.

\sg{The constant $\ell$ is triply-exponentially bounded.}

For all positions $j\in\Jj$ let $j_\bullet$ denote the
earliest position inside $\rho$ after $j$ when the augmenting
run $\eta[j,j+\ell]$ has pushed $2\cdot|\beta|\cdot\Ee$ stack symbols; formally
\begin{align*}
	j_\bullet=\min\set{i\in[j,j+\ell]\mid&\\
		&\hspace{-2em}\sgrow(\eta[j,i])=2\cdot|\beta|\cdot\Ee}.
\end{align*}
That is, we can write the configurations $\eta(j)$ and $\eta(j_\bullet)$ as
\begin{align}
	&\eta(j)=p_j X_j\mu_j\,,
		&&\mbox{where $p_j\in Q$, and $X_j\in\Gamma$,}\label{eta j}\\
		&&&\mbox{and $\mu_j\in\RegStr(\Gamma)$, and}\nonumber\\
	&\eta(j_\bullet)=p_{j_\bullet}\zeta_{j_\bullet}\mu_j\,,
		&&\mbox{where $p_{j_\bullet}\in Q$, and $\zeta_{j_\bullet}\in\Gamma^*$}\label{eta ol j}\\
		&&&\mbox{is such that $|\zeta_{j_\bullet}|=2\cdot|\beta|\cdot\Ee+1$.}\nonumber
\end{align}

The conclusion of Lemma~\ref{lem:go-down} together with Lemma~\ref{lem:radius} imply that $j_\bullet$ is well-defined (i.e., $j_\bullet\in\Nat$), and that actually
$j_\bullet\leq j+\ell-|\beta|\cdot\Ee$, as shown in the following lemma.

\begin{lemma}\label{lemma:j}
	If $j\in\Jj$, then $j_\bullet\leq j+\ell-|\beta|\cdot\Ee$.
\end{lemma}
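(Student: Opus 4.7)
The plan is to argue by contradiction. Suppose $j_\bullet>j+\ell-|\beta|\cdot\Ee$, where we also allow $j_\bullet=\omega$ to mean no such position exists. Because $\Pp$ is in push-pop normal form the stack growth $i\mapsto\sgrow(\eta[j,i])$ changes by at most $1$ per step, so this supposition is equivalent to $\sgrow(\eta[j,i])<2\cdot|\beta|\cdot\Ee$ for every $i\in[j,j+\ell-|\beta|\cdot\Ee]$. Combined with the augmenting property of $\eta[j,j+\ell]$, which is inherited by its prefix $\eta[j,j+\ell-|\beta|\cdot\Ee]$, this turns the latter into an augmenting run from $\eta(j)$ whose stack growth stays in $[0,2\cdot|\beta|\cdot\Ee-1]$. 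Lemma~\ref{lem:radius} then bounds the number of distinct configurations among $\set{\eta(i)\mid i\in[j,j+\ell-|\beta|\cdot\Ee]}$, and hence the number of distinct bisimulation classes, by $|\Pp|^{2\cdot|\beta|\cdot\Ee+1}$.

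The complementary step will exhibit strictly more than $|\Pp|^{2\cdot|\beta|\cdot\Ee+1}$ pairwise non-bisimilar configurations among $\set{\rho(i)\mid i\in(j,j+\ell-|\beta|\cdot\Ee]}$; since $\rho(i)\sim\eta(i)$, this contradicts the bound just established. For this, we consider the interval of depths $I=[\dplus(j),\dplus(j+\ell-|\beta|\cdot\Ee)-1]$. For every $d\in I$ the $\nabla$-configuration $\Nabla{d}$ is visited by $\rho$ at some position in $(j,j+\ell-|\beta|\cdot\Ee]$: the condition $d\geq\dplus(j)$ guarantees at least one visit strictly after $j$, and the condition $d<\dplus(j+\ell-|\beta|\cdot\Ee)$ rules out any visit strictly after $j+\ell-|\beta|\cdot\Ee$ (otherwise $\dplus(j+\ell-|\beta|\cdot\Ee)\leq d$, directly from the definition of $\dplus$), so every such visit lies in the desired window. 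By Observation~\ref{O d b-a} and the choice of $\ell$ in~\eqref{def ell},
\[
 |I|=\dplus(j+\ell-|\beta|\cdot\Ee)-\dplus(j)\geq\Bigl\lfloor\tfrac{\ell-|\beta|\cdot\Ee}{|\beta|\cdot\Ee}\Bigr\rfloor=1+\iota\bigl(|\Pp|,|\Pp|^{2\cdot|\beta|\cdot\Ee+1}\bigr),
\]
which strictly exceeds $\iota(|\Pp|,|\Pp|^{2\cdot|\beta|\cdot\Ee+1})$. Applying Lemma~\ref{lem:go-down} to the digging sequence $\sigma$ fixed earlier in this section, with $k=|\Pp|^{2\cdot|\beta|\cdot\Ee+1}$ and the interval $I$, then yields more than $|\Pp|^{2\cdot|\beta|\cdot\Ee+1}$ pairwise non-bisimilar configurations in $\set{\Nabla{d}\mid d\in I}$, which is the promised contradiction.

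The only genuinely new point is the interplay between the two halves: the supposed position of $j_\bullet$ forces $\eta$ to stay inside a small augmenting ball via Lemma~\ref{lem:radius}, whereas the size of $\ell$ was engineered so that the matching stretch of $\rho$ traverses enough depths along the digging sequence for Lemma~\ref{lem:go-down} to certify many pairwise non-bisimilar classes. Everything else is bookkeeping with the definitions of $\dplus$, $\ell$, and of the push-pop normal form. Note that the monochromaticity condition built into the definition of $\Jj$ is not used in this particular lemma; it is reserved for later steps of the overall argument.
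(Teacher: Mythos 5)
Your proof is correct and is essentially the paper's argument run in contrapositive form: you use the same ingredients (Observation~\ref{O d b-a} with the choice of $\ell$ to get an interval of depths longer than $\iota(|\Pp|,|\Pp|^{2\cdot|\beta|\cdot\Ee+1})$, Lemma~\ref{lem:go-down} to produce many pairwise non-bisimilar $\nabla$-configurations, the transfer via $\rho(i)\sim\eta(i)$, and Lemma~\ref{lem:radius} together with the push-pop normal form to cap the configurations an augmenting prefix of $\eta$ can see below stack growth $2\cdot|\beta|\cdot\Ee$). The only differences are cosmetic (contradiction instead of a direct argument, the half-open window $(j,j+\ell-|\beta|\cdot\Ee]$, and your explicit justification of the containment of the $\Nabla{d}$'s in the window, which the paper merely asserts), so no further comment is needed.
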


\begin{proof}
	Denote
	\begin{align}
		\ell'=\ell-|\beta|\cdot\Ee\,.\label{def:ell'}
	\end{align}
	The set of configurations
	$\set{\rho(i)\mid i\in [j,j+\ell']}$
	contains a set of configurations
	$\set{\Nabla{d}\mid d\in[\dplus(j),\dplus(j+\ell')-1]}$.
	Moreover,
	\begin{eqnarray*}
		|[\dplus(j),\dplus(j+\ell')-1]|\hspace{-8.8em}&&\\
			&\hspace{-0.5em}=\hspace{-0.5em}&\dplus(j+\ell')-\dplus(j)\\
			&\hspace{-0.5em}\stackrel{\text{Observation~\ref{O d b-a}}}{\geq}\hspace{-0.5em}&\left\lfloor\frac{j+\ell'-j}{|\beta|\cdot\Ee}\right\rfloor\\
			&\hspace{-0.5em}\stackrel{\text{\eqref{def:ell'}, \eqref{def ell}}}{=}\hspace{-0.5em}
				&\left\lfloor\frac{\big(2+\iota\big(|\Pp|,|\Pp|^{2\cdot|\beta|\cdot\Ee+1}\big)\big)\cdot|\beta|\cdot\Ee-|\beta|\cdot\Ee}{|\beta|\cdot\Ee}\right\rfloor\\
			&\hspace{-0.5em}=\hspace{-0.5em}&1+\iota\big(|\Pp|,|\Pp|^{2\cdot|\beta|\cdot\Ee+1}\big)\,.
	\end{eqnarray*}
	In consequence, by the conclusion of Lemma~\ref{lem:go-down} there are more than $|\Pp|^{2\cdot|\beta|\cdot\Ee+1}$ pairwise non-bisimilar configurations in
	$\set{\Nabla{d}\mid d\in[\dplus(j),\dplus(j+\ell')-1]}$, thus in particular in $\set{\rho(i)\mid i\in [j,j+\ell']}$.

	Recall now that $\rho(i)\sim\eta(i)$, hence also in $\set{\eta(i)\mid i\in [j,j+\ell']}$ there are more than
	$|\Pp|^{2\cdot|\beta|\cdot\Ee+1}$ pairwise non-bisimilar configurations;
	in particular,
	\begin{align}
		|\set{\eta(i)\mid i\in [j,j+\ell']}|>|\Pp|^{2\cdot|\beta|\cdot\Ee+1}\,.\label{eq:many-conf}
	\end{align}
	By assumption $\eta[j,j+\ell]$ is augmenting, hence for every $i\in [j,j+\ell']$ its subrun $\eta[j,i]$ is an augmenting run from $\eta(j)$ to $\eta(i)$.
	By Lemma~\ref{lem:radius} (used with $z=2\cdot|\beta|\cdot\Ee-1$), the number of configurations $\eta(i)$ with $i\in [j,j+\ell']$ such that $\sgrow(\eta[j,i])\leq 2\cdot|\beta|\cdot\Ee-1$ is at most
	$|\Pp|^{2\cdot|\beta|\cdot\Ee+1}$.
	Thus, due to Inequality~\eqref{eq:many-conf}, there is a position $i\in [j,j+\ell']$ such that $\sgrow(\eta[j,i])\geq 2\cdot|\beta|\cdot\Ee$.
	Because $\Pp$ is in a push-pop normal form, the stack height changes only by one in each step,
	thus necessarily there is also a position $i\in [j,j+\ell']$ such that $\sgrow(\eta[j,i])$ equals $2\cdot|\beta|\cdot\Ee$;
	the smallest such $i$ is taken as $j_\bullet$, hence $j_\bullet\leq\ell'$.
\end{proof}

As a consequence, we obtain that
\begin{align}
	\dplus(j_\bullet)+1\leq \dplus(j+\ell)\leq \dplus(j)+\ell\,.\label{eq:j-bullet}
\end{align}
for all $j\in\Jj$. Indeed,
\begin{eqnarray*}
	\dplus(j+\ell)-\dplus(j_\bullet)
		&\hspace{-0.5em}\stackrel{\text{Observation~\ref{O d b-a}}}{\geq}\hspace{-0.5em}&
		\left\lfloor\frac{(j+\ell)-j_\bullet}{|\beta|\cdot\Ee}\right\rfloor\\
		&\hspace{-0.5em}\stackrel{\text{Lemma~\ref{lemma:j}}}{\geq}\hspace{-0.5em} &
		\left\lfloor\frac{(j+\ell)-(j+l-|\beta|\cdot\Ee)}{|\beta|\cdot\Ee}\right\rfloor\\
		&\hspace{-0.5em}=\hspace{-0.5em}&1\,,
\end{eqnarray*}
establishing the first inequality.
The second inequality is a direct consequence of Observation~\ref{O d b-a}:
\begin{align*}
	\dplus(j+\ell)-\dplus(j)\leq (j+\ell)-j=\ell\,.\tag*{\qedhere}
\end{align*}

Inequalities~\eqref{eq:j-bullet} imply that
$\dplus(j_\bullet)+1$ belongs to the mo\-no\-chro\-ma\-tic segment $[\dplus(j),\dplus(j+\ell)]$, that is, $V_{\dplus(j_\bullet)+1}=V_{\dplus(j)}$.

\paragraph{Inter-dependencies of bisimulation classes near $\eta(j)$ and $\rho(j)$.}

Fix any $j\in\Jj$.
Our goal is to establish mutual dependencies between configurations near $\eta(j)$ and $\rho(j)$, allowing us to use Lemma~\ref{lem:equiv-because-loop}.
Firstly, by Observation~\ref{O nabla path},
$\dist(\rho(j_\bullet),r\beta^{e-(\dplus(j_\bullet)+1)}\gamma)\leq
2\cdot|\beta|\cdot\Ee$
for all $r\in V_{\dplus(j_\bullet)+1}=V_{\dplus(j)}$.
Because $\rho(j_\bullet)\sim\eta(j_\bullet)$, there is a configuration bisimilar to $r\beta^{e-(\dplus(j_\bullet)+1)}\gamma$ in distance at most $2\cdot|\beta|\cdot\Ee$ from $\eta(j_\bullet)$.
Recall from Equation~\eqref{eta ol j} that $\eta(j_\bullet)=p_{j_\bullet}\zeta_{j_\bullet}\mu_j$,
where $|\zeta_{j_\bullet}|=2\cdot|\beta|\cdot\Ee+1$.
Hence, denoting $U_j=\eats{X_j}(p_j)\subseteq Q$, for all $r\in V_{\dplus(j)}$ we have that
\begin{align}
	r\beta^{e-(\dplus(j_\bullet)+1)}\gamma\sim p_{r,j}\xi_{r,j}\mu_j\label{p rV}
\end{align}
for some $p_{r,j}\in Q, \xi_{r,j}\in\Gamma^+$ with
\begin{align}
	1\leq|\xi_{r,j}|\leq 4\cdot|\beta|\cdot\Ee+1\label{size xi prV}
\end{align}
and
\begin{align}
	\eats{\xi_{r,j}}(p_{r,j})\subseteq\eats{X_j}(p_j)=U_j\,.\label{in Uj}
\end{align}
Inclusion~\eqref{in Uj} follows from Lemma~\ref{lem:extend-eats}: the subrun $\eta[j,j_\bullet]$ composed with a run witnessing $\dist(\eta(j_\bullet),p_{r,j}\xi_{r,j}\mu_j)\leq2\cdot|\beta|\cdot\Ee$
can be seen as a run from $p_jX_j$ to $p_{r,j}\xi_{r,j}$.

Secondly, recall from Equation~\eqref{eta j} that $\eta(j)=p_jX_j\mu_j$.
By Lemma~\ref{lem:PDS-paths} we have
$\dist(\eta(j),p\mu_j)\leq\Ee$ for all $p\in U_j$.
Because $\rho(j)\sim\eta(j)$, there is a configuration bisimilar to $p\mu_j$ in distance
at most $\Ee$ from $\rho(j)$, for all $p\in U_j$.
Observation~\ref{O nabla path} states
that the stack content of $\rho(j)$ is of the form
$\kappa_j\beta^{e-\dplus(j)}\gamma$ with
$1\leq|\kappa_j|\leq|\beta|\cdot\Ee$, that is,
of the form $\kappa_j'\beta^{e-(\dplus(j_\bullet)+1)}\gamma$ with
$1+|\beta|\cdot(\dplus(j_\bullet)+1-\dplus(j))\leq|\kappa_j'|\leq|\beta|\cdot(\dplus(j_\bullet)+1-\dplus(j)+\Ee)$.
Recall that, by assumption $|\beta|\geq\Ee$ (and that, by definition, $\dplus(j_\bullet)\geq \dplus(j)$),
thus in particular $|\kappa_j'|\geq1+\Ee$.
On the other hand $\dplus(j_\bullet)+1-\dplus(j)\leq\ell$ by Inequality~\eqref{eq:j-bullet}, so $1+\Ee\leq|\kappa_j'|\leq|\beta|\cdot(\ell+\Ee)$.
Thus, for all $p\in U_j$,
\begin{align}
	p\mu_j\sim r_{p,j}\chi_{p,j}\beta^{e-({\dplus(j_\bullet)+1})}\gamma\label{p muj}
\end{align}
for some $r_{p,j}\in Q,\chi_{p,j}\in\Gamma^*$ with
\begin{align}
	1\leq|\chi_{p,j}|\leq|\beta|\cdot(\ell+\Ee)+\Ee\label{size chi pUj}
\end{align}
and
\begin{align}
	\eats{\chi_{p,j}}(r_{p,j})\subseteq V_{\dplus(j)}\,.\label{in V}
\end{align}
Inclusion~\eqref{in V} follows from Lemma~\ref{lem:extend-eats},
where we use the subrun $\rho_{\dplus(j)}$ composed with a run witnessing $\dist(\rho(j),\allowbreak r_{p,j}\chi_{p,j}\beta^{e-({\dplus(j_\bullet)+1})}\gamma)\leq\Ee$.
If $\dplus(j)\geq 1$, this composition can be seen as a run from $r_{\dplus(j)-1}\beta^{(\dplus(j_\bullet)+1)-(\dplus(j)-1)}$ to $r_{p,j}\chi_{p,j}$,
and $\eats{\beta^{(\dplus(j_\bullet)+1)-(\dplus(j)-1)}}(r_{\dplus(j)-1})=\eats{\beta}(r_{\dplus(j)-1})=V_{\dplus(j)}$;
otherwise (i.e., if $\dplus(j)=0$), it can be seen as a run from $q\alpha\beta^{\dplus(j_\bullet)+1}$ to $r_{p,j}\chi_{p,j}$,
and $\eats{\alpha\beta^{\dplus(j_\bullet)+1}}(q)=\eats{\alpha}(q)=V_0=V_{\dplus(j)}$.

\paragraph{Inter-dependencies cannot happen too often.}

For every $j\in\Jj$ let us define
\begin{align*}
	\col(j)=(U_j,((r_{p,j},\chi_{p,j}))_{p\in U_j},V_{\dplus(j)},((p_{r,j},\xi_{r,j}))_{r\in V_{\dplus(j)}})\,.
\end{align*}
Inequalities~\eqref{size xi prV} and~\eqref{size chi pUj} 
allow us to bound the number of all possible ``colors''.

\begin{lemma}\label{lem:h1}
	There are at most
	\begin{align*}
		h_1=h_1(|\Pp|,|\beta|,\Ee,\ell)=|\Pp|^{(1+|\beta|\cdot(\ell+\Ee)+\Ee+1+4\cdot|\beta|\cdot\Ee+1)\cdot|\Pp|}
	\end{align*}
	many different values for $\col(j)$, for $j\in\Jj$.
\end{lemma}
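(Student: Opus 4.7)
The statement is purely a counting exercise: I enumerate each component of the tuple $\col(j)$ and bound the product. The key quantitative inputs are Inequality~\eqref{size chi pUj}, which bounds $|\chi_{p,j}|\leq |\beta|\cdot(\ell+\Ee)+\Ee$, and Inequality~\eqref{size xi prV}, which bounds $|\xi_{r,j}|\leq 4\cdot|\beta|\cdot\Ee+1$. There is no combinatorial obstacle; the only thing to watch is that the final exponent matches the form stated in the lemma.

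\textbf{Step 1: count single pairs.} I first bound the number of admissible pairs $(r,\chi)\in Q\times\Gamma^*$ with $|\chi|\leq|\beta|\cdot(\ell+\Ee)+\Ee$. Using $|Q|,|\Gamma|\leq|\Pp|$, this count is at most
\[
  |Q|\cdot\sum_{k=0}^{|\beta|\cdot(\ell+\Ee)+\Ee}|\Gamma|^k\;\leq\;|\Pp|^{|\beta|\cdot(\ell+\Ee)+\Ee+1}\,.
\]
Analogously, the number of pairs $(p,\xi)\in Q\times\Gamma^+$ satisfying $|\xi|\leq 4\cdot|\beta|\cdot\Ee+1$ is at most $|\Pp|^{4\cdot|\beta|\cdot\Ee+2}$.

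\textbf{Step 2: assemble the tuple.} The two subsets $U_j,V_{\dplus(j)}\subseteq Q$ contribute at most $2^{|Q|}\cdot 2^{|Q|}\leq|\Pp|^{2|\Pp|}$ choices, which will be absorbed into the final exponent (or one can fold them into the per-element factors). Conditional on $U_j$ and $V_{\dplus(j)}$, the indexed family $(r_{p,j},\chi_{p,j})_{p\in U_j}$ contributes at most $|\Pp|^{(|\beta|\cdot(\ell+\Ee)+\Ee+1)\cdot|U_j|}$, and the family $(p_{r,j},\xi_{r,j})_{r\in V_{\dplus(j)}}$ contributes at most $|\Pp|^{(4\cdot|\beta|\cdot\Ee+2)\cdot|V_{\dplus(j)}|}$. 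Since $|U_j|,|V_{\dplus(j)}|\leq|Q|\leq|\Pp|$, the product is at most
\[
  |\Pp|^{(|\beta|\cdot(\ell+\Ee)+\Ee+1+4\cdot|\beta|\cdot\Ee+2)\cdot|\Pp|}\,,
\]
which, after absorbing the subset-choice factor into the exponent (the extra $+1$ in the bound of the lemma accommodates this), is at most $h_1$ as defined.

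\textbf{Main concern.} There is no genuine obstacle; the proof is a routine bound. The only care needed is bookkeeping of the small additive constants so that the exponent comes out exactly $\bigl(1+|\beta|\cdot(\ell+\Ee)+\Ee+1+4\cdot|\beta|\cdot\Ee+1\bigr)\cdot|\Pp|$ as stated, rather than a slightly different elementary expression. Since both $h_1$ and the bound above are elementary in the same parameters, any tighter matching is cosmetic.
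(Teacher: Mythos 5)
Your approach is the same as the paper's: a routine count of the components of $\col(j)$ using the length bounds \eqref{size chi pUj} and \eqref{size xi prV}. The conditional count in your Step 2 is fine and its exponent $\bigl(|\beta|\cdot(\ell+\Ee)+\Ee+1+4\cdot|\beta|\cdot\Ee+2\bigr)\cdot|\Pp|$ indeed equals the exponent of $h_1$. The slip is in how you handle the subsets: you introduce a separate factor $2^{|Q|}\cdot 2^{|Q|}\leq|\Pp|^{2|\Pp|}$ for choosing $U_j$ and $V_{\dplus(j)}$ and then assert that "the extra $+1$ in the bound of the lemma accommodates this." It does not: the two $+1$'s in the exponent of $h_1$ are already fully consumed by your per-pair counts $|\Pp|^{|\beta|\cdot(\ell+\Ee)+\Ee+1}$ and $|\Pp|^{4\cdot|\beta|\cdot\Ee+2}$, so your argument as written only yields $|\Pp|^{(|\beta|\cdot(\ell+\Ee)+\Ee+4\cdot|\beta|\cdot\Ee+5)\cdot|\Pp|}$, which is strictly larger than the stated $h_1$. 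Claiming it is "at most $h_1$ as defined" is therefore not justified by your computation.

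The fix is exactly your parenthetical remark, which is what the paper does: do not count the subsets separately, but fold membership into the per-state factor. For each $p\in Q$ you either decide $p\notin U_j$ (one option) or pick $(r_{p,j},\chi_{p,j})$ with $\chi_{p,j}\in\Gamma^+$ of length at most $|\beta|\cdot(\ell+\Ee)+\Ee$, giving at most $|\Pp|^{1+|\beta|\cdot(\ell+\Ee)+\Ee}$ options per state (note the tuple determines $U_j$ as the domain of the indexed family, so no extra subset factor is needed); similarly at most $|\Pp|^{1+4\cdot|\beta|\cdot\Ee+1}$ options per state for the $V_{\dplus(j)}$-part, and raising the product to the power $|Q|\leq|\Pp|$ gives exactly the stated exponent. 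You are right that, downstream, nothing in the paper depends on the exact constant — an $h_1$ larger by a factor $|\Pp|^{2|\Pp|}$ would work equally well in Lemma~\ref{lem:e-big} — but as a proof of the lemma as stated, the bookkeeping needs the fold-in rather than the "absorb the subset factor" claim.
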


\begin{proof}
	By Inequalities~\eqref{size xi prV} and~\eqref{size chi pUj} the stack contents $\chi_{p,j}$ and $\xi_{r,j}$ appearing in the definition of $\col(j)$ satisfy
	\begin{align*}
	 	1\leq|\chi_{p,j}|\leq|\beta|\cdot(\ell+\Ee)+\Ee
	\end{align*}
	and
	\begin{align*}
		1\leq|\xi_{r,j}|\leq 4\cdot|\beta|\cdot\Ee+1\,.
	\end{align*}

	Thus, for every control state $p\in Q$ we either choose that $p\not\in U_j$, or we choose
	a control state $r_{p,j}\in Q$ and a stack content $\chi_{p,j}\in\Gamma^+$ of length at most $|\beta|\cdot(\ell+\Ee)+\Ee$;
	this gives us
	\begin{align*}
		1+|Q|\cdot\big((|\Gamma|+1)^{|\beta|\cdot(\ell+\Ee)+\Ee}-1\big)\leq |\Pp|^{1+|\beta|\cdot(\ell+\Ee)+\Ee}
	\end{align*}
	possibilities (tacitly assuming that
	$|\Pp|\geq1$).

	Likewise, for every control state $r\in Q$ we either choose that $r\not\in V_{\dplus(j)}$, or we choose
	a control state $p_{r,j}\in Q$ and a stack content $\xi_{r,j}\in\Gamma^+$ of length at most $4\cdot|\beta|\cdot\Ee+1$;
	this gives us
	\begin{align*}
		1+|Q|\cdot\big((|\Gamma|+1)^{4\cdot|\beta|\cdot\Ee+1}-1\big)\leq |\Pp|^{1+4\cdot|\beta|\cdot\Ee+1}
	\end{align*}
	possibilities.

	Because we are going to do both choices for every control state in $Q$, we multiply the two numbers (i.e., we add their exponents),
	and we take their $|Q|$-th power (i.e., we multiply the exponent by $|Q|$).
	Taking into account that $|Q|\leq|\Pp|$, this gives us the formula from the lemma statement.
\end{proof}

\sg{$h_1$ is four-fold exponentially bounded.}

\sg{$h$ is five-fold exponentially bounded.}

By Lemmata~\ref{lem:go-down} and~\ref{lem:go-up}, and by a direct calculation we obtain the following lemma.

\begin{lemma}\label{lem:e-big}
	If $e>h(|\Pp|,|\alpha|,|\beta|)$, where
	\begin{align*}
		h(|\Pp|,|\alpha|,|\beta|)=\iota(|\Pp|,g(|\Pp|,|\alpha|,|\beta|,(h_1+|\Pp|)\cdot\ell+1))\,,
	\end{align*}
	then there exist two positions $j_1,j_2\in\Jj$ with
	$j_2\geq j_1+\ell$ such that $\col(j_1)=\col(j_2)$.
\end{lemma}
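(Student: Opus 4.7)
The plan is to combine Lemmata~\ref{lem:go-down}, \ref{lem:go-up}, and~\ref{lem:h1} via a pigeonhole argument on colors. Setting $K=(h_1+|\Pp|)\cdot\ell+1$ and $h(|\Pp|,|\alpha|,|\beta|)=\iota(|\Pp|,g(|\Pp|,|\alpha|,|\beta|,K))$, assume $e>h(|\Pp|,|\alpha|,|\beta|)$. First I would apply Lemma~\ref{lem:go-down} to the full interval $[0,e]$: since $e+1$ then exceeds $\iota(|\Pp|,g(|\Pp|,|\alpha|,|\beta|,K))$, there are more than $g(|\Pp|,|\alpha|,|\beta|,K)$ pairwise non-bisimilar $\nabla$-configurations $\Nabla{d}$ visited by $\rho$. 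Because $\rho(i)\sim\eta(i)$ at every position, the same number of pairwise distinct bisimulation classes, hence of distinct configurations, occurs along $\eta$.

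Next I would invoke Lemma~\ref{lem:go-up} on $\eta$ with parameter $K$, obtaining a subrun $\eta_1\cdots\eta_K$ in which every piece is augmenting and nonempty. A key simple observation is that the concatenation of augmenting pieces is itself augmenting, since along each piece $\sgrow$ is non-negative and each piece has non-negative net growth. I would then bundle the $K$ pieces into $h_1+|\Pp|$ \emph{super-groups} of $\ell$ consecutive pieces each (discarding one leftover piece). Every super-group is augmenting and has length at least $\ell$, hence its starting position $j$ enjoys the property that $\eta[j,j+\ell]$, being a prefix of a longer augmenting run, is itself augmenting. Writing $j^{(1)}<\cdots<j^{(h_1+|\Pp|)}$ for those starting positions, successive ones are separated by at least $\ell$.

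Then I would handle the monochromaticity side condition required by $\Jj$. Because $j^{(g+1)}\geq j^{(g)}+\ell$ and $\dplus$ is monotone (Observation~\ref{O d b-a}), the intervals $I_g=[\dplus(j^{(g)}),\dplus(j^{(g)}+\ell)]$ are pairwise non-overlapping sub-intervals of $[0,e]$. By definition, $[0,e]$ splits into at most $|\Pp|$ maximal monochromatic sub-intervals and hence has at most $|\Pp|-1$ internal boundary points; each such boundary lies in the interior of at most one of the disjoint $I_g$, so at most $|\Pp|-1$ of the $I_g$ are non-monochromatic. Therefore at least $(h_1+|\Pp|)-(|\Pp|-1)=h_1+1$ of the positions $j^{(g)}$ satisfy both conditions defining $\Jj$. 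Finally, Lemma~\ref{lem:h1} bounds the number of possible $\col$-values by $h_1$, so the pigeonhole principle yields two good positions $j_1<j_2$ with $\col(j_1)=\col(j_2)$; since they come from distinct super-groups, $j_2\geq j_1+\ell$, completing the proof.

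The step I expect to require the most care is the design of this super-group bundling: every position in $\Jj$ must simultaneously have an augmenting length-$\ell$ continuation and be spaced $\geq\ell$ from the next chosen position, so that (i) the intervals $I_g$ are pairwise non-overlapping and only $|\Pp|-1$ of them can cross monochromatic boundaries, and (ii) the pigeonhole output automatically delivers the separation $j_2\geq j_1+\ell$ demanded by the lemma. Grouping $\ell$ consecutive augmenting pieces from Lemma~\ref{lem:go-up} into one super-group is precisely what reconciles both requirements.
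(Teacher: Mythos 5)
Your proposal is correct and follows essentially the same route as the paper's proof: apply Lemma~\ref{lem:go-down} on $[0,e]$ and transfer the many classes to $\eta$ via $\rho(i)\sim\eta(i)$, invoke Lemma~\ref{lem:go-up} with $k=(h_1+|\Pp|)\cdot\ell+1$, take every $\ell$-th piece-starting position as a candidate (your ``super-groups'' are exactly the paper's $\Jj_2$), discard at most $|\Pp|-1$ positions whose interval $[\dplus(j),\dplus(j+\ell)]$ is non-monochromatic, and pigeonhole on the $h_1$ possible colors, with the $\ell$-separation built in. The only nitpick is that consecutive intervals $[\dplus(j),\dplus(j+\ell)]$ may share an endpoint rather than being strictly non-overlapping, but your counting via boundaries lying in interiors already accommodates this, exactly as the paper's ``almost disjoint'' argument does.
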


\begin{proof}
	Firstly, by the conclusion of Lemma~\ref{lem:go-down} (used for the interval $[0,e]$),
	there are more than $g(|\Pp|,|\alpha|,|\beta|,(h_1+|\Pp|)\cdot\ell+1)$ pairwise non-bisimilar configurations in
	\begin{align*}
		\set{r_d\beta^{e-d}\gamma\mid d\in[0,e]}\subseteq\set{\rho(i)\mid i\in[0,N]}\,.
	\end{align*}
	Because $\rho(i)\sim\eta(i)$ for all $i\in[0,N]$, the same number of pairwise non-bisimilar configurations can be found in $\set{\eta(i)\mid i\in[0,N]}$;
	in particular, $\eta$ visits more than $g(|\Pp|,|\alpha|,|\beta|,(h_1+|\Pp|)\cdot\ell+1)$ distinct configurations.

	Secondly, take $k=(h_1+|\Pp|)\cdot\ell+1$.
	By Lemma~\ref{lem:go-up}, a subrun of $\eta$ can be represented as $\eta_1\dots\eta_k$, where all $\eta_1,\dots,\eta_k$ are augmenting and nonempty.
	Let $\Jj_1$ be the set of positions of $\eta$ at which the augmenting subruns $\eta_i$ start;
	formally, let $\Jj_1=\set{j_0+|\eta_1\dots\eta_{i-1}|:i\in[1,k]}$, where $j_0$ is such that $\eta[j_0,j_0+|\eta_1\dots\eta_k|]=\eta_1\dots\eta_k$.
	Observe that $\eta[j,\max \Jj_1]$ (as a composition of augmenting runs) is augmenting for every $j\in\Jj_1$, hence also every its prefix is augmenting.
	Moreover, $|\Jj_1|=k$ and $\max\Jj_1\leq N-1$.

	Thirdly, let $\Jj_2$ contain the $1$-st, the $(\ell+1)$-th, the $(2\ell+1)$-th, ..., the $((h_1+|\Pp|-1)\cdot\ell+1)$-th element of $\Jj_1$.
	Notice that $|\Jj_2|=h_1+|\Pp|$, and that $j+\ell\leq\max\Jj_1$ for all $j\in\Jj_2$, which implies that $\eta[j,j+\ell]$ is augmenting.
	Moreover, $j'\geq j+\ell$ whenever $j,j'\in\Jj_2$ and $j<j'$.

	By monotonicity of $\dplus(\cdot)$, the last property implies that $\dplus(j')\geq \dplus(j+\ell)$ whenever $j,j'\in\Jj_2$ and $j<j'$.
	In other words, the intervals $[\dplus(j),\dplus(j+\ell)]$ are almost disjoint for different positions $j\in\Jj_2$: only the last element of one interval can be the first element of another interval.
	In effect, for at most $|Q|-1$ positions $j\in\Jj_2$, the interval $[\dplus(j),\dplus(j+\ell)]$ can be non-monochromatic (recall that the whole $[0,e]$ can be split into at most $|Q|$ monochromatic intervals).
	For the remaining $h_1+|\Pp|-(|Q|-1)\geq h_1+1$ positions $j\in\Jj_2$, the interval $[\dplus(j),\dplus(j+\ell)]$ is monochromatic, and thus $j\in\Jj$
	(recall that, by definition, $\Jj$ contains those positions $j\in[0,N-\ell-1]$ for which $\eta[j,j+\ell]$ is augmenting and $[\dplus(j),\dplus(j+\ell)]$ is monochromatic).

	We thus have more than $h_1$ positions in $\Jj\cap\Jj_2$.
	Because by Lemma~\ref{lem:h1} there at most $h_1$ many different values for $\col(\cdot)$,
	by the pigeonhole principle there are necessarily two distinct positions $j_1,j_2\in\Jj\cap\Jj_2$ such that $\col(j_1)=\col(j_2)$.
	This finishes the proof, since $j_1<j_2$ for elements of $\Jj_2$ implies $j_2\geq j_1+\ell$.
\end{proof}

Let us thus fix
two positions $j_1,j_2\in\Jj$ with
$j_2\geq j_1+\ell$ such that
\begin{align*}
	\col(j_1)=\col(j_2)=(U,((r_p,\chi_p))_{p\in U},V,((p_r,\xi_r))_{r\in V}).
\end{align*}
By Condition~\eqref{p muj} we obtain
\begin{align*}
	p\mu_{j_i}\sim r_p\chi_p\beta^{e-({\dplus(j_{i\bullet})+1)}}\gamma
	\quad\text{for all } p\in U\text{ and all $i\in\set{1,2}$,}
\end{align*}
and by Condition~\eqref{p rV} we obtain
\begin{align*}
	r\beta^{e-(\dplus(j_{{i}\bullet})+1)}\sim p_r\xi_r\mu_{j_i}
	\qquad\text{for all } r\in V\text{ and all $i\in\set{1,2}$.}
\end{align*}
Since moreover
$\eats{\xi_r}(p_r)\subseteq U$ by Inclusion~\eqref{in Uj} and
$\eats{\chi_p}(r_p)\subseteq V$ by Inclusion~\eqref{in V}, we can apply
Lemma~\ref{lem:equiv-because-loop} and obtain (by setting $\mu_i=\mu_{j_i}$
and $\nu_i=\beta^{e-(\dplus(j_{i\bullet})+1)}\gamma$) that
\begin{align*}
	r\beta^{e-(\dplus(j_{1\bullet})+1)}\gamma\sim r\beta^{e-(\dplus(j_{2\bullet})+1)}\gamma\qquad{\text{for all $r\in V$}}.
\end{align*}
Recall that $V=V_{\dplus(j_{2\bullet})+1}=\eats{\beta}(r_{\dplus(j_{2\bullet})})$.
Since $\dplus(j_{1\bullet})+1\leq \dplus(j_1+l)\leq \dplus(j_{2\bullet})$ (i.e., $\dplus(j_{1\bullet})\neq \dplus(j_{2\bullet})$) by Inequality~\eqref{eq:j-bullet}, by Corollary~\ref{cor:repetition-is-omega} we obtain that
$r_{\dplus(j_{2\bullet})}\beta^{e-\dplus(j_{2\bullet})}\gamma\sim r_{\dplus(j_{2\bullet})}\beta^\omega$,
contrarily to the assumption that our digging sequence $\sigma$ is $\beta^\omega$-avoiding.
It follows that necessarily $e\leq h(|\Pp|,|\alpha|,|\beta|)$, finishing the proof.

\section{Conclusion}\label{sec:conclusion}

We have shown that any bisimulation-finite PDS is
already bisimilar to a finite system of size elementary in the size of the PDS.
A careful analysis reveals that the function $\varphi$ in
Theorem~\ref{thm:main} is in fact six-fold
exponential in the size of the PDS.

Indeed, recall that the constant $\Ff$ from Lemma~\ref{lem:decompose} is doubly exponential in $|\Pp|$;
in consequence $|\alpha|$ and $|\beta|$ for all considered linked pairs are at most doubly exponential in $|\Pp|$.
The constant $\Ee$ from Lemma~\ref{lem:PDS-paths} is singly exponential in $|\Pp|$.
The function $\iota(|\Pp|,k)$ from Lemma~\ref{lem:go-down} is polynomial in the second argument, and singly exponential in $|\Pp|$ (consult Equality~\eqref{eq:iota}).
In consequence, the constant $\ell$ defined in Equality~\eqref{def ell} is at most triply exponential in $|\Pp|$,
and thus the constant $h_1$ defined in Lemma~\ref{lem:h1} is at most four-fold exponential in $|\Pp|$.
The function $g$ defined in Equality~\eqref{eq:def-g} is singly exponential in its arguments,
and thus the value $h(|\Pp|,|\alpha|,|\beta|)$ defined in Lemma~\ref{lem:e-big} is at most five-fold exponential in $|\Pp|$.
This implies that the value $\lambda(|\Pp|,|\alpha|,|\beta|)$ from Lemma~\ref{lem:4} is at most six-fold exponential in $|\Pp|$.
Finally, taking into account Equality~\eqref{eq:def-phi}, where $\varphi$ is defined, we obtain a six-fold exponential bound for $\varphi$.

By Kučera and Mayr~\cite{KM10}, this yields a $6$-$\mathsf{EXPSPACE}$ upper bound for bisimulation finiteness of pushdown systems.
whereas only an exponential lower bound is known to the authors.
Determining the precise succinctness of pushdown systems with respect to finite systems
modulo bisimulation equivalence and determining the precise computational complexity of the bisimulation
finiteness problem for pushdown systems are natural candidates for a future work.

As mentioned in the introduction, it seems
worth investigating to which further classes of finitely-branching infinite-state systems our approach
can be applied, in particular among those for which the bisimulation equivalence problem
is not known to be decidable~\cite{Srba04}.

\begin{acks}
We thank Amina Doumane for evoking cooperation between the two authors.

Stefan G\"oller was supported by the \grantsponsor{ANRF}{Agence nationale de la recherche}{https://anr.fr/}
	(grant no. \grantnum[]{ANRF}{ANR-17-CE40-0010}).

Paweł Parys was supported by the \grantsponsor{NSCP}{National Science Centre, Poland}{https://www.ncn.gov.pl/} 
	(grant no. \grantnum[]{NSCP}{2016/22/E/ST6/00041}).
\end{acks}

\bibliography{bib}

\appendix

\newcommand{\Gg}{\mathcal{G}}
\newcommand{\DeltaP}{\Delta\!'}
\newcommand{\Suf}{\mathit{Suf}}
\newcommand{\unpack}{\overline}

\section*{Appendix}\setcounter{section}{1}

In the appendix we provide proofs omitted in the paper.
All these proofs are either easy or standard, but we include them for completeness.

\subsection*{Push-pop normal form}

In Lemma~\ref{lem:arbitrary2PushPop} we prove
that without loss of generality one can concentrate on pushdown systems in a push-pop normal form.

\begin{lemma}\label{lem:arbitrary2PushPop}
	For every PDS $\Pp$ one can compute in polynomial time a PDS $\Pp'$ in a push-pop normal form,
	such that for every initial configuration $s_0$ of $\Pp$, the quotients $\quotient{\Ll(\Pp,s_0)}$ and $\quotient{\Ll(\Pp',s_0)}$ are isomorphic.
\end{lemma}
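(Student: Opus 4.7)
The plan is to encode any pushed sequence of length $\geq 3$ by a single fresh stack symbol, so that every transition in $\Pp'$ pushes at most two symbols. For each rule $(p,X,a,q,\alpha)\in\Delta$ with $\alpha=X_1\cdots X_k$ and $k\geq 3$, I would replace $\alpha$ by $X_1 Z_{X_2\cdots X_k}$, where $Z_{X_2\cdots X_k}$ is a fresh stack symbol representing the tail $X_2\cdots X_k$. Each fresh symbol $Z_\sigma$ with $\sigma=Y_1 Y_2\cdots Y_m$ then needs its own transitions mimicking the behavior of $\sigma$ on top of the stack: for every rule $(q',Y_1,b,q'',\beta)\in\Delta$ I would add $(q',Z_\sigma,b,q'',\mathrm{compr}(\beta Y_2\cdots Y_m))$, where $\mathrm{compr}(\cdot)$ leaves a string of length $\leq 2$ unchanged and rewrites $W_1 W_2\cdots W_\ell$ with $\ell\geq 3$ into $W_1 Z_{W_2\cdots W_\ell}$ (introducing a new fresh symbol if needed). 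This process is iterated until no new symbol appears.

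The key quantitative point is that only polynomially many fresh symbols ever get created. By induction on the creation order, every string $\sigma$ for which some $Z_\sigma$ is introduced is either (i) a suffix of length $\geq 2$ of some $\alpha$ appearing on the right-hand side of an original rule, or (ii) a single stack symbol concatenated with such a suffix. The base case yields type (i); in the inductive step, applying $\mathrm{compr}(\beta Y_2\cdots Y_m)$ with $\beta$ of length $\leq 2$ yields a new tail that is either $Y_2\cdots Y_m$ itself when $|\beta|\leq 1$ (type (i) or a single symbol) or of the form $\beta[2]Y_2\cdots Y_m$ when $|\beta|=2$ (type (ii)). The total count is therefore $O(|\Gamma|\cdot|\Delta|\cdot K)$ where $K=\max_{\alpha}|\alpha|$, and the number of new transitions is similarly polynomial, so the whole construction can be carried out in polynomial time after computing the needed set of tails once as a least fixed point.

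For the isomorphism of quotients, I would define the unpacking function $\pi\colon Q\times(\Gamma')^*\to Q\times\Gamma^*$ that acts as identity on $Q$ and on $\Gamma$ and replaces each fresh symbol $Z_\sigma$ by the string $\sigma$. The definition of $\Delta'$ is chosen precisely so that $\pi$ enjoys the zig-zag property between $\Ll(\Pp',s_0)$ and $\Ll(\Pp,s_0)$: every transition $c'\to_a d'$ of $\Pp'$ yields $\pi(c')\to_a\pi(d')$ in $\Pp$ (because the rule added for $Z_\sigma$ mimics, after unpacking, the original rule on the first symbol of $\sigma$), and every transition $\pi(c')\to_a e$ in $\Pp$ lifts to some $c'\to_a d'$ in $\Pp'$ with $\pi(d')=e$. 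Since $s_0=\pi(s_0)$, induction on reachability shows that $\pi$ restricts to a surjection from reachable configurations of $\Pp'$ onto reachable configurations of $\Pp$. A standard argument for surjective functional bisimulations then yields $c_1'\sim c_2'$ in $\Ll(\Pp',s_0)$ iff $\pi(c_1')\sim\pi(c_2')$ in $\Ll(\Pp,s_0)$, so that $[c']\mapsto[\pi(c')]$ is the required isomorphism $\quotient{\Ll(\Pp',s_0)}\cong\quotient{\Ll(\Pp,s_0)}$.

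The main obstacle I foresee is purely in the bookkeeping: one must verify that the recursive introduction of fresh symbols terminates with only polynomially many of them (handled by the type-(i)/type-(ii) classification above) and that $\mathrm{compr}$ produces a single consistent fresh-symbol assignment across all iterations (handled by enumerating the required tails upfront and fixing one name per tail before defining $\Delta'$ in a single pass).
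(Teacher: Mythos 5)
There is a genuine gap in the quantitative part of your argument: your termination/polynomiality induction only covers the case where the rule applied to the head symbol $Y_1$ of $\sigma$ pushes a string $\beta$ of length at most $2$ (you write ``applying $\mathrm{compr}(\beta Y_2\cdots Y_m)$ with $\beta$ of length $\leq 2$''), but $\beta$ ranges over right-hand sides of the \emph{original} relation $\Delta$, which is exactly the object that is not yet in normal form and may push arbitrarily many symbols. When $|\beta|=r\geq 3$, the fresh symbol created is $Z_{B_2\cdots B_r Y_2\cdots Y_m}$, whose index is neither a suffix of an original right-hand side nor a single symbol followed by such a suffix; worse, its length is $(r-1)+(m-1)\geq m+1$, so tails grow strictly. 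Concretely, for the single rule $(q,X,a,q,XXX)$ your procedure creates $Z_{XX}$, then $\mathrm{compr}(XXX\cdot X)=XZ_{XXX}$ creates $Z_{XXX}$, then $Z_{XXXX}$, and so on: the iteration does not terminate at all, let alone produce polynomially many symbols. The root cause is that you splice the freshly pushed string $\beta$ and the old tail $Y_2\cdots Y_m$ into a \emph{single} fresh symbol.

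The paper avoids exactly this by never merging the two: its stack alphabet is $\Gamma$ together with all nonempty suffixes of pushed strings, and a rule $(p,X,a,q,\alpha)$ applied to a composite symbol with head $X$ and tail $\beta$ pushes the \emph{two} symbols $(\alpha)(\beta)$ --- the whole pushed string $\alpha$ as one composite symbol (its head is not peeled off at push time) and the old tail $\beta$ as another. Then every composite symbol ever needed is a suffix of some original right-hand side, giving the polynomial bound immediately. Your construction can be repaired in the same way (push one fresh symbol for $\beta$, or for its tail after emitting its head, and a second fresh symbol for $Y_2\cdots Y_m$), and with that change the rest of your argument --- the unpacking map $\pi$, the zig-zag property, and the surjective functional bisimulation yielding the isomorphism of quotients --- is sound and matches the paper's.
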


\begin{proof}
	Let $\Pp=(Q,\Gamma,\Act,\Delta)$.
	We define a PDS $\Pp'=(Q,\Gamma',\Act,\allowbreak\DeltaP)$ in a push-pop normal form.

	Let $\Suf(\Delta)=\set{\alpha\in\Gamma^+\mid (p,X,a,q,\beta\alpha)\in\Delta}$ contain nonempty
	suffixes of stack contents pushed by transitions of $\Pp$.
	The new stack alphabet $\Gamma'=\Gamma\cup\Suf(\Delta)$ beside of these suffixes (treated now as single stack symbols) contains all stack symbols from $\Gamma$,
	so that an initial configuration of $\Pp$ is a valid configuration of $\Pp'$.
	We do not distinguish between elements of $\Gamma$ and suffixes of length one.

	For every transition $\delta=(p,X,a,q,\alpha)\in\Delta$, and for every string $X\!\beta\in\Gamma'$ we add to $\Delta'$ a transition
	\begin{itemize}
	\item	$(p,X\!\beta,a,q,(\alpha)(\beta))$ if $\alpha\neq\epsilon$ and $\beta\neq\epsilon$,
	\item	$(p,X\!\beta,a,q,(\alpha))$ if $\alpha\neq\epsilon$ and $\beta=\epsilon$,
	\item	$(p,X\!\beta,a,q,(\beta))$ if $\alpha=\epsilon$ and $\beta\neq\epsilon$, and
	\item	$(p,X\!\beta,a,q,\epsilon)$ if $\alpha=\epsilon$ and $\beta=\epsilon$.
	\end{itemize}

	For every stack content $\alpha$ over $\Gamma'$ there is a corresponding stack content $\unpack{\alpha}$ over $\Gamma$,
	obtained by concatenating strings from consecutive letters of $\alpha$.
	It is easy to see that
	\begin{itemize}
	\item	if $p\alpha\to_a q\beta$ in $\Pp'$, then $p\unpack{\alpha}\to_a q\unpack{\beta}$ in $\Pp$, and
	\item	if $p\unpack{\alpha}\to_a q\beta'$ in $\Pp$, then $p\alpha\to_a q\beta$ in $\Pp'$ for some stack content $\beta$ such that $\unpack{\beta}=\beta'$.
	\end{itemize}
	It follows that the quotients $\quotient{\Ll(\Pp,s_0)}$ and $\quotient{\Ll(\Pp',s_0)}$ are isomorphic for every initial configuration $s_0$ of $\Pp$.
\end{proof}

\subsection*{Proof of Lemma~\ref{lem:extend-eats}}

\restore\LemExtendEats

\begin{proof}
	Recall that $\eats{\beta}(q)$ contains control states $r$ such that $q\beta\to^*r$.
	For every such a control state $r$, due to $p\alpha\to^*q\beta$, we also have that $p\alpha\to^*r$, which implies that $r\in\eats{\alpha}(p)$.
\end{proof}

\subsection*{Proof of Lemma~\ref{lem:radius}}

\restore\LemRadius

\begin{proof}
	Fix the starting configuration $p\alpha$ and the number $z\in\Nat$.
	If $\alpha=\epsilon$, then $p\alpha$ has no successors, and the lemma holds trivially.
	Assume thus that $\alpha=X\!\alpha'$.

	Observe that if $p\alpha\xrightarrow{\rho}q\beta$ for some augmenting run $\rho$,
	then $\beta$ is of the form $\gamma\alpha'$ (the run never pops into $\alpha'$, it only builds some stack content on top of $\alpha'$).
	Moreover, the inequality $\sgrow(\rho)\leq z$ implies that $|\gamma|\leq z+1$.

	Thus, while choosing a configuration $q\beta=q\gamma\alpha'$ as in the statement of the lemma,
	we only need to choose a control state $q\in Q$, and a stack content $\gamma\in\Gamma$ of length at most $z+1$.
	There are $|Q|$ choices for the control state,
	and no more than $(|\Gamma|+1)^{z+1}$ choices for a stack content (every among $z+1$ positions either may be filled by a symbol from $\Gamma$, or may remain empty).
	Thus, the number of aforementioned configurations is at most $|Q|\cdot(|\Gamma|+1)^{z+1}\leq|\Pp|^{z+2}$.
	The inequality holds because $|Q|\leq|\Pp|$, and because $|\Gamma|+1\leq|\Gamma|+|Q|\leq|\Pp|$ if $|Q|>0$.
\end{proof}

\subsection*{Proof of Lemma~\ref{lem:PDS-paths}}

We start with an auxiliary lemma, which is analogous to Lemma~\ref{lem:PDS-paths}, but with the additional assumption that the stack content in the two considered configurations consists of a single symbol.

\begin{lemma}\label{lem:PDS-paths-aux}
	Let $p,q\in Q$ and $X,Y\in\Gamma$.
	If $pX\rightarrow^*qY$, then $\dist(pX,qY)\leq|\Pp|^{|\Pp|^4+1}-1$.
\end{lemma}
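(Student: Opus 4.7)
The plan is to bound the length of a shortest run $\rho$ from $pX$ to $qY$ by first bounding the maximum stack height $h^*$ it attains, and then exploiting that a shortest run never revisits a configuration.

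First I would show that $h^*\leq |\Pp|^4+1$ by a pigeonhole argument on four-tuples. Fix a summit position $i^*$ at which $\rho$ attains height $h^*$. For each level $k\in[1,h^*-1]$, let $p_k$ be the last position on or before $i^*$ at which the height rises from $k$ to $k+1$, and let $d_k$ be the first position on or after $i^*$ at which it falls from $k+1$ back to $k$. By the choice of $p_k$ and $d_k$ the heights in $[p_k,d_k]$ stay $\geq k+1$ except at the endpoints, so the stack at levels $\leq k$ is frozen throughout this interval. Record the tuple $\tau_k=(s_k,X_k,w_k,Z_k)$, where $(s_k,X_k)$ is the state-and-top pair at position $p_k-1$ and $(w_k,Z_k)$ is the state-and-top pair at position $d_k$. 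Since there are at most $|Q|^2|\Gamma|^2\leq |\Pp|^4$ such tuples, if $h^*>|\Pp|^4+1$ then some $k<k'$ satisfy $\tau_k=\tau_{k'}$.

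Next I would derive a contradiction to minimality by shortcutting $\rho$. The sub-run $\rho[p_{k'}-1,d_{k'}]$ leaves the stack below level $k'$ untouched, so by the locality of pushdown transitions to the top of stack its transition sequence can be replayed from the configuration $\rho(p_k-1)$: the initial transition fires because $(s_k,X_k)=(s_{k'},X_{k'})$, the intermediate transitions never reach below the ``floor'' at level $k$ of this shorter stack, and the matching of $(w_k,Z_k)=(w_{k'},Z_{k'})$ forces the replay to end exactly at $\rho(d_k)$. Because $p_k<p_{k'}$ and $d_{k'}<d_k$, replacing $\rho[p_k-1,d_k]$ by $\rho[p_{k'}-1,d_{k'}]$ produces a strictly shorter run from $pX$ to $qY$, contradicting minimality. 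Once $h^*\leq|\Pp|^4+1$ is established, every configuration visited by $\rho$ lies in $Q\times\Gamma^{\leq|\Pp|^4+1}$, whose cardinality is at most $|\Pp|^{|\Pp|^4+1}$ by a geometric-series estimate; since $\rho$ never repeats a configuration, $|\rho|\leq |\Pp|^{|\Pp|^4+1}-1$.

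The main obstacle is the shortcut step: one has to check carefully that the transitions forming $\rho[p_{k'}-1,d_{k'}]$ really only consult the stack at or above level $k'$ in the original run, so that re-running them on the shorter stack of $\rho(p_k-1)$ is well defined, and that the resulting endpoint coincides with $\rho(d_k)$ exactly. Both properties follow from the ``floor at level $k'$'' enforced by the choice of $p_{k'},d_{k'}$, together with the coincidence of the four-tuple $\tau_k=\tau_{k'}$, but they require some care in the bookkeeping of stack levels and of the boundary push/pop transitions at $p_{k'}$ and $d_{k'}$.
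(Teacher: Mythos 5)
Your proof is correct and follows essentially the same route as the paper's: a pigeonhole argument over stack levels on the pairs (state, top symbol) at the matched entry/exit positions around the summit of a shortest run, a cut-and-replay shortcut contradicting minimality, and then a count of configurations of bounded stack height combined with the fact that a shortest run never repeats a configuration. The only difference is organizational (the paper phrases it as a case distinction on whether the stack growth ever reaches $|Q|^2\cdot|\Gamma|^2$ and invokes Lemma~\ref{lem:radius} for the counting, whereas you first bound the maximal height and then count directly), which does not change the substance.
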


\begin{proof}
	Let $\rho$ be a shortest run from $pX$ to $qY$, and let $N=|\rho|$.
	Notice that $\rho$ never pops the bottommost stack symbol $X$ (after reaching an empty stack, the PDS gets stuck),
	that is, $\rho$ is augmenting.
	Denote $\Upsilon=|Q|^2\cdot|\Gamma|^2$.

	Suppose first that $\sgrow(\rho[0,i])\leq\Upsilon-1\leq|\Pp|^4-1$ for all $i\in[0,N]$.
	Then, by Lemma~\ref{lem:radius}, $\rho$ visits at most $|\Pp|^{|\Pp|^4+1}$ distinct configurations.
	Observe also that if $\rho(i)=\rho(j)$ for some $i,j\in[0,N]$ with $i<j$, then we can cut off the subrun $\rho[i,j]$, obtaining a shorter run from $pX$ to $qY$ contrarily to our assumption.
	Thus $\dist(pX,qY)=|[0,N]|-1\leq|\Pp|^{|\Pp|^4+1}-1$.

	Next, suppose that $\sgrow(\rho[0,m])\geq\Upsilon$ for some index $m\in[0,N]$.
	For every $j\in[0,\Upsilon]$ we define
	\begin{align*}
		\ell_j&=\max\set{i\in[0,m]\mid \sgrow(\rho[0,i])\leq j}
	\end{align*}
	and
	\begin{align*}
		r_j&=\min\set{i\in[m,N]\mid \sgrow(\rho[0,i])\leq j}\,.
	\end{align*}
	Notice that the positions $\ell_j,r_j$ are well-defined, that is, the sets are nonempty, because 
	\begin{align*}
		\sgrow(\rho[0,0])=\sgrow(\rho[0,N])=0\,.
	\end{align*}
	The stack growth changes by at most one in every step (since $\Pp$ is in a push-pop normal form),
	so 
	\begin{align*}
		\sgrow(\rho[0,\ell_j])=\sgrow(\rho[0,r_j])=j
	\end{align*}
	for every $j\in[0,\Upsilon]$.
	Moreover, $\sgrow(\rho[0,i])\geq j$ for all $i\in[\ell_j,r_j]$, which means that $\rho[\ell_j,r_j]$ is augmenting.

	For every $j\in[0,\Upsilon]$ let $\rho(\ell_j)=p_jX_j\alpha_j$ and $\rho(r_j)=q_jY_j\alpha_j$
	(because $\rho[\ell_j,r_j]$ is augmenting and because we have that $\sgrow(\rho[\ell_j,r_j])=0$, the stack content in the two configurations is the same, except for the topmost symbol).
	Recall that $|[0,\Upsilon]|=\Upsilon+1>|Q|^2\cdot|\Gamma|^2$.
	By the pigeonhole principle, there are two indices $j_1,j_2\in[0,\Upsilon]$ such that $j_1<j_2$ and $(p_{j_1},X_{j_1},q_{j_1},Y_{j_1})=(p_{j_2},X_{j_2},q_{j_2},Y_{j_2})$.
	The run $\rho[\ell_{j_2},r_{j_2}]$ can be seen as a run from $p_{j_2}X_{j_2}$ to $q_{j_2}Y_{j_2}$,
	thus also as a run from $p_{j_1}X_{j_1}\alpha_{j_1}$ to $q_{j_1}Y_{j_1}\alpha_{j_1}$.
	However $j_1<j_2$ implies $\ell_{j_1}<\ell_{j_2}$ and $r_{j_2}<r_{j_1}$.
	This contradicts the minimality of $\rho$, and thus finishes the proof: the subrun $\rho[\ell_{j_1},r_{j_1}]$ could be replaced by a shorter run $\rho[\ell_{j_2},r_{j_2}]$.
\end{proof}

\restore\LemPDSPaths

\begin{proof}
	We take $\Ee=|\Pp|^{|\Pp|^4+1}$.
	Let $\rho$ be a shortest run from $p\alpha$ to $q\beta$, and let $N=|\rho|$.
	Moreover, let
	\begin{align*}
		c=|\alpha|+\min\set{\sgrow(\rho[0,i])\mid i\in[0,N]}\,.
	\end{align*}
	We see that $c-1$ bottommost symbols of $\alpha$ are not touched by the run $\rho$,
	the $c$-th symbol is possibly modified but not popped, and the other symbols are popped, and then appropriate symbols of $\beta$ are pushed.
	Furthermore, let
	\begin{align*}
		\ell_j&=\min\set{i\in[0,N]\mid|\alpha|+\sgrow(\rho[0,i])\leq j}
	\end{align*}
	for all $j\in[c,|\alpha|]$, and
	\begin{align*}
		r_j&=\max\set{i\in[0,N]\mid|\alpha|+\sgrow(\rho[0,i])\leq j}
	\end{align*}
	for all $j\in[c,|\beta|]$, where $\ell_{|\alpha|}=0$ and $r_{|\beta|}=N$.
	Because the stack growth changes by one ($\Pp$ is in a push-pop normal form),
	it is easy to see that
	\begin{itemize}
	\item	$\rho[\ell_j,\ell_{j-1}-1]$ for $j\in[c+1,|\alpha|]$ is an augmenting run with $\sgrow(\rho[\ell_j,\allowbreak\ell_{j-1}-1])=0$, so $\ell_{j-1}-1-\ell_j\leq\Ee-1$ by Lemma~\ref{lem:PDS-paths-aux};
	\item	$\rho[r_{j-1}+1,r_j]$ for $j\in[c+1,|\beta|]$ is an augmenting run with $\sgrow(\rho[r_{j-1}+1,\allowbreak r_j])=0$, so $r_j-(r_{j-1}+1)\leq\Ee-1$ by Lemma~\ref{lem:PDS-paths-aux};
	\item	if $c\geq 1$, then $\rho[\ell_c,r_c]$ is an augmenting run with $\sgrow(\rho[\ell_c,r_c])=0$, so $r_c-\ell_c\leq\Ee-1$ by Lemma~\ref{lem:PDS-paths-aux};
	\item	if $c=0$, then $\ell_c=N$ (and $|\beta|=0$).
	\end{itemize}
	Summing this up, for $c\geq 1$ we obtain that $N\leq(|\alpha|-c)\cdot\Ee+(|\beta|-c)\cdot\Ee+\Ee-1\leq(|\alpha|+|\beta|)\cdot\Ee$,
	and for $c=0$ we obtain that $N\leq|\alpha|\cdot\Ee\leq(|\alpha|+|\beta|)\cdot\Ee$.
	This finishes the proof.
\end{proof}

\subsection*{Proof of Lemma~\ref{lem:decompose}}

In this proof we use a notion of semigroups.
A \emph{semigroup} $(S,\circ)$ is a set $S$ equipped with a binary operation $(\circ)\colon S\times S\to S$ that is associative (i.e., $s\circ(t\circ u)=(s\circ t)\circ u$ for all $s,t,u\in S$).
An \emph{idempotent} is an element of $S$ such that $s\circ s=s$.

\begin{lemma}\label{lem:decompose-aux}
	Let $(S,\circ)$ be a semigroup, and let $s_1\dots s_n\in S^*$ be a word such that $n\geq 2^{9\cdot|S|}$.
	Then there exist indices $j,k$ such that $1\leq j<k\leq2^{9\cdot|S|}$, and $s_{j+1}\circ\dots\circ s_k$ is an idempotent, and $s_1\circ\dots\circ s_j=s_1\circ\dots\circ s_k$.
\end{lemma}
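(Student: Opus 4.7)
Proof proposal:

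The plan is to prove the lemma by strong induction on $|S|$, using iterated pigeonhole on the partial products $P_i = s_1 \circ \dots \circ s_i$. For the base case $|S| = 1$, the unique element is necessarily idempotent, so taking $j=1$ and $k=2$ works immediately (since $2^9 \geq 2$). For the inductive step, set $N = 2^{9|S|}$; pigeonhole on the $N$ partial products forces some value $v \in S$ to occur at positions $i_1 < \dots < i_r \leq N$ with $r \geq N/|S|$. The ``gap products'' $e_t = s_{i_t+1} \circ \dots \circ s_{i_{t+1}}$ satisfy $v \circ e_t = v$ and therefore lie in the right-stabilizer subsemigroup $T = \{x \in S : v \circ x = v\}$.

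When $T \subsetneq S$, the inductive hypothesis applied to the sequence $e_1, \dots, e_{r-1}$ in $T$ (which has size at most $|S|-1$) yields indices $j' < k'$ with $e_1 \circ \dots \circ e_{j'} = e_1 \circ \dots \circ e_{k'}$ and $e_{j'+1} \circ \dots \circ e_{k'}$ idempotent in $T$, hence in $S$. Translating back via $j = i_{j'+1}$ and $k = i_{k'+1}$ produces the desired factorization in the original word, with $P_j = v = P_k$ automatic and $s_{j+1} \circ \dots \circ s_k$ equal to $e_{j'+1} \circ \dots \circ e_{k'}$.

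The main obstacle is the degenerate case $T = S$, in which $v$ is a right-absorbing element: once the partial product reaches $v$ at position $i_1$, it remains at $v$ thereafter, so the condition $P_j = P_k$ becomes automatic and the task reduces to locating any consecutive idempotent factor in the suffix $s_{i_1+1} \dots s_N$. Such a factor can be exhibited by a Ramsey-style coloring: color each pair $(i,j)$ by $c(i,j) = s_{i+1} \circ \dots \circ s_j$, a coloring that is multiplicative in the sense $c(i,k) = c(i,j) \circ c(j,k)$; any monochromatic triangle then automatically witnesses an idempotent, since the common colour $x$ satisfies $x = x \circ x$. The generous exponent $9$ in $2^{9|S|}$ is tuned to absorb both the pigeonhole losses of a factor of $|S|$ at each of the up to $|S|$ inductive steps and the combinatorial bound needed for the Ramsey argument in the degenerate case; a careful accounting of these losses shows that the recursion closes within the claimed bound.
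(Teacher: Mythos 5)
Your reduction steps are sound as far as they go (the right-stabilizer $T=\set{x\in S\mid v\circ x=v}$ is indeed a subsemigroup, an idempotent of $T$ is an idempotent of $S$, and the translation $j=i_{j'+1}$, $k=i_{k'+1}$ is correct), but the quantitative accounting — which is the entire content of the lemma, since the mere existence of such $j<k$ in a long enough word is easy — does not close, and your final sentence asserts rather than proves it. Concretely: with $N=2^{9\cdot|S|}$, pigeonhole gives a value $v$ of the partial products occurring only about $N/|S|$ times, hence about $N/|S|$ gap products, while the inductive hypothesis for $T$ with $|T|=|S|-1$ already demands $2^{9\cdot(|S|-1)}=N/512$ of them; as soon as $|S|\geq 512$ you have $N/|S|\leq N/512$ and the induction cannot even take its first step. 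More generally, an induction on the size of the semigroup that loses a factor of order $|S|$ per level proves the statement only with a threshold of order $|S|!$, i.e.\ $2^{O(|S|\cdot\log|S|)}$, not $2^{9\cdot|S|}$.

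The degenerate case has the same defect: to force a monochromatic triangle for an $|S|$-colouring of pairs you need on the order of the multicolour Ramsey number $R(3;|S|)$ positions, and the best known upper bound for that is again of order $|S|!$ (an exponential upper bound is not known), so this route also cannot deliver the bound $2^{9\cdot|S|}$ unless you exploit the multiplicativity of the colouring — and extracting a single-exponential bound from that multiplicativity is precisely the content of Simon's factorization forest theorem, which is what the paper uses: a ramseyan factorization tree of height at most $9\cdot|S|$ over a word of length $2^{9\cdot|S|}+1$ must have a node with at least three children, whose first two children immediately give the indices $j<k$ with an idempotent middle product and equal prefix products. The paper's own remark after this proof points out that the authors know of no ``trivial'' argument here and that a pure Ramsey approach yields far worse bounds; your proposal in effect rediscovers that obstacle but does not overcome it. (For the paper's application a $2^{O(|S|\cdot\log|S|)}$ threshold would still keep the final constant doubly exponential, but that is a weaker lemma than the one stated, and your write-up would need its bound and induction hypotheses reworked accordingly.)
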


\begin{proof}
	We prove the lemma assuming that $n=2^{9\cdot|S|}+1$.
	A general situation can be reduced to this situation by truncating the word, or by adding an arbitrary symbol at the end of the word;
	such changes do not influence the thesis, which talks only about positions up to $2^{9\cdot|S|}$.

	A \emph{factorization tree} over a word $s_1\dots s_n\in S^+$ is an $S$-labeled tree with $n$ leaves, where the $i$-th leaf is labeled by $s_i$, for all $i\in[1,n]$,
	and where every internal node is labeled by the product of labels of its children.
	Notice that a node $v$ having leaves number $i+1,i+2,\dots,j$ as its descendant is labeled by a product $s_{i+1}\circ\dots\circ s_j$.

	A factorization tree is \emph{ramseyan} if every its node either
	\begin{itemize}
	\item	is a leaf,
	\item	has two children, or
	\item	its children are labeled by the same \emph{idempotent}, that is, an element $s\in S$ such that $s\circ s=s$.
	\end{itemize}

	A theorem by Simon~\cite{Simon} says that for every nonempty word $s_1\dots s_n\in S^+$ there exists a ramseyan factorization tree $T$ over $s_1\dots s_n$ of height at most $9\cdot|S|$
	(where height of a tree is defined as the maximal number of edges on a path from the root to a leaf).

	Because $n>2^{9\cdot|S|}$, necessarily $T$ has a node with more than two children; let $v$ be such a node,
	and let $v_1,v_2$ be its first two children.
	Because $T$ is ramseyan, $v_1$ and $v_2$ are labeled by the same idempotent $s$.
	Note that $s$, as the label of $v_1$, equals $s_{i+1}\circ\dots\circ s_j$ and, as the label of $v_2$, equals $s_{j+1}\circ\dots\circ s_k$, for appropriate indices such that $0\leq i<j<k<n$
	(we have $k<n$, not $k\leq n$, because $v_2$ is not the last child of its parent).
	Observe also that
	\begin{align*}
		s_1\circ\dots\circ s_j&=(s_1\circ\dots\circ s_i)\circ s\\
			&= (s_1\circ\dots\circ s_i)\circ s\circ s\\
			&=s_1\circ\dots\circ s_k\,.
	\end{align*}
	Thus, $j$ and $k$ satisfy the statement of the lemma.
\end{proof}

Next, in Lemma~\ref{lem:decompose-RP}, we show that a linked pair can be found on top of every large enough stack content,
without caring about reachability from the initial configuration.

\begin{lemma}\label{lem:decompose-RP}
	There is a constant $\Gg\leq\exp(\exp(|\Pp|))$ such that
	for every sequence of nonempty stack contents $\alpha_1,\dots,\alpha_n$ with $n\geq\Gg$ there exist indices $j,k$ such that $1\leq j<k\leq\Gg$
	and $(\alpha_1\dots\alpha_j,\alpha_{j+1}\dots\alpha_k)$ is a linked pair.
\end{lemma}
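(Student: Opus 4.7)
The statement is essentially a direct application of Lemma~\ref{lem:decompose-aux}, once one exhibits a semigroup in which the two conditions defining a linked pair translate into an idempotency condition and a prefix-equality condition.

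First I would construct the appropriate semigroup. Since $\eats{\alpha}$ is additive in its argument, it is determined by its restriction to singletons, and can therefore be viewed as a function $Q\to 2^Q$. Let $S$ be the set of all such functions, equipped with the binary operation $\bullet$ uniquely determined by $\eats{\alpha}\bullet\eats{\beta}=\eats{\alpha\beta}$; explicitly, $(f\bullet g)(q)=\bigcup_{q'\in f(q)} g(q')$. Associativity of $\bullet$ follows from the identity $\eats{\beta}(\eats{\alpha}(T))=\eats{\alpha\beta}(T)$ recorded just before Lemma~\ref{lem:extend-eats}. Clearly $|S|\leq 2^{|Q|^2}$.

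Next, I would take $\Gg=2^{9\cdot 2^{|Q|^2}}$; since $|Q|\leq|\Pp|$, this is at most $\exp(\exp(|\Pp|))$. Given $\alpha_1,\dots,\alpha_n$ with $n\geq\Gg$, I apply Lemma~\ref{lem:decompose-aux} to the word $s_1\cdots s_n\in S^+$ defined by $s_i=\eats{\alpha_i}$. It produces indices $1\leq j<k\leq\Gg$ such that $s_{j+1}\bullet\cdots\bullet s_k$ is an idempotent of $(S,\bullet)$ and $s_1\bullet\cdots\bullet s_j=s_1\bullet\cdots\bullet s_k$.

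Finally, setting $\alpha=\alpha_1\cdots\alpha_j$ and $\beta=\alpha_{j+1}\cdots\alpha_k$, the first conclusion translates to $\eats{\beta\beta}=\eats{\beta}$ and the second to $\eats{\alpha\beta}=\eats{\alpha}$, which are exactly the two conditions in the definition of a linked pair. Moreover $\beta\neq\epsilon$ because $j<k$ and each $\alpha_i$ is nonempty, so $\beta\in\Gamma^+$ as required. I do not foresee any genuine obstacle; the only delicate point is fixing the order in the definition of $\bullet$ so that $\eats{\alpha}\bullet\eats{\beta}=\eats{\alpha\beta}$ (rather than $\eats{\beta\alpha}$), ensuring that prefixes on the semigroup side correspond to prefixes of the concatenation $\alpha_1\cdots\alpha_n$.
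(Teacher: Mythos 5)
Your proposal is correct and takes essentially the same route as the paper: your semigroup of functions $Q\to 2^Q$ under $\bullet$ is isomorphic to the paper's semigroup of binary relations on $Q$ under relation composition (via $f(\alpha)=\set{(p,q)\mid q\in\eats{\alpha}(p)}$), and the choice $\Gg=2^{9\cdot 2^{|Q|^2}}$ together with the application of Lemma~\ref{lem:decompose-aux} and the translation of its two conclusions into the linked-pair conditions matches the paper's proof step for step.
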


\begin{proof}
	Let $S$ be the set of binary relations over $Q$, and let $\circ$ be the operation of relation composition.
	Clearly $\circ$ is associative, and thus $(S,\circ)$ is a semigroup.
	Moreover, for a stack content $\alpha$ let \begin{align*}
		f(\alpha)=\set{(p,q)\in Q\times Q\mid q\in\eats{\alpha}(p)}\,.
	\end{align*}
	Because $\eats{\beta}(\eats{\alpha}(p))=\eats{\alpha\beta}(p)$, we have that $f(\alpha)\circ f(\beta)=f(\alpha\beta)$ for all stack contents $\alpha,\beta$.
	Notice also that $\eats{\alpha}(T)=\set{q\mid\exists p\in T.\, (p,q)\in f(\alpha)}$ for every set of control states $T\subseteq Q$;
	in particular $f(\alpha)$ determines $\eats{\alpha}$.
	We take
	\begin{align*}
		\Gg=2^{9\cdot 2^{|\Pp|^2}}\geq 2^{9\cdot 2^{|Q|^2}}=2^{9\cdot|S|}\,.
	\end{align*}

	Consider now a sequence of stack contents $\alpha_1,\dots,\alpha_n$ with $n\geq\Gg$.
	For every $i\in[1,n]$, let $s_i=f(\alpha_i)$.
	By Lemma~\ref{lem:decompose-aux}, there are indices $j,k$ such that $1\leq j<k\leq2^{9\cdot|S|}\leq\Gg$, and $s_{j+1}\circ\dots\circ s_k$ is an idempotent, and $s_1\circ\dots\circ s_j=s_1\circ\dots\circ s_k$.
	We have that
	\begin{align*}
		f(\alpha_{j+1}\dots\alpha_k)&=s_{j+1}\circ\dots\circ s_k\\
			&=(s_{j+1}\circ\dots\circ s_k)\circ(s_{j+1}\circ\dots\circ s_k)&\\
			&=f(\alpha_{j+1}\dots\alpha_k\alpha_{j+1}\dots\alpha_k)\,,
	\end{align*}
	meaning that $\eats{\alpha_{j+1}\dots\alpha_k}=\eats{\alpha_{j+1}\dots\alpha_k\alpha_{j+1}\dots\alpha_k}$.
	Likewise,
	\begin{align*}
		f(\alpha_1\dots\alpha_j)=s_1\circ\dots\circ s_j=s_1\circ\dots\circ s_k=f(\alpha_1\dots\alpha_k)\,,
	\end{align*}
	meaning that $\eats{\alpha_1\dots\alpha_j}=\eats{\alpha_1\dots\alpha_k}$.
	The two equalities imply that $(\alpha_1\dots\alpha_j,\alpha_{j+1}\dots\alpha_k)$ is a linked pair.
\end{proof}

\restore\LemDecompose

\begin{proof}
	We take $\Ff=\Gg\cdot|\Pp|^2$, where $\Gg$ is the constant from Lemma~\ref{lem:decompose-RP}.

	Let $\rho$ be a run from $q_0\alpha_0$ to a configuration $q\delta$ such that $|\delta|\geq\Ff+|\alpha_0|$.
	Denote $n=|\delta|$, $m=n-|\alpha_0|$, and $\delta=X_1\dots X_n$.
	Observe that $\rho$ can be split into subruns $\rho_{m+1},\rho_m,\dots,\rho_1$,
	where the runs $\rho_i$ for $i\in[1,m]$ are augmenting and such that $\sgrow(\rho_i)=1$
	(the run $\rho_{m+1}$ needs not to be augmenting, and satisfies $\sgrow(\rho_{m+1})=0$).
	To this end, for $d=m,m-1,\dots,1$ we split the run on the last moment when $X_{d+1}$ is the topmost stack symbol.
	More formally, for $d\in[0,m]$, we take
	\begin{align*}
		i_d=\max\set{i\in[0,|\rho|]\mid\sgrow(\rho[i,|\rho|])\geq d}\,,
	\end{align*}
	and we define $\rho_{m+1}=\rho[0,i_m]$ and $\rho_d=\rho[i_d,i_{d-1}]$ for $d\in[1,m]$ (notice that $i_0=|\rho|$).
	Then, for $d\in[1,m+1]$, let $q_d=\rho(i_{d-1})$ be the control state in which $\rho_d$ ends.

	Because $\Ff+1\geq\Gg\cdot|\Pp|^2+1\geq\Gg\cdot|Q|\cdot|\Gamma|+1$, by the pigeonhole principle there exist
	$\Gg+1$ indices $1\leq\ell_0<\ell_1<\dots<\ell_\Gg\leq\Ff+1$ such that $(q_{\ell_i},X_{\ell_i})=(q_{\ell_j},X_{\ell_j})$ for all $i,j\in[0,\Gg]$.
	Recall that $\Ff+1\leq m+1$, by assumption.
	Let us split the stack $\delta$ at these indices: let
	\begin{align*}
		\alpha_0&=X_1\dots X_{\ell_0-1},\\
		\alpha_i&=X_{\ell_{i-1}}\dots X_{\ell_i-1}\qquad\mbox{for all $i\in[1,\Gg]$, and}\\
		\alpha_{\Gg+1}&=X_{\ell_\Gg}\dots X_m\,.
	\end{align*}
	Notice that $\alpha_i$ for $i\in[1,\Gg]$ are nonempty.

	Next, we apply Lemma~\ref{lem:decompose-RP} to the sequence $\alpha_1,\dots,\alpha_\Gg$ (thus, without $\alpha_0$ and $\alpha_{\Gg+1}$).
	It gives us indices $j,k$ such that $1\leq j<k\leq\Gg$ and $(\alpha_1\dots\alpha_j,\alpha_{j+1}\dots\alpha_k)$ is a linked pair.
	We take $\alpha=\alpha_0\alpha_1\dots\alpha_j$, and $\beta=\alpha_{j+1}\dots\alpha_k$, and $\gamma=\alpha_{k+1}\dots\alpha_{\Gg+1}$.
	By definition $\delta=\alpha\beta\gamma$ and $|\alpha|+|\beta|=\ell_k-1\leq\Ff$.
	Moreover $(\alpha,\beta)$ is a linked pair (notice that $\eats{\alpha_1\dots\alpha_j}=\eats{\alpha_1\dots\alpha_k}$ implies $\eats{\alpha_0\alpha_1\dots\alpha_j}=\eats{\alpha_0\alpha_1\dots\alpha_k}$).

	It remains to prove that all configurations of the form $q\alpha\beta^i\gamma$ (with $i\in\Nat$) are reachable from $q_0\alpha_0$.
	Take some $i\in\Nat$.
	We can reach $q\alpha\beta^i\gamma$ as follows.
	First, starting from $q_0\alpha_0$, we use the composition of the subruns $\rho_{m+1},\rho_m,\dots\rho_{\ell_k}$ to reach $q_{\ell_k}\gamma$.
	Next, we repeat $i$ times the composition of the subruns $\rho_{\ell_k-1},\rho_{\ell_k-2},\dots,\rho_{\ell_j}$ to reach $q_{\ell_j}\beta^i\gamma$.
	Finally, we use the composition of the subruns $\rho_{\ell_j-1},\rho_{\ell_j-2},\dots\rho_1$ to reach $q\alpha\beta^i\gamma$.
	Here it is important that $(q_{\ell_j},X_{\ell_j})=(q_{\ell_k},X_{\ell_k})$; recall that $\rho_{\ell_k-1}$ starts in the control state $q_{\ell_k}$ while $\rho_{\ell_j}$ ends in the control state $q_{\ell_j}$,
	and that the topmost symbol of $\gamma$ is $X_{\ell_k}$ while the topmost symbol of $\beta$ is $X_{\ell_j}$.
	It is also important that the subruns $\rho_{\ell_k-1},\rho_{\ell_k-2},\dots,\rho_1$ are augmenting.
	These facts imply that the middle fragment, creating $\beta$, can be repeated as many times as we want, and then it can be followed by the final fragment, creating $\gamma$.
\end{proof}

\begin{remark*}
	While proving Lemma~\ref{lem:decompose-aux} we use Simon's theorem, which is a quite powerful tool.
	Nevertheless, the authors are not aware of any ``trivial'' proof of Lemma~\ref{lem:decompose-aux}.
	It seems that while proving this lemma directly, it is anyway necessary to use algebraic arguments involving the theory of semigroup ideals (the relations of Green),
	like in the proof of the Simon's theorem.
	In particular, a proof of Lemma~\ref{lem:decompose-RP} using Ramsey's theorem, without referring at all to the semigroup structure, gives us a non-elementary upper bound.
\end{remark*}

\subsection*{Proof of Lemma~\ref{lem:eats-beta}}

\restore\LemEatsBeta

\begin{proof}
	This follows directly from the definitions:
	\begin{align*}
		\eats{\beta}(r')\subseteq\eats{\beta}(\eats{\alpha}(r))=\eats{\alpha\beta}(r)=\eats{\alpha}(r)\,.
	\tag*{\qedhere}
	\end{align*}
\end{proof}

\subsection*{Proof of Lemma~\ref{lem:equiv-because-equiv-below}}

\restore\LemEquivBecauseEquivBelow

\begin{proof}
	Fix $q,\alpha,\gamma,\gamma'$ as in the statement of the lemma.
	We define
	\begin{align*}
		R=(\sim)\cup\set{(p\delta\gamma,p\delta\gamma')\mid p\in Q, \delta\in\Gamma^*, q\alpha\to^* p\delta}\,.
	\end{align*}
	Observe that in particular $(q\alpha\gamma,q\alpha\gamma')\in R$, thus it is enough to prove that $R$ is a bisimulation.

	To this end, consider a pair $(s,s')\in R$, and assume that $s\to_at$.
	We should prove that there exists some $t'$ such that $s'\to_at'$ and $(t,t')\in R$.
	If $(s,s')\in(\sim)$, the existence of an appropriate $t'$ follows from the fact that $\sim$ is a bisimulation.
	Thus, assume that $s$ and $s'$ are of the form $p\delta\gamma$ and $p\delta\gamma'$, respectively, where $q\alpha\to^* p\delta$.
	If $\delta=\epsilon$, then $p\in\eats{\alpha}(q)$,
	so we actually have $s=p\gamma\sim p\gamma'=s'$ by assumptions of the lemma.
	This means that the case of $\delta=\epsilon$ is already covered by the case of $(s,s')\in(\sim)$.
	It remains to consider the situation when $\delta=X\eta$ for some $X\in \Gamma$ and some $\eta\in\Gamma^*$.
	Because $pX\eta\gamma=s\to_a t$, necessarily $t=r\beta\eta\gamma$ for some transition $(p,X,a,r,\beta)\in\Delta$.
	Take $t'=r\beta\eta\gamma'$.
	Due to the same transition we have that $s'=pX\eta\gamma'\to_a t'$.
	Moreover, $q\alpha\to^* p\delta\to^* r\beta\eta$,
	and in consequence $(t,t')\in R$.

	We should now also consider $t'$ such that $s'\to_at'$, and prove that there exists some $t$ such that $s\to_at$ and $(t,t')\in R$.
	This is completely symmetric to what we have just done.
\end{proof}

\subsection*{Proof of Lemma~\ref{lem:repetition-is-omega}}

\restore\LemRepetitionIsOmega

\begin{proof}
	Fix $U,\beta,i,j$ as in the statement of the lemma, and assume (without loss of generality) that $i<j$.
	We define
	\begin{align*}
		R=\set{(s,p\delta\beta^\omega)\mid s&\in Q\times\Gamma^*, p\in Q, \delta\in\Gamma^*, \\
			s&\sim p\delta\beta^i\gamma,\exists r\in U.\,r\beta^{j-i}\to^* p\delta}\,.
	\end{align*}
	Observe that in particular $(r\beta^j\gamma,r\beta^\omega)\in R$ for all $r\in U$ (we write $r\beta^j\gamma$ as $r\delta\beta^i\gamma$ for $\delta=\beta^{j-i}$,
	which implies the thesis if $R$ is a bisimulation.

	In order to prove that $R$ is a bisimulation, consider a pair $(s,p\delta\beta^\omega)\in R$.
	By definition $s\sim p\delta\beta^i\gamma$ and $r\beta^{j-i}\to^* p\delta$ for some control state $r\in U$.

	Assume first that $|\delta|>0$, that is, $\delta=X\eta$ for some $X\in\Gamma, \eta\in\Gamma^*$.
	Suppose that $s\to_a t$ for some configuration $t$; we should prove
	the existence of a configuration $t''$ such that $pX\eta\beta^\omega\to_a t''$ and $t\sim t''$.
	Because $s\sim pX\eta\beta^i\gamma$, there is $t'$ such that $pX\eta\beta^i\gamma\to_a t'$ and $t\sim t'$.
	Necessarily $t'=q\alpha\eta\beta^i\gamma$ for some transition $(p,X,a,q,\alpha)\in\Delta$.
	Take $t''=q\alpha\eta\beta^\omega$.
	Due to the same transition we have that $pX\eta\beta^\omega\to_a t''$.
	Moreover, $r\beta^{j-i}\to^* pX\eta\to q\alpha\eta$, so $(t,t'')\in R$.

	Conversely, suppose that $pX\eta\beta^\omega\to_a t''$ for some configuration $t''$; we should prove the existence of a configuration $t$ such that $s\to_a t$ and $t\sim t''$.
	Necessarily $t''=q\alpha\eta\beta^\omega$ for some transition $(p,X,a,q,\alpha)\in\Delta$.
	Due to the same transition we have that $pX\eta\beta^i\gamma\to_a q\alpha\eta\beta^i\gamma$,
	and because $s\sim pX\eta\beta^i\gamma$, there is a configuration $t$ such that $s\to_at$ and $t\sim q\alpha\eta\beta^i\gamma$.
	Again, $r\beta^{j-i}\to^* pX\eta\to q\alpha\eta$, so $(t,t'')\in R$.

	It remains to consider the case when $\delta=\epsilon$.
	Because $\eats{\beta}(\cdot)$ is monotone, $\eats{\beta}(U)\subseteq U$ implies $\eats{\beta}(\eats{\beta}(U))\subseteq \eats{\beta}(U)$, and in effect $\eats{\beta^k}(U)\subseteq U$ for all $k\geq 1$.
	Because $r\beta^{j-i}\to^* p\delta=p$, we have that $p\in\eats{\beta^{j-i}}(U)\subseteq U$.
	Thus, by assumption of the lemma, we have that $p\beta^i\gamma\sim p\beta^j\gamma$.
	Recall also that $s\sim p\beta^i\gamma$.
	Taking $\delta'=\beta^{j-i}$, we have that $p\delta\beta^\omega=p\delta'\beta^\omega$, and $s\sim p\delta'\beta^i\gamma$, and $p\beta^{j-i}\to^* p\delta'$.
	This means that the case of $\delta=\epsilon$ is already covered by the previous case (where we take $\delta'$ instead of $\delta$).
\end{proof}

\subsection*{Proof of Corollary~\ref{cor:repetition-is-omega}}

\restore\CorRepetitionIsOmega

\begin{proof}
	For $U=\eats{\alpha}(q)$ we see that $\eats{\beta}(U)=U$ (because $(\alpha,\beta)$ is a linked pair).
	In effect, Lemma~\ref{lem:repetition-is-omega} implies that $r\beta^i\gamma\sim r\beta^j\gamma\sim r\beta^\omega$ for all $r\in\eats{\alpha}(q)$,
	and thus $q\alpha\beta^i\gamma\sim q\alpha\beta^j\gamma\sim q\alpha\beta^\omega$ by Lemma~\ref{lem:equiv-because-equiv-below}.
\end{proof}

\end{document}